\newtheorem{thm}{Theorem}
\newenvironment{thmbis}[1]
  {%
   \addtocounter{thm}{-1}%
   \begin{thm}}
  {\end{thm}}
\newtheorem{dfn}{Definition}
\newtheorem{lemma}{Lemma}
\newtheorem{ex}{Example}
\newtheorem{asm}{Assumption}
\newtheorem{prop}{Proposition}
\newcommand{\bbR}{\mathbb{R}}
\newcommand{\pdfs}{p}
\newcommand{\PSdist}{\mathbf{Q}}
\newcommand{\exppost}{p}
\newcommand{\exppay}{\text{EE}}
\newcommand{\nonpay}{NN^{2}}
\newcommand{\expminussemi}{F}
\newcommand{\semiA}{A^1}
\newcommand{\semiB}{B^1}
\newcommand{\alev}{almost everywhere }
\newcommand{\newtruthful}{strictly posterior truthful}
\newcommand{\newtruthfulness}{strict posterior truthfulness}
\newcommand{\Newtruthfulness}{Strict posterior truthfulness}
\newcommand{\expreporterA}{E_A}
\newcommand{\expreporterB}{E_B}
\newcommand{\semireporterA}{N^1_A}
\newcommand{\semireporterB}{N^1_B}
\newcommand{\nonreporterA}{N^2_A}
\newcommand{\nonreporterB}{N^2_B}
\newcommand{\expset}{\Theta^{0}}
\newcommand{\exprepset}{\mathcal{N}^{0}}
\newcommand{\expsig}{s^{0*}}
\newcommand{\expSig}{S^{0}}
\newcommand{\exprep}{s^{0}}
\newcommand{\expele}{s^{0}}
\newcommand{\expelek}[2]{\theta^{0}_{#1, #2}}
\newcommand{\noninum}{n}
\newcommand{\nonisig}[1]{s^{#1*}}
\newcommand{\noniSig}[1]{S^{#1}}
\newcommand{\nonirep}[1]{s^{#1}}
\newcommand{\nonirepset}[1]{\mathcal{N}^{#1}}
\newcommand{\nonipost}{p}
\newcommand{\nonidist}[1]{f^{#1}}
\newcommand{\noniele}[1]{\theta^{#1}}
\newcommand{\noniset}[1]{\Theta^{#1}}
\newcommand{\nonisetnum}[1]{m^{#1}}
\newcommand{\noniA}[1]{A^{#1}}
\newcommand{\noniB}[1]{B^{#1}}
\newcommand{\PSI}[1]{PS^{#1}}
\newcommand{\SSIdist}[1]{\mathbf{Q}^{#1}}
\newcommand{\ssidist}[1]{\mathbf{q}^{#1}}
\newcommand{\ssidisti}[1]{{q}^{#1}}
\newcommand{\nonipay}[1]{\text{NN}^{#1}}
\title{Truthful Information Elicitation from Hybrid Crowds}
\author{
    Qishen Han \thanks{Equal contributions.}\ \ \footnotemark[2] 
    \And
    Sikai Ruan \footnotemark[1]\ \ \footnotemark[2]  
    \And
    Yuqing Kong  \footnotemark[2] 
    \And 
    Ao Liu \footnotemark[3]
    \And
    Farhad Mohsin  \footnotemark[3]
    \And
    Lirong Xia \footnotemark[3]
\\ \ \\
\footnotemark[2]\ \  Department of Computer Science, Peking University, Beijing, China\\
\footnotemark[3]\ \  Department of Computer Science, RPI, Troy, NY, USA\\
Email: \texttt{\{hnick2017, sikairuan, yuqing.kong\}@pku.edu.cn, }\\
\texttt{\{liua6,mohsif\}@rpi.edu, xial@cs.rpi.edu}
}
\begin{document}

\maketitle
 
\begin{abstract}
    Suppose a decision maker wants to predict weather tomorrow by eliciting and aggregating information from crowd. How can the decision maker incentivize the crowds to report their information truthfully? Many truthful peer prediction mechanisms have been proposed for {\em homogeneous} agents, whose types are drawn from the same distribution.  However, in many situations, the population is a {\em hybrid} crowd of different types of agents with different forms of information, and the decision maker has neither the identity of any individual nor the proportion of each types of agents in the crowd. Ignoring the heterogeneity among the agent may lead to inefficient of biased information, which would in turn lead to suboptimal decisions.

In this paper, we propose the first framework for information elicitation from hybrid crowds, and two mechanisms to motivate agents to report their information truthfully. The first mechanism combines two mechanisms via linear transformations and the second is based on mutual information. With two mechanisms, the decision maker can collect high quality information from hybrid crowds, and learns the expertise of agents.  %

\end{abstract}

\section{Introduction}
Information elicitation is a classical, significant, and ubiquitous problem in our society. A decision maker wants to predict the probability of a random event by eliciting information about its likelihood from agents, who receive payments based on their answers and sometimes also on the outcome of the event.  For example, in the classical weather forecasting problem, tomorrow's weather is modeled by a random variable $Y\in\{0 \text{ (rainy)},1 \text{ (sunny)}\}$, and the decision maker pays agents to report their probabilities for $Y=1$. The payment an agent receives depends on his/her answers.

There is a large literature on information elicitation in statistics, economics, operations research, and computer science. Most previous work focused  on two settings: elicitation {\em with} or {\em without} verification. In the elicitation setting with verification, agents' payments are allowed to depend on the outcome of the event. Then, {\em proper scoring rules}~\citep{bickel2007some} were proposed to incentivize agents to reporting truthfully,  so that their expected payment is maximized according to their believes. 

This paper focuses on elicitation without verification, a.k.a.~{\em peer prediction}, where agents' payments must be calculated before the event happens and are often based on comparisons with other agents' reports. Many previous work assumed a {\em homogeneous crowd}, i.e., agents' types are drawn i.i.d.~from a distribution, to design truth mechanisms~\citep{miller2005eliciting, shnayder2016informed}. %

However, in practice, agents often come from a {\em hybrid crowd}, which consists of different types of agents, and neither the agents' type nor the population of any type is known.  This happens particularly in crowdsourcing, where the expertise of agents can vary significantly depending on the experiments and it is often hard to estimate the competence level of an agent when the ground truth information is unavailable~\citep{DellaPenna:2012,8372965}.

\begin{ex}[\bf Information elicitation from a hybrid crowd]\label{ex:motivation}
In the weather forecasting problem, agents with most expertise (referred to as {\em experts}) may  receive a noisy {\em continuous} signal about $Y$, for example the probability for $Y=1$. Agents with less expertise (referred to as {\em non-experts} in this paper) may receive a noisy {\em discrete} signal, i.e.~whether it is more likely that 
$Y=1$ than $Y=0$, but not the values of their likelihood.  Suppose the decision maker knows neither the type of any agent, nor the population of each type of agents. How should the decision maker incentivize experts and non-experts to report their information truthfully?
\end{ex}

Motivated by Example~\ref{ex:motivation}, we ask the following question in this paper.

{\bf  \hfill How can we design   truthful mechanisms to elicit information from hybrid crowds? \hfill}

Despite a large body of literature in information elicitation from homogeneous crowds, we argue that mechanisms designed for homogeneous crowds do not work well for hybrid crowds. To see this, let us examine  two natural  ideas and briefly discuss why they are suboptimal.

The first natural and often-applied idea is to simply ignore the heterogeneity  and apply a mechanism  designed for homogeneous crowd. We see two problems with this idea, even when all agents are sincere. First, the mechanism may become hard to use for some agents, and may not be truthful anymore.  
For example, suppose  a truthful mechanism for continuous signals is used, e.g.~{the Common Ground Mechanism}~\citep{kong2018water}, in the context of Example~\ref{ex:motivation}. This works well for the experts, but even sincere non-experts would find it hard to use, because  it is  unclear how can they easily ``translate'' their discrete signals to continuous reports. Second, information might be lost, which leads to inefficient predictions.  For example, suppose a truthful mechanism for discrete signals is used, e.g.~the {\em Shifted Peer Prediction Mechanism (SPPM)}~\citep{kong2019information}. Then, the experts are forced to translate their continuous signals to discrete reports, and information is lost during the transition. When the agents are strategic, it is unclear how truthfulness should be defined, especially for non-experts. 

The second natural idea is to let the agents choose their favorite type of signals (i.e.~continuous or discrete) to report. This works well for truthful agents and can lead to more accurate predictions as shown in our MTurk experiments on 2020 US presidential election in Section~\ref{appsec:mturk}. However, strategic agents may still have incentive to misreport their preferences, and the analysis is more challenging than that for homogeneous crowds, because an agent can now choose to report a signal of any type. For example, a non-expert might be motivated to report a continuous signal for a higher payoff, and an expert might be motivated to report a  discrete signal for higher payoff and for better privacy. Technically, this cannot be handled well by modeling it as a homogeneous crowd, where the type space is the union of all agents' type space, because the population of different types of agents is unknown.

\subsection{Our Contributions}
\noindent{\bf The conceptual contribution of this paper} is a framework for information elicitation from hybrid crowds, where different types of agents receive different types of signals, and the population of any type of agent is unknown. Our setting is similar to the second idea discussed above, except that we will design mechanisms so that agents are incentivized to truthfully report their signals (and the types of signals).  See Appendix~\ref{appsec:mturk} for illustrative MTurk experiments on hybrid crowds. We also  propose the {\emph{\newtruthfulness}} criterion (Definition~\ref{dfn:truthfulness}), which states that reporting truthful signals is a Bayesian Nash Equilibrium (BNE), and the incentives are  strict unless a misreport leads to the same posterior distribution. This means that any  misreport would not   lead to a strictly higher expected payoff compared to the truthful report, and when the expected payoffs are the same, the misreport contains the same amount of information in  the posterior distribution.

\noindent{\bf The technical contributions of this paper} are two \newtruthful{}  mechanisms for information elicitation from hybrid crowds. The first mechanisms combines peer prediction mechanisms for different types of agents via linear transformations (Section~\ref{sec:comb}), and the second is  based on mutual information (Section~\ref{sec:mutual}).  While the main contribution of this paper is theoretical, numerical verifications of the proposed mechanisms on synthetic data can be found in Appendix~\ref{sec:exp}. %

\subsection{\bf Related Work and Discussions}  
\label{sec:related-work}
{\bf Information elicitation without verification.} There is a large body of literature in truthful information elicitation {without verification}, a.k.a.~{peer prediction}, in which the outcome of the event is not observed for calculating agents' payments. This is in sharp contrast to information elicitation {with verification}, in particular the literature in proper scoring rules~\citep{bickel2007some}.
Since the seminal work of \citet{miller2005eliciting}, a series of works have focused on designing truthful peer prediction mechanisms. For example, \citet{prelec2004bayesian, witkowski2012peer} relaxed the common prior assumption by asking extra questions about beliefs. \citet{radanovic2014incentives} proposed a mechanism for continuous private signals.  \citet{shnayder2016informed} proposed a new criterion of {\em informed truthfulness} in multi-task peer prediction mechanism, where truthful Bayesian Nash Equilibrium provides the highest payment. \citet{kong2018water} revealed a natural connection between learning problem and peer prediction, and designed truthful mechanisms to elicit forecasts. %
These works focus  on  homogeneous crowds and do not explicitly handle the heterogeneity of agents as we do in this paper. \citet{agarwal2020peer} focuses on peer prediction with agents of heterogeneous signal distributions, and proposes a truthful mechanisms by clustering agents based on their report behavior. However, their work still restricts in a single signal space while our work assumes multiple signal spaces.

{\bf \Newtruthfulness{}.}   Clearly, \newtruthfulness{} is stronger than standard truthfulness and weaker than strict truthfulness. We argue that a \newtruthful{} mechanism is not significantly weaker than a strictly truthful mechanism for information elicitation. Notice that a \newtruthful{} mechanism is strictly truthful when  agents' reports are converted to posterior distributions. Therefore, if the posterior distributions of agents' reports contain all information for aggregation and decision-making, such as in maximum a posteriori (MAP) estimators, then agents have strict incentive to report their truthful information contained in a signal that may or may not be the same as the type of signal she received.

{\bf Crowdsourcing.}   Our work is also related to the literature in crowdsourcing on estimating agents' competence and eliciting agents' effort, where various statistical algorithms are designed~\citep{Whitehill2009WhoseVS,10.5555/2540128.2540496,10.2307/2346806,10.1007/978-3-642-20161-5_17,BABA20142678,DUAN20145723, chen2014eliciting, witkowski2013dwelling}. \citet{10.5555/2540128.2540496} proposed a mechanism where workers report a self-confidence level together with the answer. \citet{witkowski2013dwelling} designed a mechanism to incentivize agents with high competence to participate and low competence to leave. \citet{cai2015optimum} proposed an optimum mechanism to incentivize effort, collecting high-quality information with low cost.  Practically, crowdsourcing platforms such as Amazon Mechanical Turk and CrowdFlower provide reputation  systems to assess the quality of the crowd worker. These platforms do not incentivize workers to provide truthful information. Moreover, they ask all  agents to report the same type of signals, and therefore fail to distinguish agents of different signal types. Our work provides theoretical foundations to  incentivize  workers from hybrid crowds  to provide truthful information via appropriate user interfaces corresponding to the types of signals they receive. 

{\bf Other applications of truthful information elicitation.}   
Truthful information elicitation has a wide range of applications in computer science and economics. For example, information elicitation has been extensively considered in machine learning recently:   \citet{liu2020peer} introduced a {\em peer loss function } based on peer prediction approach. Learning with peer loss function can learn from noisy labels without knowing the noise rates. \citet{liu2020online} use peer prediction techniques on a special case in online learning lacking direct feedback on loss. \citet{hu2018inference} considered collecting high quality labels via truthful information elicitation, and proposed inference aided reinforcement mechanism to incentivize agents. 
Another important application of truthful information elicitation is {\em prediction markets}~\citep{Hanson03:Combinatorial}, which collect and aggregate agents' believes about an event by allowing them to trade in contracts whose payments depend on the outcome of the event. Large literature has focused on eliciting high quality information via prediction markets~\citep{frongillo2018bounded, anunrojwong2019computing, kong2018optimizing,abernethy2013efficient, frongillo2012interpreting, frongillo2015convergence,waggoner2015market}. In particular,  \citet{abernethy2013efficient} proposed a framework for designing prediction markets via convex optimization. \citet{kong2018optimizing} studied the optimal strategy of agents in a simplified prediction market model. \citet{anunrojwong2019computing} extended the result into some more complicated models. \citet{waggoner2015market} proposed a prediction market that can protect the differential privacy of participants, and \citet{frongillo2018bounded} bounded the financial cost for the private market.

\section{Preliminaries}
\label{sec:prelim}
{

For any positive integer $n$, let $[n]=\{1,2,\cdots ,n\}$. %
We focus on the binary predicting problem in this paper, where  $Y\in\{0,1\}$ denotes the binary random variable that represents the ground truth. Let $P=\Pr[Y=1]$ denote the prior probability for $Y$ to be $1$. Each agent $j$ receives a private signal $\theta_j$ from the {\em signal space} $\Theta$, and chooses a report from the {\em report space} $\Omega$. Assuming a joint distribution over all agents' signals and the ground truth, we let $S_j$ denote the random variable representing the private signal agent $j$ receives. 
A {\em peer prediction mechanism} ${\cal M}: \Omega^n \to \bbR ^n$ then maps all $n$ agents' reports, denoted by $\vec\omega\in\Omega^n$, to the payments they receive. Often, ${\cal M}$ is a randomized mechanism, whose range becomes distributions over payment vectors. 

For example, the {\em Common Ground Mechanism (CGM)}~\citep{kong2018water} is a peer prediction mechanism for two agents, Alice and Bob.  Suppose the prior distribution (i.e., $P$) is common knowledge. Alice receives a private signal $\theta_A$ from a finite signal space $\Theta_A$, and Bob receives a private signal $\theta_B$ from a finite signal space $\Theta_B$. The report space is $\Omega = [0,1]$, and Alice and Bob are asked to report their posterior distributions on $Y$  conditioning on their private signals. For simplicity, let $p_A^*=\Pr[Y=1|S_A=\theta_A]$ and $p_B^*=\Pr[Y=1|S_B=\theta_B]$ denote the truthful reports for Alice and Bob respectively, and let $ {p}_A$ and $ {p}_B$ denote the actual reports by Alice and Bob. The payments to both Alice and Bob under CGM are the same, and are calculated as follows:

$\hfill R_\text{CGM}({p}_A,{p}_B)= \log\left(\frac{{p}_A\cdot {p}_B}{P}+\frac{(1-{p}_A)(1-{p}_B)}{1-P}\right)\hfill$

As another example, the {\em Shifted Peer Prediction Mechanism (SPPM)}~\citep{kong2019information} is a peer prediction mechanism for discrete signal space $\Theta$, and $\Omega = \Theta$. SPPM leverages  {\em proper scoring rules} that was designed for information elicitation with verification.
A {\em scoring rule}, denote by $PS$, is a mapping $PS:\Omega\times\Delta_{\Omega}\mapsto\mathbb{R}$, where $\Delta_{\Omega}$ is the set of all probability distributions over $\Omega$. The report space is $\Omega$, that is, an agent is asked to report  a probability distribution $p\in \Delta_{\Omega}$. Suppose the event turns out to be $\omega\in \Omega$, then the agent receives payment $PS(\omega, p)$.   A scoring rule is \textit{proper} if any agent is incentivized to report her truthful belief, that is, for any $p^*\in \Delta_{\Omega}$, we have
 
$\hfill p^* \in\arg\max_{p\in \Delta_{\Omega}}{\mathbb E}_{\omega\in p^*}(PS(\omega, p))\hfill$

A scoring rule is {\em strictly proper} if $p^*$ is the unique maximizer in the equation above. For example, the commonly used {\em log scoring rule} is strictly proper. The {log scoring rule} uses the payment function  $Log(\omega, p)=\log(p(\omega))$.

We are now ready to define SPPM that works for discrete signal space $\Omega$. Let $PS$ denote a  proper scoring rule. For any $r\le n$,  let $\PSdist^r$ denote the prior distribution of $r$'s private signal. Furthermore, for any agent $r\le n$ and her report $s_r$, and for any $j\le n$, let $\PSdist^j(s_r)$ denote agent $r$'s posterior prediction of $j$'s private signal conditioning on $r$'s private signal being $s_r$. For every agent $r$, the SPPM first choose another agent $j$ \emph{uniformly at random (u.a.r.)}, and then pay agent $r$'s according to the following formula, where $s_r$ denotes agent $r$'s  report. 
$$R_\text{SPPM}\left(s_r, s_j\right)=PS\left(s_j,\PSdist^j(s_r)\right)-PS\left(s_j, \PSdist^j\right)$$ 
That is, agent $r$ is paid by the  contribution (according to $PS$) of  her signal contributes to predicting (the randomly chosen) agent $j$'s signal. Notice that while the SPPM payment can be negative, its expectation is non-negative if agent $r$ reports truthfully.

A peer prediction setting naturally leads to a Bayesian game where  agents' type spaces is $\Theta^n$, their report space is $\Omega^n$, their strategy space is $\Omega^{\Theta}$, and for any $\vec\omega\in \Omega^n$, agents' utilities are ${\cal M}(\vec\omega)$.  A peer prediction mechanism where $\Omega = \Theta$ is said to be {\em truthful} (respectively, {\em strictly truthful}), if truth reporting is a {\em Bayesian Nash Equilibrium (BNE)}  (respectively, {\em strict BNE}). Formally, $\cal M$ is truthful if for every $\vec s =(s_1,\ldots,s_j)\in \Theta^n$, every $j\le n$, and every $s_j'\in \Omega$, we have
$${\mathbb E}_{\vec s_{-j}|s_j}[{\cal M}(\vec s)]_j \ge {\mathbb E}_{\vec s_{-j}|s_j}[{\cal M}(\vec s_{-j},s_j')]_j,$$
where $\vec s_{-j} = (s_1,\ldots,s_{j-1},s_{j+1},\ldots,s_n)$ and $[{\cal M}(\vec s)]_j$ is agent $j$'s payment in ${\cal M}(\vec s)$. $\cal M$ is strictly truthful, if the inequality is always strict. Both CGM and SPPM are strictly truthful.

}

\section{Our Model}
\label{sec:model}
{
\noindent\textbf{Ground Truth and Hybrid Crowd.} In this paper we focus on binary ground truth, denoted by $Y\in\{0,1\}$.  The prior distribution over $Y$ is common knowledge, and we let $P=\Pr[Y=1]$. 
In our model, there are $\noninum+1$ types of agents: the {\em experts}, and {\em group $l$ non-experts} ($l=1,2,\cdots, \noninum)$. 

\noindent\textbf{Private Signals.} The signal space is $\Theta = \expset \cup \noniset{1}\cup\cdots\cup \noniset{n}$. Different types of agents receive different types of signals defined as follows. Within each type, each agent's signal is i.i.d.~generated conditioning on the ground truth. The process is illustrated in Figure~\ref{fig:signals} and formally defined as follows.

\begin{figure}[htp]
\centering
\includegraphics[width=\linewidth]{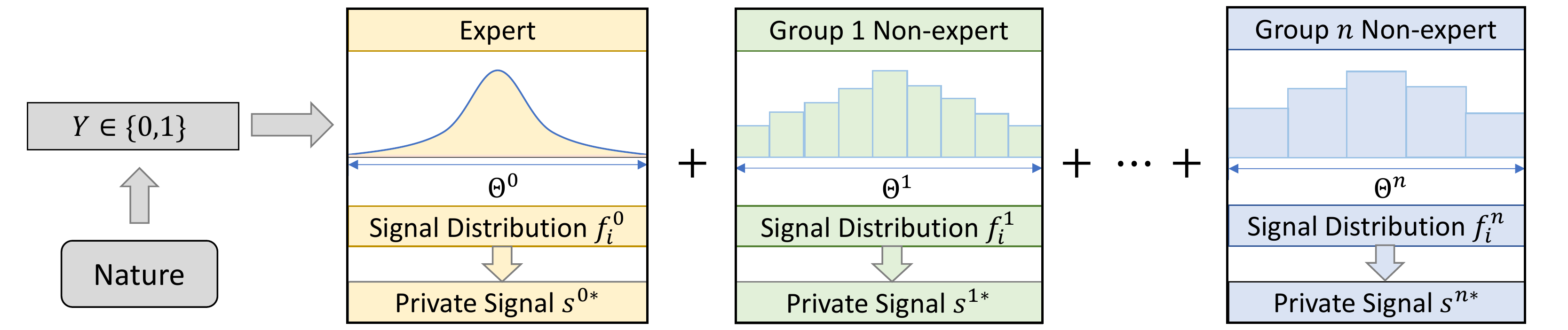}
\caption{Illustration of our model.}
\label{fig:signals}
\end{figure}

\noindent $\bullet$ {\bf Experts} receive cardinal signals, i.e., $\expset$ is an interval in $\bbR$. We let $\expsig\in\expset$ denote the private signal received by an expert that is represented by a random variable  $\expSig$. Let $f_0^0$ and $f_1^0$  denote the PDFs of $\expSig$ given $Y=0$ and $1$,  respectively. 

 \noindent $\bullet$ {\bf  Group $l$ Non-experts} $\left(l\in[\noninum]\right)$ receive ordinal signals. Let $\noniset{l}=\{\noniele{l}_1,\noniele{l}_2,\cdots, \noniele{l}_{\nonisetnum{l}}\}$ (for some $\nonisetnum{l}\in\mathbb N$) denote their signal space and let $\nonisig{l}\in\noniset{l}$ denote a private signal.  Let $\noniSig{l}$ denote the random variable of a group $l$ non-expert's signal, and let  $\nonidist{l}_0$ and $\nonidist{l}_1$ denote the PMFs of $\noniSig{l}$ given $Y=0$ and $Y=1$, respectively. \footnote{In this paper, the naming of experts vs.~non-experts reflects agents' expertise in receiving finer signals instead of their competence in providing accurate information. This is in accordance with many real-world scenarios, where a bad expert (e.g., a doctor) may provide less accurate information than a good non-expert (e.g., a mom).}

\begin{ex}
\label{ex:sig} In the setting of Example~\ref{ex:motivation}, we assume the uniform prior, i.e., %
$P=\Pr[Y=1]=0.5$.  Suppose $n=2$, i.e., the hybrid crowd consists of one group of experts and two groups of non-experts.

{\bf \emph{Experts.}} Let $\expset=\bbR$, $f_0^0=N(0, 4)$ (Gaussian distribution with mean 0 and variance 4) and $f_1^0= N(1, 4)$.  

{\bf \emph{Group 1 non-expert.}} $\noniset{1}=\{\text{Very likely rainy}, \text{Likely rainy}, \text{Likely sunny}, \text{Very likely sunny}\}$. That is, $m^1=4$. For their signal distributions, we first generate an expert's signal and then check which of the following four interval it falls into:  $\{(-\infty, 0],(0,0.5],(0.5,1],(1,\infty)\}$, and let the interval be the group 1 non-expert's signal.
It follows that  $\nonidist{1}_0=[0.50, 0.05,0.05,0.40]$ and $\nonidist{1}_1=[0.40,0.05,0.05,0.50]$.

{\bf \emph{Group 2 non-expert.}} $\noniset{2}=\{\text{Likely rainy, Likely sunny}\}$. That is, $m^2=2$.  Their signals are generated similarly except that we use a more coarse set of two intervals: $\{(-\infty, 0.5],(0.5,\infty)\}$. It follows that $\nonidist{2}_0=[0.55,0.45]$ and $\nonidist{2}_1=[0.45,0.55]$.
\end{ex}

\noindent {\textbf{Remarks on Notation.}} We often use superscript to denote the group number (where the experts correspond to group $0$). For example,  $ \expset$ represents the experts' signal space, and for any $1\le l\le n$, $\noniset{l}$ represents the signal space of group $l$ non-expert.  For signals, we use star on superscript to denote the (true) private signal. For example,  $\nonisig{l}$ represents the signal received by a group $l$ non-expert.

\noindent\textbf{Reports.}
Agents can choose a report in $\Omega=\Theta=\bigcup_{l=0}^{\noninum}\noniset{l} $. For any $0\le l\le n$,  we  let $\nonirep{l}$ denote a report in $\noniset{l}$. Notice that the agent's signal may come from a different signal space. Given agents' reports $\vec s$, we let $\nonirepset{l}\subseteq \mathbb N$ denote the set of agents whose reports are in $\noniset{l}$ where $0\le l\le\noninum$.

\noindent\textbf{Posterior Distributions.} 
Let $\exppost(\expele) \triangleq \Pr[Y=1\mid \expSig=\expele]$ denote the posterior distribution over the ground truth given $ \expele\in \expset$, as if $\expele$ is the (truthful) signal received by an expert. Similar, let $ \nonipost(s^{l})$ denote the 
posterior distribution over the ground truth given a group $l$ non-expert's  signal 
$ s^{l}$.

\noindent\textbf{Common Knowledge.} We assume the mechanism, the prior distribution, i.e.~$(1-P,P)$, and the signal distributions, i.e.~$\nonidist{l}_j$, for $l\in\{0,1,\ldots,n\}, j\in\{0,1\}$, are common knowledge. Agents' private signals are only known to themselves.

\noindent\textbf{Additional Assumptions.}
Throughout the paper, we make the following assumptions on the model. 
The first assumption states that agents' private signals are independent given the ground truth.
\begin{asm}
\label{asm:ci}
Condition on ground truth $Y$, the private signals of all  agents are independent. 
\end{asm}

The second assumption states that agents who receive different private signals have different predictions on others' private signal. 
\begin{asm}
\label{asm:ip}
For any agent $r$ and $j$, the posterior distribution of $j$'s private signal conditional on $r$'s private signal is different for different $r$'s private signal. See Appendix~\ref{sec:ip} for the formal definition.
\end{asm}
The  two assumption are  standard in the literature. For example, the first was made in \citep{kong2018water}, and the second was made in \citep{miller2005eliciting,kong2019information}.

The third assumption below has two parts. The first part guarantees that PDFs of experts' signal distributions are strictly positive on $\expset$, which we believe to be quite mild. The second part requires that the experts' signals are ``richer'' than non-experts in the sense that for any group $l$ non-expert's signal, there exists an expert's signal with the same posterior distribution. 
\begin{asm}
\label{asm:wise}
The experts' signal satisfies the following assumptions:
\begin{enumerate}
    \item for any $\expele \in \expset$ and any $i\in \{0,1\}$, we have $f_i^0(\expele)>0$;
    \item for any $l\in [\noninum]$ and any $ s^{l}\in\noniset{l}$, there exists $ \expele \in \expset$, s.t. $\exppost(\expele) = \nonipost(s^{l})$.  
\end{enumerate}
\end{asm}
We believe the second part of Assumption~\ref{asm:wise}, while being quite technical and primarily used in the proof of the theorem in Section~\ref{sec:comb}, is also very mild, because it does not put any constraints on shapes of the distributions---for example it does not require the experts' signals to be more accurate.

\textbf{Truthfulness.} One main goal of our mechanism is to elicit truthful information from agents, who can report any signal in $\bigcup_{l=0}^{\noninum}\noniset{l}$. In our model,  truthfulness means that an agent is incentivized to report the signal he/she received, which reveals which group he/she is in. In this paper, we introduce the following new notion of truthfulness, called  \textbf{\newtruthfulness}, which is stronger than standard truthfulness and slightly weaker than strict truthfulness. 

\begin{dfn}[\bf \Newtruthfulness]
\label{dfn:truthfulness}
 A mechanism $\mathcal{M}$  is  {\newtruthful} if
(1) truth reporting is a BNE, and (2) every agent has a strictly higher expected payment for truthful report than another report $s$ if $s$ has a different posterior   distribution over the ground truth. 
\end{dfn}
As discussed in Section~\ref{sec:related-work}, we believe that a \newtruthful{} mechanism is not significantly weaker than a strictly truthful mechanism w.r.t.~information elicitation. Notice that a \newtruthful{} mechanism is strictly truthful when  agents' reports are converted to posterior distributions. Therefore, if the posterior distributions of agents' reports contain all information for aggregation and decision-making, such as in maximum a posteriori (MAP) estimators, then agents have strict incentive to report their truthful information (though the reported signals might be different from the signals they receive).

}

\section{Composite Elicitation Mechanisms (CEM)}
\label{sec:comb}
{
In this section, we introduce composite elicitation mechanism for the model presented in Section~\ref{sec:model}, which combines two  information elicitation mechanisms:  ECGM (which is a natural extension of CGM for experts defined below) and SPPM with log scoring rule (one for each group of non-experts) via a set of  linear transformations, to achieve \newtruthfulness.

ECGM is a natural extension of  CGM   for two agents to more than two agents. 
Suppose agent $r$ reports  $\exprep_r$. ECGM first chooses another agent $j$ \emph{u.a.r} and then pay agent $r$ according to the payment function in CGM. More precisely, the payment to agent $r$ is:
\begin{equation}
\label{eq:ecg}
    R_\text{ECGM}(\exprep_r,\exprep_j)= \log\left(\frac{\exppost(\exprep_r)\cdot \exppost(\exprep_j)}{P}+\frac{(1-\exppost(\exprep_r))(1-\exppost(\exprep_j))}{1-P}\right)
\end{equation}
where we recall that $\exppost(\exprep_r)\triangleq\Pr[Y=1|\expSig_r=\exprep_r]$.

Like CGM, ECGM is also strictly truthful as shown in Proposition~\ref{thm:ecgm} in Appendix~\ref{proofecg}.   We are now ready to define   \emph{composite elicitation mechanisms (CEM)}.  
\begin{dfn}
\label{dfn:CEM} Given a hybrid crowd model in Section~\ref{sec:model} and a set of real numbers $\{ \noniA{l}_k, \noniB{l}_k:l\in [\noninum], 1\le k\le \nonisetnum{l} \}$, the \emph{composite elicitation mechanisms (CEM)} is defined as follows.

\noindent{\bf \boldmath A $\expset$-report agent} (with report $\exprep_r$) is paid  $R_\text{ECGM}(\exprep_r, \exprep_j),$ where $j$ is an agent sampled from $\exprepset\setminus\{r\}$   u.a.r.

\noindent{\bf \boldmath  A $\noniset{l}$-report agent} (with report $\nonirep{l}_r=\noniele{l}_k$) is paid $\noniA{l}_k\cdot R_\text{SPPM}^{l}(\nonirep{l}_r, \nonirep{l}_j)+\noniB{l}_k,$
where $j$ is an agent sampled from $\nonirepset{l}\setminus\{r\}$ u.a.r.
\end{dfn}
That is, a $\expset$-report agent  is paid her ECGM payment with another randomly selected $\expset$-report agent. (As a common assumption,  there are at least 2 reporters in each report set.)  A $\noniset{l}$-report agent  is paid a linear transformation of her SPPM  payment $R_\text{SPPM}^{l}$ with log scoring rule\footnote{In fact, all $PS$ satisfying Assumption~\ref{asm:ps} (See Appendix~\ref{sec:ps}), including the log scoring rule, can be used in CEM to make it \newtruthful.} %
with another randomly selected $\noniset{l}$-report agent. An example of CEM  can be found in  Appendix~\ref{sec:runexp}. 

The main theorem of this section is the following.
\begin{thm}[\bf \Newtruthfulness{} of CEM]
\label{thm:composite}
Under Assumptions~\ref{asm:ci}--\ref{asm:wise}, there exists a set of linear transformation coefficients $\left(\noniA{l}_k,\noniB{l}_k\right)_{l\in[\noninum], k\in[\nonisetnum{l}]}$, under which CEM is \newtruthful. 
\end{thm}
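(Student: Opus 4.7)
The plan is to calibrate each $(\noniA{l}_k, \noniB{l}_k)$ so that, under truthful play by others, the expected CEM payment for any report equals the expected ECGM payment an expert with the matching posterior would receive; then strict properness of ECGM in the posterior transfers directly to strict posterior truthfulness of CEM. Concretely, let $V(q_r, q_j) := \log\bigl(q_r q_j/P + (1-q_r)(1-q_j)/(1-P)\bigr)$, so $R_\text{ECGM}(\exprep_r, \exprep_j) = V\bigl(\exppost(\exprep_r), \exppost(\exprep_j)\bigr)$ depends only on posteriors, and set $U^*(p^*, q) := \mathbb{E}_{\exprep_j \mid p^*}\bigl[V(q, \exppost(\exprep_j))\bigr]$. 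Writing $h_p(\exprep) := p f_1^0(\exprep) + (1-p) f_0^0(\exprep)$, one checks $V(q, \exppost(\exprep_j)) = \log\bigl(h_q(\exprep_j)/h_P(\exprep_j)\bigr)$, so $U^*(p^*, q) = -D_{KL}(h_{p^*} \parallel h_q) + c(p^*)$ is strictly maximized in $q$ at $q = p^*$ by Gibbs' inequality.

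I would then pick the coefficients via a $2\times 2$ linear system. Both $\alpha^l_{p^*, k} := \mathbb{E}_{\nonisig{l}_j \mid p^*}[R_\text{SPPM}^l(\theta^l_k, \nonisig{l}_j)]$ and $U^*(p^*, q^l_k)$, where $q^l_k := \nonipost(\theta^l_k)$, are affine in $p^*$, so matching them for every $p^*$ reduces to
\begin{equation*}
\noniA{l}_k \cdot \alpha^l_{i,k} + \noniB{l}_k \;=\; U_i(q^l_k), \qquad i \in \{0,1\},
\end{equation*}
where $\alpha^l_{i,k}$ and $U_i$ are the $Y=i$ conditional expectations. Assumption~\ref{asm:wise}(2) guarantees $q^l_k = \exppost(\expele)$ for some $\expele \in \expset$, ensuring the target is the genuine expected ECGM payment for a feasible expert report.

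To show the system is always consistent I split on whether $q^l_k = P$. If $q^l_k \ne P$, a log-ratio computation gives $\alpha^l_{1,k} - \alpha^l_{0,k} = \sum_s (f^l_1(s) - f^l_0(s))\log\bigl(\SSIdist{l}(\theta^l_k)(s)/\SSIdist{l}(s)\bigr)$, and analogously for $U_1(q^l_k) - U_0(q^l_k)$; using the identity $\SSIdist{l}(\theta^l_k)(s) - \SSIdist{l}(s) = (q^l_k - P)(f^l_1(s) - f^l_0(s))$, every summand carries sign $\mathrm{sign}(q^l_k - P)$, with at least one strictly nonzero term by informativeness of group-$l$ signals (forced by Assumption~\ref{asm:ip} together with Assumption~\ref{asm:ci}), so both differences share that sign and the system has a unique solution with $\noniA{l}_k > 0$. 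If $q^l_k = P$, then by Bayes $\SSIdist{l}(\theta^l_k) = \SSIdist{l}$ and $V(P, \cdot) \equiv 0$, so both sides vanish identically and any choice $\noniA{l}_k > 0$, $\noniB{l}_k = 0$ works.

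Strict posterior truthfulness then follows directly. Under truthful play by others, an agent with true posterior $p^*$ who reports an expert signal $\exprep$ receives $U^*(p^*, \exppost(\exprep))$, and any $\theta^l_k$-report yields $\noniA{l}_k \alpha^l_{p^*,k} + \noniB{l}_k = U^*(p^*, q^l_k)$ by construction; in every case the expected payment is $U^*(p^*, q)$ with $q$ the posterior of the report, and strict maximization at $q = p^*$ delivers the BNE property together with strict dominance over every misreport whose posterior differs from $p^*$, with equality exactly when posteriors coincide---matching Definition~\ref{dfn:truthfulness}. The main obstacle will be making the log-ratio sign argument in paragraph three fully rigorous across all $(l, k)$; this is where Assumption~\ref{asm:ci} justifies the Bayes identity for $\SSIdist{l}(\theta^l_k)$ and Assumption~\ref{asm:wise}(2) is indispensable for ensuring the target $U^*(\cdot, q^l_k)$ is attainable by a feasible expert report.
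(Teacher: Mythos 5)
Your proposal is correct, and it in fact produces exactly the coefficients the paper uses: your $\noniA{l}_k$ is the ratio of the $p^*$-slopes of $U^*(\cdot,q^l_k)$ and $\alpha^l_{\cdot,k}$, which coincides with the paper's ratio of derivatives $\frac{{\rm d}\exppay_{exp}}{{\rm d}\exppost_r}\big/\frac{\partial \nonipay{l}_{exp}}{\partial \exppost_r}$ evaluated at $\expelek{l}{k}$ because the envelope term $\int h_{p}\,\frac{f_1-f_0}{h_p}\,{\rm d}s=0$ vanishes (equivalently, $U^*(\cdot,q)$ is the tangent line to the convex curve $\exppay_{exp}$ at $q$), and your intercept equation is the paper's definition of $\noniB{l}_k$. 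The proof architecture, however, is genuinely different and more economical. The paper fixes the coefficients by a tangency condition at the single point $\exppost_r=\nonipost(\noniele{l}_k)$ and then verifies truthfulness through five separate lemmas (expert interior/exterior, non-expert interior, non-expert exterior toward experts and toward other non-expert groups), each with case analysis on whether $\nonidist{l}_0(k)=\nonidist{l}_1(k)$; the cross-group non-expert case is especially laborious. You instead prove one representation identity---after transformation, \emph{every} report's expected payment equals $U^*(p^*,q)$ with $q$ the report's posterior and $p^*$ the agent's true posterior, valid for all $p^*$ since both sides are affine in $p^*$ and agree at $p^*\in\{0,1\}$---and then invoke KL non-negativity once, which subsumes all interior and exterior cases simultaneously. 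A side benefit you undersell: in your framing Assumption~\ref{asm:wise}(2) is not actually indispensable, since $U^*(p^*,q^l_k)$ is well defined for any $q^l_k\in[0,1]$ and the bound $U^*(p^*,q)\le U^*(p^*,p^*)$ needs no witness expert signal, whereas the paper's proof genuinely manipulates the signal $\expelek{l}{k}$. Two minor caveats: your sign argument for solvability of the $2\times 2$ system is specific to the log scoring rule (the paper's corresponding lemma handles general scoring rules satisfying its Assumption~\ref{asm:ps} via the Hessian of the associated convex function), and the positivity $\noniA{l}_k>0$ is not needed once the representation identity holds.
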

Although Theorem~\ref{thm:composite} only claims the existence of coefficients $\left(\noniA{l}_k,\noniB{l}_k\right)_{l\in[\noninum], k\in[\nonisetnum{l}]}$, the coefficients can be efficiently calculated in advance involving some integration. The explicit form of the coefficient (Equation~\ref{eq:noniA} and~\ref{eq:noniB}) can be found in Appendix~\ref{sec: proofce}.  

\noindent\textbf{Proof Sketch.}
To prove \newtruthfulness, it suffices prove the following two types of truthfulness for every agent, given that all other agents report truthfully.

 \textbf{1.~Interior truthfulness.} If the agent's report is in the same $\Theta^l$ as her private signal, her expected payment is uniquely maximized when she reports truthfully.  

\textbf{2.~Exterior Truthfulness.} If the agent's report is in a different $\Theta^l$, her expected payment will not exceed the payment under truthful report (and equal only for signals with same posterior distribution).

The proof proceeds with giving explicit  formulas of the  coefficients $\{\noniA{l}_k, \noniB{l}_k\}$, then proving that  both interior and exterior truthfulness hold for all agents.
Intuitively, the linear transformation coefficients is generated from and uniquely determined by the exterior truthfulness among expert group and non-expert groups (see Figure~\ref{fig:AB} and explanation below). On the other hand, we can prove that, with these coefficients, interior truthfulness in one non-expert group, and exterior truthfulness among different non-expert groups can be satisfied. Interior truthfulness of the expert group naturally comes from the truthfulness of ECGM. 

The truthfulness will be achieved by taking advantage of the second part of Assumption~\ref{asm:wise} to construct some special cases of signal, with which we bridge the gap between experts and non-experts.  For any $\noniele{l}_k\in \Theta^l$, let $\expelek{l}{k}\in\Theta^0$  denote a signal that is guaranteed by Assumption~\ref{asm:wise} (2), that is, $\exppost(\expelek{l}{k})=\nonipost(\noniele{l}_k)$, which is used to construct the linear coefficients. The proof relies on calculating and comparing expected payments of different groups of agents. While it is hard to directly compare expected payment of different groups, $\expelek{l}{k}$ provides a special case, where different expected payments will be equal or comparable to each other. In this way, we can convert all different expected payments into one group and compare them.

\begin{minipage}{0.4\textwidth} We now present the high-level intuition behind the linear transformation in Figure~\ref{fig:AB}.  The figure shows relationship between different kinds of expected payments. The X-axis is the posterior distribution of expert's signal. Y-axis is the value of expected payment. We draw the following expected payment in the figure. %

{\bf \boldmath Curve 1: $\exppay_\text{exp}(\expele_r,\expele_r)$.} The curve represents an expert's expected payment when reporting truthfully. 

{\bf \boldmath Line 1:  $\nonipay{l}_\text{exp}(\expele_r, \noniele{l}_k)$.} The line represents an expert's expected payment when reporting $\noniset{l}$ signal $\noniele{l}_k$ (before linear transformation).  

{\bf \boldmath Curve 2: $\exppay_{\text{non-}l}(\noniele{l}_k, \expele_r)$.} The curve represents a non-expert's expected payment when reporting $\expset$ signal $s_r$. 

{\bf \boldmath  Line 2:   $\nonipay{l}_{\text{non-}l}(\noniele{l}_k, \noniele{l}_k)$.}  The line represents a non-expert's expected payment when reporting truthfully (before linear transformation). 
\end{minipage}
\begin{minipage}{0.6\textwidth}
    \centering
    \captionsetup{type=figure}
    \includegraphics[width=\textwidth]{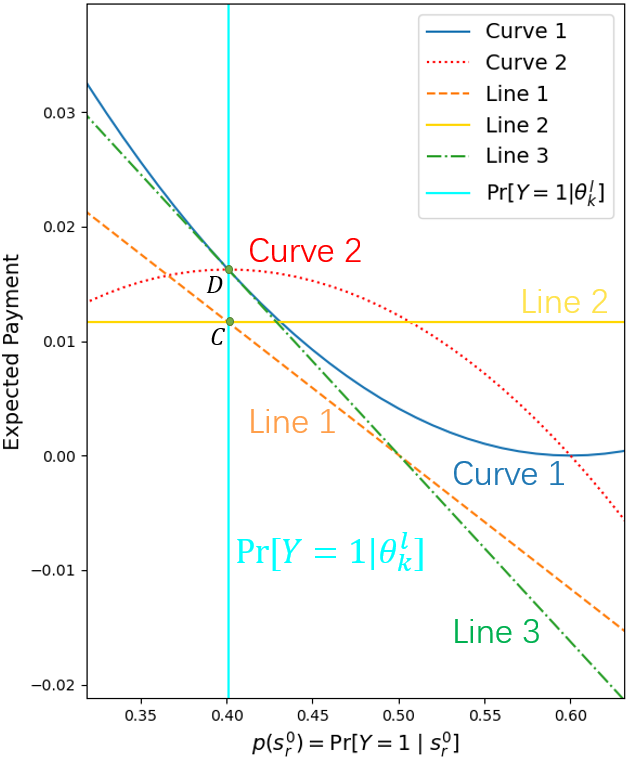}
    \captionof{figure}{Intuition of linear transformation.\label{fig:AB}}
\end{minipage}

The linear transformation should satisfy the following two conditions (which is the exterior truthfulness between expert group and non-expert group): 
\begin{enumerate}
    \item \textbf{Expert's exterior truthfulness towards Non-expert}:\\ $\exppay_{exp}(\expele_r,\expele_r) \ge \noniA{l}_k \cdot \nonipay{l}_{exp}(\expele_r, \noniele{l}_k) + \noniB{l}_k$. The expected payment of expert when report truthfully (Curve 1) should exceed her expected (transformed) payment when reporting $\noniele{l}_k$ (Line 1). 
    
    \item \textbf{Non-expert's exterior truthfulness towards expert}:\\ $\noniA{l}_k\cdot \nonipay{l}_{non-l}(\noniele{l}_k, \noniele{l}_k) + \noniB{l}_k\ge \exppay_{non-l}(\noniele{l}_k, \expele_r).$ The (transformed) expected payment of non-expert when report truthfully (Line 2) should exceed her expected (transformed) payment when reporting $\expele_r$ (Curve 2). 
\end{enumerate}

Meanwhile,  Line 1 and Line 2 are transformed using the same $\noniA{l}_k$ and $\noniB{l}_k$.

Now we look at point $C$, the intersection of Line 1 and Line 2. Note that $C$ lies on $\exppost(\expsig)=\Pr[Y=1\mid\noniele{l}_k]$ (the vertical line). Since Line 1 and Line 2 share the same $\noniA{l}_k$ and $\noniB{l}_k$, the transformed $C$ always lies on the vertical line no matter how $\noniA{l}_k$ and $\noniB{l}_k$ changes. Related to the restrictions (Curve 1 above Line 1, Curve 2 below Line 2), the only possible point for transformed $C$ is point $D$, the intersection of Curve 1 and Curve 2. Moreover, since Line 1 after transformation must lie below Curve 1, it must be tangent to the Curve 1 on $D$. Therefore, Line 1 is transformed the  Line 3. Note that since Line 3 is unique, the linear transformation coefficient is uniquely determined. 

Although the exterior truthfulness among expert group and non-expert groups has already determined the coefficients, we can still prove that interior truthfulness in one non-expert group and exterior truthfulness among different non-expert groups hold under this coefficients. The full proof can be found in Appendix~\ref{sec: proofce}.

}

\section{Mutual-Information-Based Mechanism}
\label{sec:mutual}
In this section, we introduce Mutual-Information-Based Mechanism (MIBM) and prove that it is also \newtruthful. Again, the key is to define its payment function.

\noindent\textbf{Payment Function for MIBM.} Recall that $\nonirepset{l}$ are the set of agents whose reports are in $\noniset{l}$ where $0\le l\le\noninum$. For each $\nonirepset{l}$, the payment function has $\noninum+1$ classes of non-negative coefficients: $\alpha^{l}_{i}\left(t_0,t_1,\cdots,t_{\noninum}\right)$, where $0\le i\le\noninum$ and $t_0,t_1\cdots,t_n\in \mathbb N$ represent $|\nonirepset{0}|,|\nonirepset{1}|,\cdots,|\nonirepset{n}|$ respectively.
\begin{ex}\label{ex:MIBM} In this example, all coefficients are the same, i.e., for every $0\le l,i\le\noninum$ and every $t_0,t_1,\cdots,t_\noninum\in\mathbb N$,  we let

$\hfill\alpha^{l}_{i}\left(t_0,t_1,\cdots,t_\noninum\right)=1
\hfill$
\end{ex}

Let $\PSI{i},0\le i\le\noninum$ be any strictly proper scoring rules. For any $0\le l_1,l_2\le\noninum$, let
$$R_{l_1l_2}(\nonirep{l_1},\nonirep{l_2})=PS^{l_2}(\nonirep{l_2}, \SSIdist{l_1}(\nonirep{l_1})) - PS^{l_2}(\nonirep{l_2}, \SSIdist{l_2}),$$
where $\SSIdist{l_1}(\nonirep{l_1})$ is the posterior signal distribution conditioning on signal $\nonirep{l_1}\in \noniset{l_1}$ and $\SSIdist{l_2}$ is the prior signal distribution on $\noniset{l_2}$. This function is inspired by Bregman Mutual Information~\citep{kong2019information}.

For each agent $r$ in $\nonirepset{l}$, we pick an agent $j^l\in\nonirepset{l}\setminus\{r\}$ u.a.r., and pick agents $j^i\in\nonirepset{i},0\le i\neq l\le\noninum$ u.a.r. The payment function for agent $r$ is 
\begin{align*}
    \sum_{i=0}^{\noninum}\alpha^{l}_i\left(|\nonirepset{0}|,\cdots,|\nonirepset{\noninum}|\right) R_{li}(\nonirep{l}_r,\nonirep{i}_{j^i})
\end{align*}

\begin{ex}
Assume we have only 1 group of non-expert, i.e. $\noninum=1$. If we use log scoring rule for $\PSI{0}$ and $\PSI{1}$ and the coefficients in example \ref{ex:MIBM}. The payment function for expert $r$ is
$$\log\left(\frac{\pdfs[\nonirep{0}_{j^0}\mid \noniSig{0}_r=\nonirep{0}_r]}{\pdfs[\nonirep{0}_{j^0}]}\right)+\log\left(\frac{\Pr[\noniSig{1}_{j^1}=\nonirep{1}_{j^1}\mid \noniSig{0}_r=\nonirep{0}_r]}{\Pr[\noniSig{1}_{j^1}=\nonirep{1}_{j^1}]}\right)$$
where $j^0$ is u.a.r. selected in $\nonirepset{0}\setminus\{r\}$, and $j^1$ is u.a.r. selected in $\nonirepset{1}$. Note that $\pdfs[\cdot]$ and $\pdfs[\cdot\mid S^0_r=s^0_r]$ are the PDF of $S^0$ and the PDF of $S^0$ conditioning on $S^0_r=s^0_r$ respectively.
\end{ex}

Another example of MIBM payments can be found in Appendix~\ref{sec:runexp}.

The main theorem of this section provides a sufficient conditions on the coefficients for MIBM to be \newtruthful.

\begin{thm}[\bf \Newtruthfulness{} of MIBM]
\label{thm:mi}
Given Assumption~\ref{asm:ci} and \ref{asm:ip}. 
Assume $|\nonirepset{l}|\ge3$ for all $0\le l\le n$.
If for $0\le l_1<l_2\le n$ and $0\le i\le n$, the following conditions holds for $t_0\ge2,\cdots,t_n\ge 2$:
$$ \alpha^{l_1}_i(t_0,\cdots,t_n) = \alpha^{l_2}_i(t_0,\cdots,t_{l_1}-1, \cdots,t_{l_2}+1,\cdots,t_n)$$
then MIBM is \newtruthful.
\end{thm}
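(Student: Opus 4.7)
The plan is to directly compare the expected payment from truthful reporting against an arbitrary single-agent deviation, using conditional independence to collapse the peer-signal distribution and the coefficient condition to align the two payments term-by-term. Fix an agent $r$ in group $l$ with true signal $s\in\Theta^l$, assume every other agent is truthful, and consider a candidate report $s'\in\Theta^{l'}$ (allowing the interior case $l'=l$). The first step is to evaluate $\mathbb{E}[R_{l'i}(s',s^i_{j^i})\mid s^l_r=s]$ for every $i$. Because all other agents are truthful and their signals are conditionally i.i.d.\ given $Y$ (Assumption~\ref{asm:ci}), the drawn peer signal $s^i_{j^i}$ is distributed as $\sum_{y}\Pr[Y{=}y\mid S^l{=}s]\,f^i_y(\cdot)=\mathbf{Q}^i(s)$, regardless of which peer is selected or of the realized group sizes. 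Hence the conditional expectation reduces to $\mathbb{E}_{s^i\sim\mathbf{Q}^i(s)}[PS^i(s^i,\mathbf{Q}^i(s'))]-\mathbb{E}_{s^i\sim\mathbf{Q}^i(s)}[PS^i(s^i,\mathbf{Q}^i)]$, with the second term independent of the report $s'$.

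Next I will apply the coefficient invariance to subtract the two expected payments. When $r$ switches from group $l$ to group $l'$, only coordinates $l$ and $l'$ of the size vector change (decrement and increment by one, respectively). The bound $|\mathcal{N}^l|\ge 3$ ensures that every post-deviation coordinate is still at least $2$, so the hypothesized identity applies whether $l<l'$ or $l>l'$ (in the latter direction by re-indexing the $t$'s in the stated equation at the post-deviation profile), giving $\alpha^{l'}_i(\text{new sizes})=\alpha^l_i(\text{original sizes})$ for every $i$. The common ``$-PS^i(\cdot,\mathbf{Q}^i)$'' contributions then cancel and
$$U_T-U_M\;=\;\mathbb{E}\!\left[\sum_{i=0}^{n}\alpha^l_i(|\mathcal{N}^0|,\ldots,|\mathcal{N}^n|)\,D_i(s,s')\;\Big|\;s^l_r=s\right],$$
where $D_i(s,s')=\mathbb{E}_{s^i\sim\mathbf{Q}^i(s)}[PS^i(s^i,\mathbf{Q}^i(s))-PS^i(s^i,\mathbf{Q}^i(s'))]$.

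The final step invokes strict properness of each $PS^i$: by definition, $D_i(s,s')\ge 0$, with equality exactly when $\mathbf{Q}^i(s')=\mathbf{Q}^i(s)$. Combining Assumption~\ref{asm:ip} with conditional independence, this equality of posterior signal distributions is equivalent to the induced posteriors over $Y$ coinciding, namely $p(s)=p(s')$. Since the $\alpha^l_i$ are non-negative, $U_T\ge U_M$ everywhere (giving the BNE clause), and whenever $p(s)\ne p(s')$ at least one summand is strictly positive, yielding $U_T>U_M$ and the strict-posterior clause of Definition~\ref{dfn:truthfulness}. Both the interior case $l'=l$ and the exterior case $l'\ne l$ are captured by this single calculation.

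The main obstacle I anticipate is the coefficient bookkeeping. The hypothesis is phrased only for $l_1<l_2$, so some care is needed to verify that a deviation moving $r$ from a higher-indexed group to a lower-indexed one also produces the matching coefficient equality; this follows from the stated equation applied at the shifted size profile, provided that profile still satisfies $t_j\ge 2$ for all $j$, which is exactly what $|\mathcal{N}^l|\ge 3$ guarantees. Once that point is handled, the rest of the argument is a clean application of conditional independence and strict properness, with no need for the tailored linear correction used in the proof of Theorem~\ref{thm:composite}.
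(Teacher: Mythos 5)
Your proposal is correct and follows essentially the same route as the paper's proof: collapse the selected peer's signal to the posterior mixture $\mathbf{Q}^i(\cdot)$ via conditional independence, compare the truthful and deviating payments term-by-term using strict properness of each $PS^i$, and match coefficients across the shifted size profile using the stated invariance (with $|\mathcal{N}^l|\ge 3$ guaranteeing the profile stays in range). Your version merely unifies the paper's separate interior/exterior and expert/non-expert cases into one calculation and makes explicit the step from equality of the posterior signal distributions to equality of the posteriors over $Y$, which the paper leaves implicit.
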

It is not hard to verify that the coefficients in Example~\ref{ex:MIBM} satisfy the conditions in Theorem~\ref{thm:mi}.  The conditions may appear enigmatic, but we believe that they have intuitive meanings. Take $l_1=0,l_2=1$, i.e.~$\alpha^0_i(t_0,t_1,\cdots,t_n)=\alpha^1_i(t_0-1,t_1+1,\cdots,t_n)$, for example. The equation is used to guarantee that no expert has incentive to submit a report in $\noniset{1}$. And vice versa, no group 1 non-expert has incentive to submit a report in $\noniset{0}$. The $t_0-1$ and $t_1+1$ part in the right hand side   represents the hypothetical deviations.

\noindent\textbf{Proof Sketch.} Like the proof of Theorem~\ref{thm:composite}, we will prove  interior truthfulness and exterior truthfulness for MIBM whose coefficients satisfy the conditions described in the theorem.  The proof is done by taking an optimization-based argument. 

Take interior truthfulness for example. According to Assumption~$\ref{asm:ip}$ and the strictness of strictly proper scoring rule, we can verify that  the expected payment of an agent is uniquely maximized when she reports truthfully. Notice that under SPPM~\citep{kong2019information}, an agent's expected payoff can be viewed as the Bregman mutual information between her signal distribution and her peer's signal distribution.  However, this relationship does not hold anymore in MIBM. 
The full proof can be found in Appendix \ref{proofmi}.

\section{Conclusions and Future Work}
We formulate the problem of information elicitation from hybrid crowds and propose two \newtruthful{} mechanisms. Perhaps the most important open question is whether we can design strictly truthful mechanisms for hybrid crowds. Theoretical justifications of such mechanisms in terms accuracy/sample complexity, and real-world experiments that examines the strategic aspects of the agents are also promising and challenging directions.

\bibliographystyle{named}
\bibliography{references}

\clearpage
\appendix

\section{The Formal Definition and Properties of KL-divergence}\label{sec:KL}
\begin{dfn}[KL-Divergence]
For any finite set $\Sigma$. The KL-divergence $D_{KL}:\Delta_{\Sigma}\times\Delta_{\Sigma}\mapsto \mathbb{R}$ is a non-symmetric measure of the difference between distribution $\mathbf{P}\in\Delta_{\Sigma}$ and $\mathbf{Q}\in \Delta_{\Sigma}$. It is defined as 
$$ D_{KL}(\mathbf{P}\|\mathbf{Q})=\sum_{\sigma\in\Sigma} P(\sigma)\log\left(\frac{P(\sigma)}{Q(\sigma)}\right),$$
where $P(\sigma)$ and $Q(\sigma)$ is the probability function of distributions $\mathbf{P}$ and $\mathbf{Q}$. 

Similarly, for any continuous interval $I
 \subset \bbR$, when $\mathbf{P},\mathbf{Q}\in\Delta_{I}$, the KL-divergence is defined as
$$ D_{KL}(\mathbf{P}\|\mathbf{Q})=\int_{I} p(x)\log\left(\frac{p(x)}{q(x)}\right){\rm d}x, $$
where $p(x)$ and $q(x)$ are the PDFs of distributions $\mathbf{P}$ and $\mathbf{Q}$.
\end{dfn}

KL-divergence have the property of \emph{non-negativity}. That is, $D_{KL}(\mathbf{P}\|\mathbf{Q})\ge 0$ for any two distributions $\mathbf{P}$ and $\mathbf{Q}$ defined on the same probability space. The equality holds if and only if $\mathbf{P}=\mathbf{Q}$ (in continuous space $\mathbf{P}=\mathbf{Q}$ \alev). With using the non-negativity property, we can prove the following lemma.

\begin{lemma}
\label{lem:lsr}
(Discrete Version) $\mathbf{P,Q,R}$ are distribution on finite set $\Sigma$. Then we have 
$$ \sum_{\sigma\in\Sigma} P(\sigma)\log\left(\frac{Q(\sigma)}{R(\sigma)}\right) \le \sum_{\sigma\in\Sigma} P(\sigma)\log\left(\frac{P(\sigma)}{R(\sigma)}\right).$$

(Continuous Version) $\mathbf{P,Q,R}$ are distributions of continuous random variables on interval $I$. Then we have $$ \int_{I} p(x)\log\left(\frac{q(x)}{r(x)}\right){\rm d}x \le \int_{I} p(x)\log\left(\frac{p(x)}{r(x)}\right){\rm d}x.$$

For both version, the equality holds if and only if  $\mathbf{P}=\mathbf{Q}$ (\alev). 
\end{lemma}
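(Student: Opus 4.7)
The plan is to observe that the claimed inequality is, after a single algebraic rearrangement, exactly the nonnegativity of the KL-divergence $D_{KL}(\mathbf{P}\|\mathbf{Q})$, which is stated in the paragraph immediately preceding the lemma. So there is essentially no new work: the main task is to make the identification cleanly and to handle the equality condition.

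Concretely, for the discrete version I would first move the two sums onto the same side and combine the logarithms, writing
\[
\sum_{\sigma\in\Sigma} P(\sigma)\log\!\left(\frac{P(\sigma)}{R(\sigma)}\right)-\sum_{\sigma\in\Sigma} P(\sigma)\log\!\left(\frac{Q(\sigma)}{R(\sigma)}\right)=\sum_{\sigma\in\Sigma} P(\sigma)\log\!\left(\frac{P(\sigma)}{Q(\sigma)}\right),
\]
where the $R(\sigma)$ terms cancel inside the logarithm. The right-hand side is precisely $D_{KL}(\mathbf{P}\|\mathbf{Q})$ as defined at the top of Appendix~\ref{sec:KL}. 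By the nonnegativity property recalled just before the lemma, this quantity is $\ge 0$, which yields the desired inequality. The equality case $\mathbf{P}=\mathbf{Q}$ follows from the corresponding equality clause in the nonnegativity property. One minor bookkeeping point to flag: the manipulation implicitly assumes that wherever $P(\sigma)>0$ we have $Q(\sigma)>0$ and $R(\sigma)>0$, which is the standard absolute continuity convention used in the definition of $D_{KL}$; terms with $P(\sigma)=0$ contribute $0$ by the usual convention $0\log 0=0$.

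The continuous version is proved in exactly the same way, replacing $P,Q,R$ by their PDFs $p,q,r$ and sums by integrals over $I$:
\[
\int_{I} p(x)\log\!\left(\frac{p(x)}{r(x)}\right)\mathrm{d}x-\int_{I} p(x)\log\!\left(\frac{q(x)}{r(x)}\right)\mathrm{d}x=\int_{I} p(x)\log\!\left(\frac{p(x)}{q(x)}\right)\mathrm{d}x = D_{KL}(\mathbf{P}\|\mathbf{Q})\ge 0,
\]
with equality almost everywhere iff $p=q$ a.e., matching the stated equality clause.

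There is no real obstacle here; the only thing worth being careful about is the equality condition in the continuous case, where ``$\mathbf{P}=\mathbf{Q}$'' must be interpreted in the a.e.\ sense (as the lemma statement already indicates with the \alev{} qualifier). If a fully self-contained proof of nonnegativity were required, one would derive it from the inequality $\log t \le t-1$ applied with $t=Q(\sigma)/P(\sigma)$, but since the paper states nonnegativity as a property of $D_{KL}$ immediately before the lemma, I would simply cite it.
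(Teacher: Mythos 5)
Your proposal matches the paper's proof essentially verbatim: both rearrange the difference of the two sides so that the $R$ terms cancel inside the logarithm, identify the result as $\pm D_{KL}(\mathbf{P}\|\mathbf{Q})$, and conclude by the nonnegativity of KL-divergence together with its equality condition. The only cosmetic difference is the sign convention (you compute the difference in the opposite order), and your remark about absolute continuity is a reasonable extra point of care that the paper leaves implicit.
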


\begin{proof}
For the discrete version:
\begin{align*}
    &\sum_{\sigma\in\Sigma} P(\sigma)\log\left(\frac{Q(\sigma)}{R(\sigma)}\right) - \sum_{\sigma\in\Sigma} P(\sigma)\log\left(\frac{P(\sigma)}{R(\sigma)}\right)\\
    = & \sum_{\sigma\in\Sigma}P(\sigma)\left(\log\left(\frac{Q(\sigma)}{R(\sigma)}\right)-\log\left(\frac{P(\sigma)}{R(\sigma)}\right)\right)\\
    =& \sum_{\sigma\in\Sigma} P(\sigma)\log\left(\frac{Q(\sigma)}{P(\sigma)}\right)\\
    =& -D_{KL}(\mathbf{P}\|\mathbf{Q})\\
    \le & 0.
\end{align*}
The inequality comes from the non-negativity of the KL-Divergence, and the equality holds only when $\mathbf{P}=\mathbf{Q}$

For the continuous version
\begin{align*}
 &\int_{I} p(x)\log\left(\frac{q(x)}{r(x)}\right){\rm d}x - \int_{I} p(x)\log\left(\frac{p(x)}{r(x)}\right){\rm d}x\\
 =&\int_{I} p(x)\left( \log\left(\frac{q(x)}{r(x)}\right) - \log\left(\frac{p(x)}{r(x)}\right) \right){\rm d}x\\
 = & \int_{I} p(x)\log\left(\frac{q(x)}{p(x)}\right){\rm d}x\\
 = & -D_{KL}(\mathbf{P}\|\mathbf{Q})\\
 \le & 0. 
\end{align*}
The inequality comes from the non-negativity of the KL-Divergence, and the equality holds only when $\mathbf{P}=\mathbf{Q}$ \alev.
\end{proof}

\section{Proper Scoring Rules for Shifted Peer Prediction Mechanism}

\label{sec:ps}
In this section we introduce the property of the scoring rules for shifted peer prediction mechanism in our text. This setting is necessary in the proof for truthfulness of Composite Elicitation Mechanism in Appendix~\ref{sec: proofce}

To introduce the property, we first introduce a lemma of proper scoring rule from \citep{mccarthy1956measures} and \citep{savage1971elicitation}. 
For convenience, for distribution $\mathbf{P}$ and $\mathbf{Q}$ we denote 
\begin{equation*}
    PS(\mathbf{P}; \mathbf{Q}) = \mathbb{E}_{s\sim \mathbf{P}} PS(s,\mathbf{Q}),
\end{equation*}
which is the expected score of an agent when she has belief $\mathbf{P}$ to the distribution and answers $\mathbf{Q}$.
The lemma is as follows:
\begin{lemma}
\label{lemma:ps}
A scoring PS is proper if an only if there exist a convex function $G:\Delta_{\Theta}\mapsto \mathbb{R}$, such that $G(\mathbf{Q})=PS(\mathbf{Q};\mathbf{Q})$, and 
\begin{equation*}
   PS(s,\mathbf{Q}) = G(\mathbf{Q}) + dG(\mathbf{Q}) \cdot (\delta_s-\mathbf{Q}),
\end{equation*}
where $dG(\mathbf{Q})$ is a subgradient of $G$ at $\mathbf{Q}$, and $\delta_s$ is the distribution putting probability 1 on $s$. 
\end{lemma}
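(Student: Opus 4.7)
The plan is to prove both directions by exploiting the fact that, for each fixed $\mathbf{Q}$, the map $\mathbf{P} \mapsto PS(\mathbf{P}; \mathbf{Q}) = \sum_{s} P(s)\, PS(s, \mathbf{Q})$ is affine in $\mathbf{P}$. Let $\ell_{\mathbf{Q}}(\mathbf{P}) := PS(\mathbf{P}; \mathbf{Q})$ denote this affine function, and let $g_{\mathbf{Q}}$ be the vector with coordinates $g_{\mathbf{Q}}(s) = PS(s,\mathbf{Q})$, so that $\ell_{\mathbf{Q}}(\mathbf{P}) = g_{\mathbf{Q}} \cdot \mathbf{P}$ and in particular $\ell_{\mathbf{Q}}(\delta_s) = PS(s, \mathbf{Q})$. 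The key identification is that specifying the scoring rule $PS(\cdot,\mathbf{Q})$ is the same as specifying the affine function $\ell_{\mathbf{Q}}$ on $\Delta_\Theta$.

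For the ($\Leftarrow$) direction, assume a convex $G$ and the claimed representation exist. Taking expectation under $s \sim \mathbf{P}$ and using $\mathbb{E}_{s\sim\mathbf{P}}\delta_s = \mathbf{P}$ gives $PS(\mathbf{P}; \mathbf{Q}) = G(\mathbf{Q}) + dG(\mathbf{Q}) \cdot (\mathbf{P} - \mathbf{Q})$. The right-hand side is a supporting affine function of the convex $G$ at $\mathbf{Q}$, so $G(\mathbf{P}) \ge G(\mathbf{Q}) + dG(\mathbf{Q}) \cdot (\mathbf{P} - \mathbf{Q}) = PS(\mathbf{P}; \mathbf{Q})$. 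Since $G(\mathbf{P}) = PS(\mathbf{P}; \mathbf{P})$ by hypothesis, this is exactly the definition of properness.

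For the ($\Rightarrow$) direction, define $G(\mathbf{Q}) := PS(\mathbf{Q};\mathbf{Q})$. Properness says $G(\mathbf{P}) = PS(\mathbf{P};\mathbf{P}) \ge PS(\mathbf{P}; \mathbf{Q}) = \ell_{\mathbf{Q}}(\mathbf{P})$ for every $\mathbf{P}$, with equality at $\mathbf{P} = \mathbf{Q}$. Thus $G = \sup_{\mathbf{Q}} \ell_{\mathbf{Q}}$, a pointwise supremum of affine functions, which is automatically convex; moreover the affine function $\ell_{\mathbf{Q}}$ supports $G$ at the point $\mathbf{Q}$. By definition of subgradient, the linear part $g_{\mathbf{Q}}$ of $\ell_{\mathbf{Q}}$ is then a subgradient of $G$ at $\mathbf{Q}$, which we take as $dG(\mathbf{Q})$. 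Substituting $\mathbf{P} = \delta_s$ into $\ell_{\mathbf{Q}}(\mathbf{P}) = G(\mathbf{Q}) + dG(\mathbf{Q}) \cdot (\mathbf{P} - \mathbf{Q})$ yields the required formula $PS(s,\mathbf{Q}) = G(\mathbf{Q}) + dG(\mathbf{Q}) \cdot (\delta_s - \mathbf{Q})$.

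The main obstacle is a minor subtlety rather than a deep one: the simplex $\Delta_\Theta$ lies in the affine hyperplane $\{\mathbf{P} : \sum_s P(s) = 1\}$, so subgradients of $G$ are only determined up to adding a constant to every coordinate. I would handle this by explicitly noting that any linear functional that agrees with $\ell_{\mathbf{Q}} - G(\mathbf{Q})$ on the tangent space of the simplex and extends it arbitrarily to $\bbR^{|\Theta|}$ is a valid subgradient in the usual sense, and the vector $g_{\mathbf{Q}}$ with coordinates $PS(s,\mathbf{Q})$ is one such extension. The continuous-signal case is proved identically, replacing sums by integrals and $\delta_s$ by the corresponding point mass, with affineness of expectation being the only ingredient used.
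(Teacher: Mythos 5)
Your proof is correct, but there is nothing in the paper to compare it against: the paper states this lemma as a known result imported from McCarthy (1956) and Savage (1971) and gives no proof of its own. What you have written is a clean, self-contained rendering of the standard argument. The reverse direction correctly passes from the pointwise representation to its expectation, applies the subgradient inequality for the convex $G$, and identifies the result with the properness condition. The forward direction correctly observes that properness makes $G(\mathbf{Q})=PS(\mathbf{Q};\mathbf{Q})$ the pointwise supremum of the affine functions $\ell_{\mathbf{Q}}$, hence convex, with $\ell_{\mathbf{Q}}$ supporting $G$ at $\mathbf{Q}$, so that the score vector $\bigl(PS(s,\mathbf{Q})\bigr)_s$ itself serves as the subgradient $dG(\mathbf{Q})$; evaluating the supporting affine function at $\delta_s$ recovers the stated formula. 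Your remark that subgradients on the simplex are only determined up to an additive constant on all coordinates is the right technical caveat, and your resolution (exhibiting the score vector as one valid extension) is adequate since the lemma only asserts the existence of some subgradient with the stated property. The only thing worth flagging is that for unbounded rules such as the log score one should implicitly restrict to $\mathbf{Q}$ where $PS(\mathbf{Q};\mathbf{Q})$ is finite; the paper glosses over this as well, and it does not affect how the lemma is used downstream (Assumption~\ref{asm:ps} already presupposes enough regularity of $G^l$).
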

This lemma give an equivalent form of a proper scoring rule with a convex function and its subgradient. 

Now we give the property in the form of assumption. 
\begin{asm}[Proper Scoring Rule Assumption]
\label{asm:ps}
For every proper scoring rule $PS^l$ in the Composite Elicitation mechanism (more specifically, in$R_{sppm}^{l}$ for all $l\in[\noninum]$), let $G^l$ to be its correlated convex function.
The Hessian Matrix of $G^l(\mathbf{Q}^l)$ exists. Moreover, $\frac{\partial^2 G^l(\mathbf{Q}^l)}{\partial q^l_i\partial q^l_j} \neq 0$ if and only if $i=j$ for all $i,j=1,2,\cdots,\nonisetnum{l}$. 
\end{asm}
Note that log scoring rule and quadratic scoring rule, both satisfy this assumption. 

\section{Proof of Truthfulness Proposition of Extended Common Ground Mechanism}
\label{proofecg}
\begin{prop}\label{thm:ecgm}
The Extended Common Ground mechanism is strictly truthful.
\end{prop}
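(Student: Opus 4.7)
The plan is to reduce ECGM to the two-agent CGM case by linearity of expectation, then deploy a Gibbs'-inequality argument to show that each per-peer inner expectation is uniquely maximized at the truthful report. Fix an agent $r$ whose true expert signal is $\expele_r$, write $p_r^* = \exppost(\expele_r)$, and consider an arbitrary alternative report $\hat{s}_r \in \expset$ with $p_r = \exppost(\hat{s}_r)$. Because ECGM draws the peer $j$ uniformly from $\exprepset \setminus \{r\}$ and the CGM payment depends on the reports only through $p_r$ and $p_j = \exppost(\expSig_j)$, it suffices to show that the per-peer inner expectation
\[
\mathbb{E}_{\expSig_j \mid \expSig_r = \expele_r}\!\left[\log\!\left(\frac{p_r\, \exppost(\expSig_j)}{P}+\frac{(1-p_r)(1-\exppost(\expSig_j))}{1-P}\right)\right]
\]
is uniquely maximized at $\hat{s}_r = \expele_r$; the full expected payment is a uniform average of such terms and hence inherits the same strict maximizer.

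Next I would apply Bayes' rule: with $f^0 = P f_1^0 + (1-P) f_0^0$ denoting the marginal density of an expert signal, the identities $\exppost(s) = P f_1^0(s)/f^0(s)$ and $1-\exppost(s) = (1-P) f_0^0(s)/f^0(s)$ collapse the argument of the log into the clean form
\[
\frac{p_r f_1^0(\expSig_j) + (1-p_r) f_0^0(\expSig_j)}{f^0(\expSig_j)}.
\]
Under Assumption~\ref{asm:ci}, the density of $\expSig_j$ conditional on $\expSig_r = \expele_r$ is exactly $p_r^* f_1^0(\cdot) + (1-p_r^*) f_0^0(\cdot)$. After subtracting the $\hat{s}_r$-independent term $\mathbb{E}[\log f^0(\expSig_j) \mid \expele_r]$, maximizing the inner expectation is equivalent to maximizing the log-score
\[
\int_{\expset} \bigl[p_r^* f_1^0(s)+(1-p_r^*) f_0^0(s)\bigr] \log\!\bigl[p_r f_1^0(s)+(1-p_r) f_0^0(s)\bigr]\, ds,
\]
which by the continuous version of Lemma~\ref{lem:lsr} (Gibbs' inequality) is uniquely maximized when the two mixture densities coincide almost everywhere. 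If $f_1^0 = f_0^0$ a.e., then $\exppost(s) \equiv P$, contradicting Assumption~\ref{asm:ip}; hence $f_1^0 \ne f_0^0$ on a positive-measure set, and the mixture equality forces $p_r = p_r^*$.

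To finish, a short Bayes-rule step converts Assumption~\ref{asm:ip} into injectivity of $s \mapsto \exppost(s)$: in the binary-$Y$ setting, two signals inducing the same posterior on $Y$ also induce the same predictive mixture, so distinct posteriors on $\expSig_j$ force distinct values of $\exppost$. Thus $p_r = p_r^*$ forces $\hat{s}_r = \expele_r$, establishing strict truthfulness. The step I expect to require the most care is the Bayes-rule repackaging that turns the ECGM log-payment into a mixture log-score, since it is what makes Gibbs' inequality directly applicable; after that, the two-layer uniqueness argument (first $p_r = p_r^*$, then $\hat{s}_r = \expele_r$) is elementary but worth spelling out carefully.
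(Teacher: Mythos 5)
Your proposal is correct and follows essentially the same route as the paper's proof: rewrite the ECGM log-payment via Bayes' rule as a ratio of mixture densities, apply the continuous version of Lemma~\ref{lem:lsr} (Gibbs' inequality) to show the truthful mixture uniquely maximizes the expected log-score, and invoke Assumption~\ref{asm:ip} for strictness. Your extra intermediate step (first forcing $p_r=p_r^*$, then deducing injectivity of $s\mapsto\exppost(s)$) is just an unpacked version of the paper's direct appeal to Assumption~\ref{asm:ip}, and matches the argument the paper records separately as Lemma~\ref{lemma:exp-mono}.
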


\begin{proof}
To prove the truthfulness, we show that for any agent $i$, when she expects her peer $j$ to be truthful, her expected payment will be maximized when she give a truthful report $\exprep_i=\expsig_i$. 
Consider $i$'s expected payment when she receive $\expsig_i$ and report $\exprep_i$. For convenience, we denote $\exppost(\expsig_i)=\exppost_i^*$, $\exppost(\exprep_i)=\exppost_i$ and $\exppost(\expele_j)=\exppost_j$. Then the extended common ground mechanism payment of $i$ can be written as 
\begin{equation*}
    R_{ecgm}(\exprep_i,\expele_j) = \log \left( \frac{(1-\exppost_i)(1-\exppost_j^*)}{1-P}+\frac{\exppost_i\exppost_j}{P} \right)
\end{equation*}

Then we consider the case that $j$ receive different private signals $\exprep_j=\expele_j\in I$. 
$\pdfs[\expele_j\mid \expsig_i]$ is the PDF of the post distribution of $\expele_j$ conditioning on $\expsig_i$, and $\pdfs[\expele_j]$ is the PDF of the prior distribution of $\expele_j$. $i$'s expected payment is as follows, and we use Bayesian formula (third to fourth line) to change it form. 
\begin{align*}
    \mathbb{E}[R_{ecgm}(\exprep_i,\expele_j)]=& \int_{I} \pdfs[\expele_j\mid \expsig_i] \log \left( \frac{(1-\exppost_i)(1-\exppost_j^*)}{1-P}+\frac{\exppost_i\exppost_j^*}{P} \right){\rm d}\expele_j\\
    = &\int_{I} \pdfs[\expele_j\mid \expsig_i] \log \left( \frac{(1-\exppost_i)\Pr[Y=0\mid \expele_j]}{\Pr[Y=0]}+\frac{\exppost_i\Pr[Y=1\mid \expele_j]}{\Pr[Y=1]} \right){\rm d}\expele_j\\
    = & \int_{I} \pdfs[\expele_j\mid \expsig_i] \log \left( \frac{\left(\frac{(1-\exppost_i)\Pr[Y=0\mid \expele_j]}{\Pr[Y=0]}+\frac{\exppost_i\Pr[Y=1\mid \expele_j]}{\Pr[Y=1]}\right)\pdfs[\expele_j]}{\pdfs[\expele_j]} \right){\rm d}\expele_j\\
    = &\int_{I} \pdfs[\expele_j\mid \expsig_i] \log \left( \frac{(1-\exppost_i)\cdot \pdfs[\expele_j\mid Y=0]+\exppost_i\cdot \pdfs[\expele_j\mid Y=1]}{\pdfs[\expele_j]} \right){\rm d}\expele_j\\
    = & \int_{I}\left((1-\exppost_i^*)\cdot \pdfs[\expele_j\mid Y=0]+\exppost_i^*\cdot \pdfs[\expele_j\mid Y=1]\right)\\
    &\log \left( \frac{(1-\exppost_i)\cdot \pdfs[\expele_j\mid Y=0]+\exppost_i\cdot \pdfs[\expele_j\mid Y=1]}{\pdfs[\expele_j]}\right){\rm d}\expele_j. 
\end{align*}
Then we try to rewrite the formula into the following three functions:
\begin{itemize}
    \item $f_1^*(\expele_j)=(1-\exppost_i^*)\cdot \pdfs[\expele_j\mid Y=0]+\exppost_i^*\cdot \pdfs[\expele_j\mid Y=1]$
    \item $f_2^*(\expele_j)=(1-\exppost_i)\cdot \pdfs[\expele_j\mid Y=0]+\exppost_i\cdot \pdfs[\expele_j\mid Y=1]$
    \item $f_3^*(\expele_j)=\pdfs[\expele_j]$
\end{itemize}
The the formula can be rewritten as
$$\int_{I} f_1^*(\expele_j)\log \left( \frac{f_2^*(\expele_j)}{f_3^*(\expele_j)} \right){\rm d} \expele_j.$$

Note that both $f_1^*(\expele_j)$ and  $f_2^*(\expele_j)$ can be regard as continuous distribution on $I$, and $f_3^*(\expele_j) = \pdfs[\expele_j]$ is a continuous distribution of $I$ Therefore, by the continuous version of Lemma \ref{lem:lsr}, we have
\begin{align*}
    & \int_{I} f_1^*(\expele_j)\log \left( \frac{f_2^*(\expele_j)}{f_3^*(\expele_j)} \right){\rm d} \expele_j\\
    \le & \int_{I} f_1^*(\expele_j)\log \left( \frac{f_1^*(\expele_j)}{f_3^*(\expele_j)} \right){\rm d} \expele_j\\
    = & \int_{I}\pdfs[\expele_j\mid \expsig_i]\log \left( \frac{(1-{p}_i)(1-\exppost_j^*)}{1-P}+\frac{{p}_i\exppost_j^*}{P}\right){\rm d}\expele_j,
\end{align*}

which is the expected payment of agent $i$ when she reports truthfully. Therefore, truthful reporting is an equilibrium in Extended Common Ground mechanism.

Then we consider the strictness of the mechanism. Notice that $f_1^*(\expele_j)$ and $f_2^*(\expele_j)$ are posterior distribution of $\expele_j$ conditioning on $s_i$ and $\hat{s}_i$. Therefore, with Assumption~\ref{asm:ip}, $f_1^*(\expele_j)=f_2^*(\expele_j)$ \alev only if $\exprep_i=\expsig_i$. Therefore, according to the strictness of Lemma~\ref{lem:lsr} the equality holds if and only if $\exprep_i=\expsig_i$, and the strictness holds. 

\end{proof}

\section{Formal Version of Informative Prior}
\label{sec:ip}
In this section, we give a more formal version of Assumption~\ref{asm:ip}, i.e informative prior.

\noindent\textbf{Assumption \ref{asm:ip}. }(Informative Prior) \emph{Consider an agent $i$. Here we denote the random variable of $i$'s private signal as $S_i$, and  two possible realization of  $S_i$ as $s_i$ and ${s_i}'$, no matter what $i$'s signal space is. }

\emph{
(Expert Version) For any expert $j$, and for any $s_i\neq{s_i}'$, let $q(\expsig_j\mid s_i)$ and $q(\expsig_j\mid {s_i}')$ be PDF of the posterior distribution of $j$'s private signal conditional on $i$'s private signal being $s_i$ and ${s_i}'$ separately. Then $q(\expsig_j\mid s_i)$ does not equal to $q(\expsig_j\mid {s_i}')$ \alev (i.e. the set of $\expsig_j$ such that $q(\expsig_j\mid s_i)=q(\expsig_j\mid {s_i}')$ is not a zero-measure set}. 

\emph{
(Non-expert Version) For any group $l$ non-expert $j$, and for any $s_i\neq{s_i}'$, there exist at least one $k\in [\nonisetnum{l}]$ such that $\Pr[\noniSig{l}_j=\noniele{l}_k\mid S_i=s_i] \neq \Pr[\noniSig{l}_j=\noniele{l}_k\mid S_i={s_i}']$.}

\section{Proof Main Theorem of Composite Elicitation Mechanism }
\label{sec: proofce}
In this section we give the proof of the strict truthfulness of the composite elicitation mechanism. We first give the analyses of the expected payment of different agents, which appear in the definition of the mechanism and are necessary in the proof. Then we give the proof of the theorem. 

First we introduce notations for the proof. 

For group $l$ non-experts, we give a specific notation for signal distributions:  $\nonidist{l}_i(k) = \Pr[\noniSig{l}=\noniele{l}_k\mid Y=i], i=\{0,1\}$.

Then we define notations about proper scoring rule. We use $\PSI{l}$ for the proper scoring rule of $\noniset{l}$. 

For prior and posterior distribution of agent signals, we define as follows. For a group $l$ non-expert $r$, let $\SSIdist{l} = \left(\Pr[\noniSig{l}=\nonisig{l}] \right)_{\nonisig{l}\in\noniset{l}}$ be the prior distribution of $\noniSig{l}$. 
And let $\SSIdist{l}(\nonisig{l}_r) = \left(\Pr[\noniSig{l}=\nonisig{l}\mid \noniSig{l}_r=\nonisig{l}_r] \right)_{\nonisig{l}\in\noniset{l}}$ be $r$'s posterior prediction of another agent's private signal $\noniSig{l}$ conditioning on her private signal $\noniSig{l}_r=\nonisig{l}_r$

\subsection{Agent's Expected Payment}
\label{subsec: exppay}
In the section we give the analyses of the expected payment of the agents in the composite elicitation mechanism. Since our goal of \emph{\newtruthfulness} consider the case where all experts give truthful report in $\expset$, and for all $l\in[\noninum]$ all group $l$ non-experts give truthful report in $\noniset{l}$, an agent will expected her peer to be a certain expertise of agent based on her report space, and expect her peer to be truthful. We first give the explicit form of these expected payment, and then give the analysis. 
Here $\pdfs[\expsig\mid \expSig_r=\expsig_r]$ is the PDF of $\expSig$ conditioning on $\expSig_r=\expsig_r$.
\begin{align}
    \label{eq:CCexp}
    \exppay_{exp}(\expsig_r,\exprep_r)=&\int_{\expset} \pdfs[\expsig\mid \expSig_r=\expsig_r]\cdot R_{ecgm}(\exprep_r, \expsig){\rm d}\expsig. \\
    \label{eq:SSexp}
    \nonipay{l}_{exp}(\expsig_r, \nonirep{l}_r)=&\sum_{j=1}^{\nonisetnum{l}}\Pr[\noniSig{l}=\noniele{l}_j\mid \expSig_r=\expsig_r]\cdot R_{sppm}^{l}(\nonirep{l}_r, \noniele{l}_j).\\
    \label{eq:CCsemi}
    \exppay_{non-l}(\nonisig{l}_r,\exprep_r)=&\int_{\expset} \pdfs[\expsig\mid \noniSig{l}_r=\nonisig{l}_r]\cdot R_{ecgm}(\exprep_r, \expsig){\rm d}\expsig. \\
    \label{eq:SSsemi}
    \nonipay{l}_{non-l}(\nonisig{l}_r, \nonirep{l}_r)=&\sum_{j=1}^{\nonisetnum{l}}\Pr[\noniSig{l}=\noniele{l}_j\mid \noniSig{l}_r=\nonisig{l}_r]\cdot R_{sppm}^{l}(\nonirep{l}_r, \noniele{l}_j). \\
    \label{eq:DDsemi}
    \nonipay{h}_{non-l}(\nonisig{l}_r, \nonirep{h}_r)=&\sum_{j=1}^{\nonisetnum{h}} \Pr[\noniSig{k}=\noniele{h}_j\mid \noniSig{l}_r=\nonisig{l}_r]\cdot R_{sppm}^{h}(\nonirep{h}_r, \noniele{h}_j).
\end{align}

$\exppay_{exp}$ is the expected payment of expert $r$ when she receives private signal $\expSig_r=\expsig_r$, and gives continuous report $\exprep_r$. The other expected payments are defined in the similar way. Discrete expected payments are defined without linear transformations. 
 
\subsubsection{Expert's Expected Payment in Continuous Mechanism}
We'll start from the expert's expected payment in continuous mechanism (i.e. Extended Common Ground mechanism). Consider an expert $r$ whose private signal is $\expsig_r$ and report $\exprep_r$. Her peer is an expert, has private signal $\expsig$ ,and report $\exprep=\expsig$. Then the payment of agent $r$ is:
\begin{equation*}
    R_{ecgm}(\exprep_r,\exprep) = \log\left(\frac{\exppost(\exprep_r)\cdot \exppost(\expele)}{P}+\frac{(1-\exppost(\exprep_r))(1-\exppost(\expele))}{1-P}\right).
\end{equation*}

Now we take expert $r$'s perspective. While $r$ does not know her peer's private signal $\expele$, she may calculate a posterior distribution of $\expele$ based on her own private signal $\expsig_r$, i.e.
\begin{align*}
    \pdfs[\expele\mid \expSig_r=\expsig_r] = & f_0(\expele)\Pr[Y=0\mid \expSig_r=\expsig_r]+f_1(\expele)\Pr[Y=1\mid \expSig_r=\expsig_r]\\
    = & f_0(\expele)(1-\exppost(\expsig_r))+f_1(\expele)\cdot\exppost(\expsig_r),
\end{align*}

Moreover, $\exppost(\expele)$ can be written as 
\begin{equation*}
     \exppost(\expele) =\frac{P\cdot f_1(\expele)}{(1-P)f_0(\expele)+P\cdot f_1(\expele)}
\end{equation*}
Therefore, considering  the expected payment of  expert $r$ in the continuous mechanism can be written as 
\begin{align*}
 &\exppay_{exp}(\expsig_r,\exprep_r)\\
 = & \int_{\expset} \pdfs[\expele\mid \expSig_r=\expsig_r] \log\left(\frac{\exppost(\exprep_r)\cdot \exppost(\expele)}{P}+\frac{(1-\exppost(\exprep_r))(1-\exppost(\expele))}{1-P}\right) {\rm d}\expele\\
    = & \int_{\expset} (f_0(\expele)(1-\exppost(\expsig_r))+f_1(s)\cdot\exppost(\expsig_r)) \log \left(\frac{\exppost(\exprep_r)\cdot  f_1(\expele)+(1-\exppost(\exprep_r))f_0(\expsig)}{(1-P)f_0(\expele)+P \cdot f_1(\expele)}  \right){\rm d}\expele.
\end{align*}

\subsubsection{Expert's Expected Payment in Discrete Mechanism}
Now we consider expert $r$'s expected payment in discrete mechanism (i.e. Shifted Peer Prediction mechanism). We will consider group $l$ non-expert case w.l.o.g. The expert has private signal $\expsig_r$ and report $\nonirep{l}_r$. The peer, a group $l$ non-expert, has private signal $\nonisig{l}$ and truthful report ($\nonirep{l}=\nonisig{l}$).

The payment of $r$ is 
\begin{equation*}
    R_{sppm}^{l}(\nonirep{l}_r, \nonirep{l})=\PSI{l}(\nonirep{l}, \SSIdist{l} (\nonirep{l}_r)) - \PSI{l}(\nonirep{l}, \SSIdist{l})
\end{equation*}

Similarly, we can compute $r$'s posterior distribution towards $\nonisig{l}=\noniele{l}_j$:

\begin{align*}
    \Pr[\noniSig{l}=\noniele{l}_j\mid \expSig_r=\expsig_r]=&\Pr[\noniSig{l}=\noniele{l}_j\mid Y=0]\cdot\Pr[Y=0\mid \expSig_r=\expsig_r]\\
    +&\Pr[\noniSig{l}=\noniele{l}_j\mid Y=1]\cdot\Pr[Y=1\mid \expSig_r=\expsig_r]\\
    = & \nonidist{l}_0(j)\cdot(1-\exppost(\expsig_r))+\nonidist{l}_1(j)\cdot\exppost(\expsig_r). 
\end{align*}

The expected payment of expert $r$ in the group l non-expert mechanism can be written as 
\begin{align*}
    \nonipay{l}_{exp}(\expsig_r,\nonirep{l}_r) = &\sum_{j=1}^{\nonisetnum{l}}  \Pr[\noniSig{l}=\noniele{l}_j\mid \expSig_r=\expsig_r] \left(\PSI{l}(\noniele{l}_j, \SSIdist{l} (\nonirep{l}_r)) - \PSI{l}(\noniele{l}_j, \SSIdist{l})\right)\\
    = &\sum_{j=1}^{\nonisetnum{l}}  \left(\nonidist{l}_0(j)\cdot(1-\exppost(\expsig_r))+\nonidist{l}_1(j)\cdot\exppost(\expsig_r)\right) \left(\PSI{l}(\noniele{l}_j, \SSIdist{l} (\nonirep{l}_r)) - \PSI{l}(\noniele{l}_j, \SSIdist{l})\right).
\end{align*}

\subsubsection{Non-expert's Expected Payment}
Now we consider a group $l$ (w.l.o.g) non-expert $r$'s expected payment in each mechanism. we assume that $\nonisig{l}_r=\noniele{l}_j$. Her peer is an expert with private signal $\expsig$, and report $\exprep=\expsig$. The payment of $r$ is:

\begin{equation*}
    R_{ecgm}(\exprep_r, \exprep) = \log \left(\frac{\exppost(\exprep_r)\cdot \exppost(\expele)}{P}+\frac{(1-\exppost(\exprep_r))(1-\exppost(\expele))}{1-P}\right).
\end{equation*}
And $r$'s posterior distribution of $\expSig$ is

\begin{align*}
    \pdfs[\expele\mid \noniSig{l}_r=\noniele{l}_j]=&f_0(\expele)\Pr[Y=0\mid \noniSig{l}_r=\noniele{l}_j] + f_1(\expele)\Pr[Y=1\mid \noniSig{l}_r=\noniele{l}_j]\\
    = & f_0(\expele)(1-\nonipost(\noniele{l}_j))+f_1(\expele)\cdot \nonipost(\noniele{l}_j). 
\end{align*}
Therefore, the expected payment of group $l$ non-expert $r$ in the continuous mechanism can be written as 
\begin{align*}
     &\exppay_{non-l}(\nonisig{l}_r=\noniele{l}_j, \exprep_r) \\
    = & \int_{\expset}  \pdfs[\expele\mid \noniSig{l}_r=\noniele{l}_j] \log \left(\frac{\exppost(\exprep_r)\cdot \exppost(\expele)}{P}+\frac{(1-\exppost(\exprep_r))(1-\exppost(\expele))}{1-P}\right) {\rm d}\expele\\
    = & \int_{\expset} \left(f_0(\expele)(1-\nonipost(\noniele{l}_j))+f_1(\expele)\cdot \nonipost(\noniele{l}_j)\right) \left(\frac{\exppost(\exprep_r)\cdot  f_1(\expele)+(1-\exppost(\exprep_r))f_0(\expele)}{(1-P)f_0(\expele)+P \cdot f_1(\expele)}  \right){\rm d}\expele.
\end{align*}

Then we consider $r$'s expected payment in group $l$ non-expert mechanism (same as her private information space). Assume $r$'s report to be $\nonirep{l}_r$. Her peer is also a group $l$ non-expert, and has a private signal $\nonisig{l}$, and reports $\nonirep{l}=\nonisig{l}$. The payment of $r$ is:

\begin{equation*}
    R_{sppm}^{l}(\nonirep{l}_r=\noniele{l}_j, \nonirep{l}) = \PSI{l}(\nonisig{l}, \SSIdist{l} (\nonirep{l}_r=\noniele{l}_j)) - \PSI{l}(\nonisig{l}, \SSIdist{l})
\end{equation*}
$r$'s posterior distribution of $\noniSig{l}$ is 

\begin{align*}
    \Pr[\noniSig{l}=\noniele{l}_k\mid \noniSig{l}_r=\noniele{l}_j]=&\Pr[\noniSig{l}=\noniele{l}_k\mid Y=0]\cdot\Pr[Y=0\mid \noniSig{l}_r=\noniele{l}_j]\\
    +&\Pr[\noniSig{l}=\noniele{l}_k\mid Y=1]\cdot\Pr[Y=1\mid \noniSig{l}_r=\noniele{l}_j]\\
    = & \nonidist{l}_0(k)\cdot(1-\nonipost(\noniele{l}_j))+\nonidist{l}_1(k)\cdot\nonipost(\noniele{l}_j). 
\end{align*}

The expected payment of group 1 non-expert $r$ in the group 1 non-expert mechanism can be written as 
\begin{align*}
    &\nonipay{l}_{non-1}(\nonisig{l}_r=\noniele{l}_j,\nonirep{l}_r) \\
    = &\sum_{k=1}^{\nonisetnum{l}}  \Pr[\noniSig{l}=\noniele{l}_k\mid \noniSig{l}_r=\noniele{l}_j] \left(\PSI{l}(\noniele{l}_k, \SSIdist{l} (\nonirep{l}_r)) - \PSI{l}(\noniele{l}_k, \SSIdist{l}))\right)\\
    = &\sum_{k=1}^{\nonisetnum{l}}  \left(\nonidist{l}_0(k)\cdot(1-\nonipost(\noniele{l}_j))+\nonidist{l}_1(k)\cdot\nonipost(\noniele{l}_j)\right) \left(\PSI{l}(\noniele{l}_k, \SSIdist{l} (\nonirep{l}_r)) - \PSI{l}(\noniele{l}_k, \SSIdist{l}))\right).
\end{align*}

Similarly, the expected payment of $r$ in group 2 non-expert mechanism (with report $\nonirep{h}_r)$ can be written as 
\begin{align*}
    &\nonipay{h}_{non-1}(\nonisig{l}_r=\noniele{l}_j,\nonirep{h}_r) \\
    = &\sum_{k=1}^{\nonisetnum{h}}  \Pr[\noniSig{h}=\noniele{h}_k\mid \noniSig{l}_r=\noniele{l}_j] \left(\PSI{h}(\noniele{h}_k, \SSIdist{h} (\nonirep{h}_r)) - \PSI{h}(\noniele{h}_k, \SSIdist{h}))\right)\\
    = &\sum_{k=1}^{\nonisetnum{h}}  \left(\nonidist{h}_0(k)\cdot(1-\nonipost(\noniele{l}_j))+\nonidist{h}_1(k)\cdot\nonipost(\noniele{l}_j)\right) \left(\PSI{h}(\noniele{h}_k, \SSIdist{h} (\nonirep{h}_r)) - \PSI{h}(\noniele{h}_k, \SSIdist{h}))\right).
\end{align*}

\subsection{Proof of Theorem \ref{thm:composite}}
\begin{thmbis}{thm:composite}
Given Assumption~\ref{asm:ci} to \ref{asm:ps}, the Composite Elicitation mechanism is \emph{\newtruthful}. 
\end{thmbis}
Note that in the appendix we add Assumption~\ref{asm:ps} for a wider range of proper scoring rule for SPPM. 

\subsubsection{Proof Sketch}
To prove \newtruthfulness, it suffices prove the following two types of truthfulness for every agent, given that all other agents report truthfully.

 \textbf{1.~Interior truthfulness.} If the agent's report is in the same $\Theta^l$ as her private signal, her expected payment is uniquely maximized when she reports truthfully.  

\textbf{2.~Exterior Truthfulness.} If the agent's report is in a different $\Theta^l$, her expected payment will not exceed the payment under truthful report (and equal only for signals with same posterior distribution).

The proof proceeds with giving explicit  formulas of the  coefficients $\{\noniA{l}_k, \noniB{l}_k\}$, then proving that  both interior and exterior truthfulness hold for all agents.
This will be achieved by taking advantage of the second part of Assumption~\ref{asm:wise} to construct some special cases of signal, with which we bridge the gap between experts and non-experts.  For any $\noniele{l}_k\in \Theta^l$, let $\expelek{l}{k}\in\Theta^0$  denote a signal that is guaranteed by Assumption~\ref{asm:wise} (2), that is, $\exppost(\expelek{l}{k})=\nonipost(\noniele{l}_k)$, which is used to construct the linear coefficients. The proof relies on calculating and comparing expected payments of different groups of agents. While it is hard to directly compare expected payment of different groups, $\expelek{l}{k}$ provides a special case, where different expected payments will be equal or comparable to each other. In this way, we can convert all different expected payments into one group and compare them.  
.

\textbf{Note: For convenience, In the following writings, we denote $\exppost(\expsig_r)$ as $\exppost_r$ if the private signal is not emphasized. }

\subsubsection{Explicit form of Linear Transformation Coefficients}
We first give the explicit form of the linear transformation coefficients. In the following proof, we'll show that they lead to a \newtruthful{}mechanism. 

For $\noniset{l}$-reporters, 
\begin{enumerate}
    \item If $\nonidist{l}_0(k)\neq\nonidist{l}_1(k)$, 
    \begin{equation}
        \label{eq:noniA}
        \noniA{l}_k=\left.\frac{\frac{{\rm d}\exppay_{exp}(\expsig_r, \expsig_r)}{{\rm d}\exppost(\expsig_r)}}{\frac{\partial \nonipay{l}_{exp}(\expsig_r,\noniele{l}_k)}{\partial \exppost(\expsig_r)}}\right|_{\expsig_r = \expelek{l}{k}}
    \end{equation}
    \begin{equation}
    \label{eq:noniB}
        \noniB{l}_k = \exppay_{non-l}(\noniele{l}_k,\expelek{l}{k}) - \noniA{l}_k\cdot \nonipay{l}_{non-l}(\noniele{l}_k,\noniele{l}_k)
    \end{equation}
    where $\expelek{l}{k}$ satisfies $\exppost(\expelek{l}{k})=\nonipost(\noniele{l}_k)$. 
    \item If $\nonidist{l}_0(k)=\nonidist{l}_1(k)$, $\noniA{l}_k=\noniB{l}_k=0$. 
\end{enumerate}

The explicit form here involve the derivatives of expected payments. However, these derivatives have explicit forms which involves only integration. We give the explicit form of derivative in Equation~\ref{eq:exp-partial} and~\ref{eq:noni-partial}.

\subsubsection{Helping Lemmas}
In this section we introduce and prove several lemmas that are necessary in the proof. 

First we show that given the assumption of informative prior, expert's posterior PDF $\exppost(\expele)$ is strict monotone to $\expele$. 

\begin{lemma}
\label{lemma:exp-mono}
Given the Informative Prior (Assumption~\ref{asm:ip}), expert's posterior distribution to ground truth $\exppost(\expele)$ is strictly monotone to $\expele$. 
\end{lemma}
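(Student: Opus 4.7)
The plan is to reduce strict monotonicity of $\exppost$ on $\expset$ to injectivity, and to derive injectivity from Assumptions~\ref{asm:ci} and~\ref{asm:ip}. The key observation is that by Bayes' rule combined with conditional independence of the two experts' signals given $Y$ (Assumption~\ref{asm:ci}), the posterior PDF over another expert's signal value $\expele$ conditional on agent $r$'s own signal $\expsig_r$ can be written as
\begin{equation*}
\pdfs[\expele \mid \expSig_r = \expsig_r]
 = \bigl(1 - \exppost(\expsig_r)\bigr)\, f_0^0(\expele) + \exppost(\expsig_r)\, f_1^0(\expele),
\end{equation*}
which depends on the conditioning value $\expsig_r$ only through the scalar $\exppost(\expsig_r) \in [0,1]$.

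From here I would argue by contradiction. Suppose two distinct signals $\expele_1, \expele_2 \in \expset$ satisfied $\exppost(\expele_1) = \exppost(\expele_2)$. Then the displayed identity would force the two conditional PDFs $\pdfs[\,\cdot \mid \expSig_r = \expele_1]$ and $\pdfs[\,\cdot \mid \expSig_r = \expele_2]$ to agree pointwise on $\expset$, and hence to agree almost everywhere. This directly contradicts the expert version of Assumption~\ref{asm:ip}, which demands that these posteriors differ on a set of positive measure whenever $\expele_1 \neq \expele_2$. Hence $\exppost$ is injective on the interval $\expset$.

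To upgrade injectivity to strict monotonicity, I would appeal to continuity: by Assumption~\ref{asm:wise}(1), the denominator in
$\exppost(\expele) = P f_1^0(\expele) / \bigl((1-P) f_0^0(\expele) + P f_1^0(\expele)\bigr)$
is strictly positive, so $\exppost$ is continuous on $\expset$ whenever $f_0^0, f_1^0$ are continuous, and any continuous injection from an interval into $\mathbb{R}$ is strictly monotone. I expect the delicate point to be continuity of the expert densities, which is not stated as an explicit hypothesis. If continuity cannot be assumed, the cleanest conclusion provable from Assumptions~\ref{asm:ci}, \ref{asm:ip}, and~\ref{asm:wise}(1) alone is injectivity of $\exppost$, which is in fact exactly what is needed downstream, where $\exppost(\expsig)$ plays the role of a scalar reparametrization on the horizontal axis of Figure~\ref{fig:AB}.
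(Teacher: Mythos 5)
Your proposal is correct and follows essentially the same route as the paper: both write the conditional density of a peer expert's signal as the mixture $(1-\exppost(\expsig_r))\,f_0^0(\expele) + \exppost(\expsig_r)\,f_1^0(\expele)$, so that two distinct signals with equal posteriors would induce identical posterior densities over the peer's signal, contradicting the expert version of Assumption~\ref{asm:ip}; this is exactly the paper's argument by contraposition. Your remark about the step from injectivity to strict monotonicity is well taken --- the paper's proof silently makes the same jump (it asserts that failure of strict monotonicity produces two distinct signals with equal posterior, which requires continuity of $\exppost$ or comparable regularity not stated as a hypothesis) --- so your fallback position that only injectivity is provable, and that injectivity is what is actually used downstream, is accurate rather than a defect of your write-up.
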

\begin{proof}
We turn to prove that if $\exppost(\expele)$ is not strictly monotone, then the Informative Prior does not hold. 
Consider two agents $i$ and $j$ with private signals $\expsig_i,\expsig_j\in\expset$ and satisfy $\expsig_i\neq \expsig_j$, and $\exppost(\expsig_i)=\exppost(\expsig_j)$. Since $\exppost(\expele)$ is not strictly monotone, such $\expsig_i$ and $\expsig_j$ exists. 
Then we consider the posterior distribution of some expert's private signal $\expele$ from $i$ and $j$ separately. We write their PDFs as following.
\begin{align*}
    \pdfs[\expele\mid \expSig_i=\expsig_i]=&(1-\exppost(\expsig_i))\cdot f_0(\expele) + \exppost(\expsig_i)\cdot f_1(\expele). \\
    \pdfs[\expele\mid \expSig_i=\expsig_j]=&(1-\exppost(\expsig_j))\cdot f_0(\expele) + \exppost(\expsig_j)\cdot f_1(\expele).
\end{align*}
Note that since $\exppost(\expsig_i)=\exppost(\expsig_j)$, we have 
\begin{equation*}
    \pdfs[\expele\mid \expSig_i=\expsig_i] = \pdfs[\expele\mid \expSig_i=\expsig_j]
\end{equation*}
for every $\expele\in\expset$. Therefore, the two posterior distribution of $\expele$ are equal. This contradicts the informative prior assumption for experts. 
Therefore, with informative prior, such $\expsig_i$ and $\expsig_j$ does not exist, and $\exppost(\expele)$ is strictly monotone to $\expele$. 
\end{proof}

Then we introduce an equivalent condition of $\nonidist{l}_0(k)=\nonidist{l}_1(k)$.

\begin{lemma}
\label{lemma:noni-neq}
For any $l\in [\noninum]$ and any $k\in[\nonisetnum{l}]$, $\nonidist{l}_0(k)=\nonidist{l}_1(k)$ if and only if 
\begin{equation*}
    \frac{\partial \nonipay{l}_{exp}(\expsig_r, \noniele{l}_k)}{\partial \exppost_r} = 0.
\end{equation*}
\end{lemma}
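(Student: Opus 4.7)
The plan is to compute the derivative on the right-hand side explicitly, express it in terms of the convex potential $G^l$ associated with the proper scoring rule $PS^l$ via Lemma~\ref{lemma:ps}, and then read off both implications from a single cleanly factored expression. First I would differentiate the formula
\[
\nonipay{l}_{exp}(\expsig_r,\noniele{l}_k) = \sum_{j=1}^{\nonisetnum{l}} \bigl(\nonidist{l}_0(j)(1-\exppost_r)+\nonidist{l}_1(j)\exppost_r\bigr)\bigl(PS^l(\noniele{l}_j,\SSIdist{l}(\noniele{l}_k)) - PS^l(\noniele{l}_j,\SSIdist{l})\bigr)
\]
with respect to $\exppost_r$, which gives a sum that is linear in the weights $\nonidist{l}_1(j)-\nonidist{l}_0(j)$. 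Denoting $\mathbf{P}^l_i$ the distribution with PMF $\nonidist{l}_i$ and $\mathbf{v}=\mathbf{P}^l_1-\mathbf{P}^l_0$, this sum rewrites as
\[
\frac{\partial \nonipay{l}_{exp}(\expsig_r,\noniele{l}_k)}{\partial \exppost_r} = \bigl[dG^l(\SSIdist{l}(\noniele{l}_k)) - dG^l(\SSIdist{l})\bigr]\cdot \mathbf{v},
\]
using the identity $PS^l(\mathbf{P};\mathbf{Q})=G^l(\mathbf{Q})+dG^l(\mathbf{Q})\cdot(\mathbf{P}-\mathbf{Q})$ from Lemma~\ref{lemma:ps}.

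Next I would exploit the key algebraic identity $\SSIdist{l}(\noniele{l}_k)-\SSIdist{l}=(\nonipost(\noniele{l}_k)-P)\mathbf{v}$, which follows from writing both marginals as mixtures of $\mathbf{P}^l_0$ and $\mathbf{P}^l_1$ under $(1-P,P)$ and $(1-\nonipost(\noniele{l}_k),\nonipost(\noniele{l}_k))$ respectively. Combined with the integral form of the mean value theorem applied to $dG^l$, this yields
\[
\frac{\partial \nonipay{l}_{exp}(\expsig_r,\noniele{l}_k)}{\partial \exppost_r} = (\nonipost(\noniele{l}_k)-P)\int_0^1 \mathbf{v}^{\!\top} H^l\bigl(\SSIdist{l}+t(\SSIdist{l}(\noniele{l}_k)-\SSIdist{l})\bigr)\mathbf{v}\,dt,
\]
where $H^l$ is the Hessian of $G^l$, which exists by Assumption~\ref{asm:ps}.

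From this factorization the $\Leftarrow$ direction is immediate: if $\nonidist{l}_0(k)=\nonidist{l}_1(k)$, then by Bayes' rule $\nonipost(\noniele{l}_k)=P$, so the first factor vanishes. For the $\Rightarrow$ direction, suppose the derivative equals zero. If $\mathbf{v}=0$, then in particular $\nonidist{l}_1(k)=\nonidist{l}_0(k)$ and we are done. Otherwise $\mathbf{v}\neq 0$ lies in the tangent space of the simplex ($\sum_j v_j=0$), and Assumption~\ref{asm:ps} together with strict properness of $PS^l$ (which forces $G^l$ to be strictly convex, hence $H^l$ positive definite on that tangent space) gives $\mathbf{v}^{\!\top} H^l \mathbf{v} > 0$ pointwise on the segment, so the integral is strictly positive and the only way the product can vanish is $\nonipost(\noniele{l}_k)=P$, which again forces $\nonidist{l}_0(k)=\nonidist{l}_1(k)$.

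The main obstacle I foresee is carefully justifying that Assumption~\ref{asm:ps} (diagonal, nonzero-diagonal Hessian) together with strict properness implies $\mathbf{v}^{\!\top} H^l \mathbf{v}>0$ for every nonzero $\mathbf{v}$ in the simplex's tangent space along the entire interpolating segment; one must check that the Hessian is defined and strictly positive definite on that tangent space at every interior mixture, which is where convexity of $G^l$ (inherited from properness) does the work. Everything else is bookkeeping.
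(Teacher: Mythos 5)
Your proposal is correct and follows essentially the same route as the paper: both compute the derivative explicitly, reduce it to a one-dimensional statement via the mixture identity $\SSIdist{l}(\noniele{l}_k)-\SSIdist{l}=(\nonipost(\noniele{l}_k)-P)(\mathbf{P}^l_1-\mathbf{P}^l_0)$, and conclude positivity from the diagonal Hessian of $G^l$ guaranteed by Assumption~\ref{asm:ps} together with convexity (the paper phrases this as strict monotonicity of an auxiliary function $\nonipay{l}(p)$ vanishing at $p=P$, which is just the antiderivative form of your factored integral). The only nit is that your parenthetical ``strictly convex, hence $H^l$ positive definite'' is not a valid implication on its own, but the conclusion $\mathbf{v}^{\!\top}H^l\mathbf{v}>0$ still holds because Assumption~\ref{asm:ps} makes $H^l$ diagonal with nonzero entries and convexity makes those entries nonnegative, exactly as in the paper.
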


\begin{proof}
From Equation \ref{eq:SSexp} we know that 
\begin{align*}
    \nonipay{l}_{exp}(\expsig_r, \noniele{l}_k)=&\sum_{j=1}^{\nonisetnum{l}}  \left(\nonidist{l}_0(j)\cdot(1-\exppost(\expsig_r))+\nonidist{l}_1(j)\cdot\exppost(\expsig_r)\right) \left(\PSI{l}(\noniele{l}_j, \SSIdist{l} (\noniele{l}_k)) - \PSI{l}(\noniele{l}_j, \SSIdist{l})\right).
\end{align*}
We calculate$ \frac{\partial  \nonipay{l}_{exp}(\expsig_r, \noniele{l}_k)}{\partial \exppost_r}$: 
\begin{equation}
\label{eq:noni-partial}
     \frac{\partial  \nonipay{l}_{exp}(\expsig_r, \noniele{l}_k)}{\partial \exppost_r} = \sum_{j=1}^{\nonisetnum{l}}  \left(\nonidist{l}_1(j)-\nonidist{l}_0(j)\right) \left(\PSI{l}(\noniele{l}_j, \SSIdist{l} (\noniele{l}_k)) - \PSI{l}(\noniele{l}_j, \SSIdist{l})\right).
\end{equation}
First we prove that when $\nonidist{l}_0(k)=\nonidist{l}_1(k)$, this derivative equals to 0. When $\nonidist{l}_0(k)=\nonidist{l}_1(k)$, the posterior distribution satisfies:
\begin{equation*}
    \nonipost(\noniele{l}_k)=\frac{P\cdot \nonidist{l}_1(k)}{(1-P)\nonidist{l}_0(k)+P\cdot \nonidist{l}_1(k)} = P
\end{equation*}
The posterior of the ground truth is the same as the prior distribution. In this case, for the posterior signal distribution $\SSIdist{l} (\noniele{l}_k)$ we have:
\begin{align*}
    (\SSIdist{l} (\noniele{l}_k))_j=& \Pr[\noniSig{l}=\noniele{l}_j\mid \noniSig{l}_r=\noniele{l}_k]\\
    = &(1-\nonipost(\noniele{l}_k))\cdot \nonidist{l}_0(j)+\nonipost(\noniele{l}_k)\cdot \nonidist{l}_1(j)\\
    = & (1-P)\cdot \nonidist{l}_0(j)+P\cdot \nonidist{l}_1(j)\\
    =&\Pr[\noniSig{l}=\noniele{l}_j]\\
    =& (\SSIdist{l})_j
\end{align*}
This means that when $\nonidist{l}_0(k)=\nonidist{l}_1(k)$, the post signal distribution equals to the prior signal distribution. In this case, 
\begin{align*}
     \frac{\partial  \nonipay{l}_{exp}(\expsig_r, \noniele{l}_k)}{\partial \exppost_r} = & \sum_{j=1}^{\nonisetnum{l}}  \left(\nonidist{l}_1(j)-\nonidist{l}_0(j)\right) \left(\PSI{l}(\noniele{l}_j, \SSIdist{l} (\noniele{l}_k)) - \PSI{l}(\noniele{l}_j, \SSIdist{l})\right).\\
     = & \sum_{j=1}^{\nonisetnum{l}}  \left(\nonidist{l}_1(j)-\nonidist{l}_0(j)\right) \left(\PSI{l}(\noniele{l}_j, \SSIdist{l}) - \PSI{l}(\noniele{l}_j, \SSIdist{l})\right)\\
     = &0.
\end{align*}
Then we prove that when $\nonidist{l}_0(k)\neq\nonidist{l}_1(k)$, the derivative is non-zero. We construct a function $\nonipay{l}(p)$, where $\nonipay{l}(P)=0$ and $\nonipay{l}(\nonipost(\noniele{l}_k))=\frac{\partial  \nonipay{l}_{exp}(\expsig_r, \noniele{l}_k)}{\partial \exppost_r}$. Then we prove that $\nonipay{l}(p)$ is strictly monotone. Since when $\nonidist{l}_0(k)\neq\nonidist{l}_1(k), \nonipost(\noniele{l}_k)\neq P$, we show that $\frac{\partial  \nonipay{l}_{exp}(\expsig_r, \noniele{l}_k)}{\partial \exppost_r} \neq 0$. 

We defined $\nonipay{l}(p)$ as follows: 
\begin{equation*}
    \nonipay{l}(p) = \sum_{j=1}^{\nonisetnum{l}} \left( \nonidist{l}_1(j)- \nonidist{l}_0(j)\right) \left(\PSI{l}(\noniele{l}_j, \ssidist{l}(p)) - \PSI{l}(\noniele{l}_j, \SSIdist{l})\right),
\end{equation*}
where $\ssidist{l}(p)$ is a distribution on $\noniset{l}$. Formally, $\ssidist{l}(p) = (\ssidisti{l}_1(p),\ssidisti{l}_2(p), \cdots, \ssidisti{l}_{\nonisetnum{l}}(p))$, where  $\ssidisti{l}_{i}(p) = (1-p)\cdot \nonidist{l}_0(j)+p\cdot \nonidist{l}_1(j)$. 
When $p=P$, $\ssidist{l}(p) = \SSIdist{l}$, and $\nonipay{l}(P)=0$. When $p=\nonipost(\noniele{l}_k)$,  $\ssidist{l}(p) = \SSIdist{l} (\noniele{l}_k)$, and $\nonipay{l}(\nonipost(\noniele{l}_k))=\frac{\partial  \nonipay{l}_{exp}(\expsig_r, \noniele{l}_k)}{\partial \exppost_r}$. 

Now we prove that $\nonipay{l}(p)$ is strictly monotone by showing that its derivative is strictly positive. We consider the derivative of the proper scoring rule by calculating the derivative of its equivalent form in lemma \ref{lemma:ps}: 
\begin{align*}
    \frac{\partial \PSI{l}(\noniele{l}_j,\ssidist{l}(p))}{\partial p}
   =&\sum_{i=1}^{\nonisetnum{l}} \frac{\partial G^{l}(\ssidist{l}(p))}{\partial \ssidisti{l}_i} \frac{\partial \ssidisti{l}_i}{\partial p} + \sum_{i=1}^{\nonisetnum{l}}  \frac{\partial^2 G^{l}(\ssidist{l}(p))}{\partial \ssidisti{l}_j \partial\ssidisti{l}_i}\frac{\partial \ssidisti{l}_i}{\partial p}\\
   - & \sum_{k=1}^{\nonisetnum{l}} \sum_{i=1}^{\nonisetnum{l}} \frac{\partial^2 G^{l}(\ssidist{l}(p))}{\partial \ssidisti{l}_k \partial\ssidisti{l}_i}\frac{\partial \ssidisti{l}_i}{\partial p}\cdot \ssidisti{l}_k - \sum_{k=1}^{\nonisetnum{l}}  \frac{\partial G^{l}(\ssidist{l}(p))}{\partial \ssidisti{l}_k} \frac{\partial \ssidisti{l}_k}{ \partial p}\\
    = & \sum_{i=1}^{\nonisetnum{l}}  \frac{\partial^2 G^{l}(\ssidist{l}(p))}{\partial \ssidisti{l}_j \partial\ssidisti{l}_i}\frac{\partial \ssidisti{l}_i}{\partial p} - \sum_{k=1}^{\nonisetnum{l}} \sum_{i=1}^{\nonisetnum{l}} \frac{\partial^2 G^{l}(\ssidist{l}(p))}{\partial \ssidisti{l}_k \partial\ssidisti{l}_i}\frac{\partial \ssidisti{l}_i}{\partial p}\cdot \ssidisti{l}_k
\end{align*}

Note that since the Hessian Matrix of $G^{l}$ exists, we use the gradient of $G^{l}$ rather than subgradient. 

Then by Assumption \ref{asm:ps} (assumption of proper scoring rule), $\frac{\partial^2 G^{l}(\ssidist{l}(p))}{\partial \ssidisti{l}_i\partial \ssidisti{l}_j}\neq 0$ if and only if $i=j$. Therefore, we have 
\begin{align*}
     \frac{{\rm d} \nonipay{l}(p)}{{\rm d} p} = & \sum_{j=1}^{\nonisetnum{l}} \left( \nonidist{l}_1(j)- \nonidist{l}_0(j)\right) \left(\sum_{i=1}^{\nonisetnum{l}}  \frac{\partial^2 G^{l}(\ssidist{l}(p))}{\partial \ssidisti{l}_j \partial\ssidisti{l}_i}\frac{\partial \ssidisti{l}_i}{\partial p} - \sum_{k=1}^{\nonisetnum{l}} \sum_{i=1}^{\nonisetnum{l}} \frac{\partial^2 G^{l}(\ssidist{l}(p))}{\partial \ssidisti{l}_k \partial\ssidisti{l}_i}\frac{\partial \ssidisti{l}_i}{\partial p}\cdot \ssidisti{l}_k\right),\\
     =&\sum_{j=1}^{\nonisetnum{l}} \left( \nonidist{l}_1(j)- \nonidist{l}_0(j)\right) \left( \frac{\partial^2 G^{l}(\ssidist{l}(p))}{(\partial \ssidisti{l}_j)^2}\frac{\partial \ssidisti{l}_j}{\partial p} - \sum_{i=1}^{\nonisetnum{l}} \frac{\partial^2 G^{l}(\ssidist{l}(p))}{ (\partial\ssidisti{l}_i)^2}\frac{\partial \ssidisti{l}_i}{\partial p}\cdot \ssidisti{l}_i \right )\\
     = & \sum_{j=1}^{\nonisetnum{l}} \left( \nonidist{l}_1(j)- \nonidist{l}_0(j)\right)\frac{\partial^2 G^{l}(\ssidist{l}(p))}{(\partial \ssidisti{l}_j)^2}\frac{\partial \ssidisti{l}_j}{\partial p} \\
     - & \left(\sum_{j=1}^{\nonisetnum{l}} \left( \nonidist{l}_1(j)- \nonidist{l}_0(j)\right)\right)
     \left(\sum_{i=1}^{\nonisetnum{l}} \frac{\partial^2 G^{l}(\ssidist{l}(p))}{ (\partial\ssidisti{l}_i)^2}\frac{\partial \ssidisti{l}_i}{\partial p}\cdot \ssidisti{l}_i\right)
\end{align*}
Note that the second term are the product of two sums, and the first sum $\sum_{j=1}^{\nonisetnum{l}} \left( \nonidist{l}_1(j)- \nonidist{l}_0(j)\right)=0$. Therefore according to the definition of $\ssidisti{l}_j)$, 
\begin{align*}
    \frac{{\rm d} \nonipay{l}(p)}{{\rm d} p} = & \sum_{j=1}^{\nonisetnum{l}} \left( \nonidist{l}_1(j)- \nonidist{l}_0(j)\right)\frac{\partial^2 G^{l}(\ssidist{l}(p))}{(\partial \ssidisti{l}_j)^2}\frac{\partial \ssidisti{l}_j}{\partial p}\\
    =& \sum_{j=1}^{\nonisetnum{l}} \left( \nonidist{l}_1(j)- \nonidist{l}_0(j)\right)^2\frac{\partial^2 G^{l}(\ssidist{l}(p))}{(\partial \ssidisti{l}_j)^2}
\end{align*}
By assumption \ref{asm:ps} and convexity of $G^{l}$, for any $i=1,2,\cdots,\nonisetnum{l}, \frac{\partial^2 G^{l}(\ssidist{l}(p))}{(\partial \ssidisti{l}_j)^2} > 0$.  And since $\nonidist{l}_0(k)\neq \nonidist{l}_1(k)$, every term of $\frac{{\rm d} \nonipay{l}(p)}{{\rm d} p}$ is non-negative and at least one term is strictly positive. Therefore, $$ \frac{{\rm d} \nonipay{l}(p)}{{\rm d} p} > 0.$$
Therefore, $\nonipay{l}(p)$ is strictly monotone.
Since when $\nonidist{l}_0(k)\neq \nonidist{l}_1(k), \nonipost(\noniele{l}_k)\neq P$, we show that $\frac{\partial  \nonipay{l}_{exp}(\expsig_r, \noniele{l}_k)}{\partial \exppost_r} \neq 0$.
\end{proof}

The following lemma shows that $\nonipay{l}_{exp}(\expsig_r, \expsig_r) - \noniA{l}_k\cdot \nonipay{l}_{exp}(\expsig_r,\noniele{l}_k)$ is minimized when $\expsig_r$ satisfies $\exppost(\expsig_r)=\nonipost(\noniele{l}_k)$. This lemma is used in several lemmas in the proof part. 

\begin{lemma}
\label{lemma:noni-min}
For any $l\in[\noninum]$, given $\nonidist{l}_0(k)\neq \nonidist{l}_1(k)$,let $\expelek{l}{k}$ satisfies $\exppost(\expelek{l}{k})=\nonipost(\noniele{l}_k)$. Then for any $\expsig_r\in \expset$, we have 
\begin{equation*}
    \exppay_{exp}(\expsig_r, \expsig_r) - \noniA{l}_k\cdot \nonipay{l}_{exp}(\expsig_r,\noniele{l}_k) \ge \exppay_{exp}(\expelek{l}{k} \expelek{l}{k}) - \noniA{l}_k\cdot \nonipay{l}_{exp}(\expelek{l}{k},\noniele{l}_k)
\end{equation*}
The equation holds if and only if $\expsig_r=\expelek{l}{k}$. 
\end{lemma}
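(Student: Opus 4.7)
\textbf{Proof plan for Lemma~\ref{lemma:noni-min}.} The plan is to view $h(\expsig_r) := \exppay_{exp}(\expsig_r, \expsig_r) - \noniA{l}_k \cdot \nonipay{l}_{exp}(\expsig_r, \noniele{l}_k)$ as a function of the induced posterior $p_r := \exppost(\expsig_r)$ alone, show that the reduced function is strictly convex in $p_r$, and then use the definition of $\noniA{l}_k$ to identify the unique critical point as $p^* := \exppost(\expelek{l}{k})$.

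First I would rewrite the expert's truthful payoff in closed form. Using Bayes' rule $\exppost(s) = P f_1^0(s)/q_P(s)$, where $q_p := (1-p) f_0^0 + p f_1^0$ and $q_P$ is the prior PDF on $\expset$, the ECGM payment collapses to $R_{ecgm}(\exprep_r, s) = \log\bigl(q_{\exppost(\exprep_r)}(s)/q_P(s)\bigr)$. Taking the expectation against the expert's own predictive $q_{p_r}$ yields the clean identification $\exppay_{exp}(\expsig_r, \expsig_r) = D_{KL}(q_{p_r}\,\|\,q_P) =: E(p_r)$. Differentiating in $p_r$ (and noting that $\int_{\expset}(f_1^0 - f_0^0)\,ds = 0$) gives
\[ E''(p_r) \;=\; \int_{\expset} \frac{(f_1^0(s) - f_0^0(s))^2}{q_{p_r}(s)}\,ds \;>\; 0, \]
where strict positivity uses $f_0^0, f_1^0 > 0$ from Assumption~\ref{asm:wise}(1) together with $f_0^0 \not\equiv f_1^0$ (otherwise all expert posteriors collapse to $P$, contradicting Lemma~\ref{lemma:exp-mono}). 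Meanwhile, inspection of Equation~(\ref{eq:SSexp}) shows that $\nonipay{l}_{exp}(\expsig_r, \noniele{l}_k)$ is affine in $p_r$, say $N(p_r) = c_0 + c_1 p_r$, and Lemma~\ref{lemma:noni-neq} gives $c_1 \neq 0$ under the hypothesis $\nonidist{l}_0(k) \neq \nonidist{l}_1(k)$. Hence $h$ is strictly convex as a function of $p_r$.

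Substituting the definition of $\noniA{l}_k$ from Equation~(\ref{eq:noniA}) --- which simplifies to $\noniA{l}_k = E'(p^*)/c_1$ --- into the first-order condition yields $h'(p^*) = E'(p^*) - \noniA{l}_k c_1 = 0$, so strict convexity makes $p^*$ the unique minimizer of $h$ in the variable $p_r$. Lemma~\ref{lemma:exp-mono} guarantees that $\exppost(\cdot)$ is strictly monotonic on $\expset$, so the map $\expsig_r \mapsto p_r$ is injective, and therefore the unique minimizer of the original problem is exactly $\expsig_r = \expelek{l}{k}$, with strict inequality for every other $\expsig_r$. The main obstacle is really the first step --- recognizing the ECGM self-payoff as the KL divergence $D_{KL}(q_{p_r}\,\|\,q_P)$ --- since once that identification is made, convexity, the first-order condition, and the uniqueness claim all follow transparently.
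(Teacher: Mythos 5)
Your proposal is correct and follows essentially the same route as the paper's proof: reduce everything to a function of the posterior $p_r$, note that $\nonipay{l}_{exp}$ is affine in $p_r$ with nonzero slope (via Lemma~\ref{lemma:noni-neq}), show the expert self-payoff has second derivative $\int_{\expset}(f_1^0-f_0^0)^2/q_{p_r}>0$ so the difference is strictly convex, use the definition of $\noniA{l}_k$ to place the unique critical point at $p^*=\nonipost(\noniele{l}_k)$, and lift back to signals via the strict monotonicity of $\exppost$ from Lemma~\ref{lemma:exp-mono}. The identification of $\exppay_{exp}(\expsig_r,\expsig_r)$ as $D_{KL}(q_{p_r}\|q_P)$ is a clean repackaging of the integral the paper manipulates directly, but the argument is the same.
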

\begin{proof}
We proof Lemma~\ref{lemma:noni-min} with the same technique of Lemma~\ref{lemma:noni-neq}. We regard $\exppay_{exp}(\expsig_r, \expsig_r) - \noniA{l}_k\nonipay{l}_{exp}(\expsig_r,\noniele{l}_k)$ as a function of $\exppost_r$, and look into its derivatives. We will show that this function's derivatives equals to 0 when $\exppost_r=\nonipost(\noniele{l}_k)$, and is strictly monotone. Therefore, the function is uniquely minimized when $\exppost_r=\nonipost(\noniele{l}_k)$. 

Consider the form of $ \exppay_{exp}(\expsig_r, \expsig_r)$ and $\nonipay{l}_{exp}(\expsig_r,\noniele{l}_k)$:
\begin{align*}
    &\exppay_{exp}(\expsig_r, \expsig_r)\\
    = &\int_{\expset} \left((1-\exppost(\expsig_r))\cdot f_0(\expele) +\exppost(\expsig_r)\cdot f_1(\expele)\right)  \log \left( \frac{(1-\exppost(\expsig_r))\cdot f_0(\expele) +\exppost(\expsig_r)\cdot f_1(\expele)}{(1-P)\cdot f_0(\expele) +P\cdot f_1(\expele)}\right) {\rm d}\expele\\
    &\nonipay{l}_{exp}(\expsig_r, \noniele{l}_k)=\sum_{j=1}^{\nonisetnum{l}}  \left(\nonidist{l}_0(j)\cdot(1-\exppost(\expsig_r))+\nonidist{l}_1(j)\cdot\exppost(\expsig_r)\right) \left(\PSI{l}(\noniele{l}_j, \SSIdist{l} (\noniele{l}_k)) - \PSI{l}(\noniele{l}_j, \SSIdist{l})\right)
\end{align*}
Note that both $\exppay_{exp}$ and $\nonipay{l}_{exp}$ can be regarded as a function of $\exppost(\expsig_r)$. We construct the following functions:
\begin{align*}
    \exppay^*(p)=& \int_{\expset} \left((1-p)\cdot f_0(\expele) +p\cdot f_1(\expele)\right)  \log \left( \frac{(1-p)\cdot f_0(\expele) +p\cdot f_1(\expele)}{(1-P)\cdot f_0(\expele) +P\cdot f_1(\expele)}\right) {\rm d}\expele\\
    \nonipay{*}(p)=&\sum_{j=1}^{\nonisetnum{l}}  \left(\nonidist{l}_0(j)\cdot(1-p)+\nonidist{l}_1(j)\cdot p\right) \left(\PSI{l}(\noniele{l}_j, \SSIdist{l} (\noniele{l}_k)) - \PSI{l}(\noniele{l}_j, \SSIdist{l})\right).\\
    \expminussemi(p)=&\exppay^*(p)-\noniA{l}_k\cdot \nonipay{*}(p). 
\end{align*}
Note that $\nonipay{*}(p)$ is different from $\nonipay{l}(p)$ in Lemma~\ref{lemma:noni-neq}. 

Then our goal is to prove that $\expminussemi(p)$ is uniquely minimized on $p=\nonipost(\noniele{l}_k)$. We consider the derivative of $\exppay^*(p)$:

\begin{align*}
    \frac{{\rm d} \exppay^*(p)}{{\rm d} p} = & \int_{\expset} \left(f_1(\expele)-f_0(\expele)\right)  \log \left( \frac{(1-p)\cdot f_0(\expele) +p\cdot f_1(\expele)}{(1-P)\cdot f_0(\expele)+  P\cdot f_1(\expele)}\right) {\rm d}\expele \\
    + & \int_{\expset} \left((1-p)\cdot f_0(\expele) +p\cdot f_1(\expele)\right)  \frac{f_1(\expele)-f_0(\expele)}{(1-p)\cdot f_0(\expele) +p\cdot f_1(\expele)} {\rm d}\expele
\end{align*}
Note the the second term
\begin{align*}
    &\int_{\expset} \left((1-p)\cdot f_0(\expele) +p\cdot f_1(\expele)\right)  \frac{f_1(\expele)-f_0(\expele)}{(1-p)\cdot f_0(\expele) +p\cdot f_1(\expele)} {\rm d}\expele \\
    = & \int_{\expset}(f_1(\expele)-f_0(\expele)) {\rm d}\expele\\
    = & 0. 
\end{align*}
Therefore, 
\begin{align}
\label{eq:exp-partial}
    \frac{{\rm d} \exppay^*(p)}{{\rm d} p} = & \int_{\expset} \left(f_1(\expele)-f_0(\expele)\right)  \log \left( \frac{(1-p)\cdot f_0(\expele) +p\cdot f_1(\expele)}{(1-P)\cdot f_0(\expele) +P\cdot f_1(\expele)}\right) {\rm d}\expele.
\end{align}
Then we consider the derivative of $\nonipay{*}(p)$:

\begin{align*}
    \frac{{\rm d} \nonipay{*}(p)}{{\rm d} p}
    =&
    \sum_{j=1}^{\nonisetnum{l}}  \left(\nonidist{l}_1(j)-\nonidist{l}_0(j)\right) \left(\PSI{l}(\noniele{l}_j, \SSIdist{l} (\noniele{l}_k)) - \PSI{l}(\noniele{l}_j, \SSIdist{l})\right)
\end{align*}
Note that according to Lemma~\ref{lemma:noni-neq} since $\nonidist{l}_0(k)\neq \nonidist{l}_1(k)$, this derivative is nonzero. 

Now we show that $\frac{{\rm d}\expminussemi(p)}{{\rm d}p}=0$ when $\exppost_r=\nonipost(\noniele{l}_k)$. Note that 
$ \frac{{\rm d} \exppay^*(p)}{{\rm d} p} =  \frac{{\rm d} \exppay_{exp}(\expsig_r,\expsig_r)}{{\rm d} \exppost_r}$, and $\frac{{\rm d} \nonipay{*}(p)}{{\rm d} p}=\frac{\partial \nonipay{l}_{exp}(\expsig_r,\noniele{l}_k)}{\partial \exppost_r}$, and according to the definition of $\noniA{l}_k$ in Equation~\ref{eq:noniA},
\begin{align*}
    \left.\frac{{\rm d}\expminussemi(p)}{{\rm d}p}\right|_{p=\nonipost(\noniele{l}_k)} = &\left.\frac{{\rm d} \exppay_{exp}(\expsig_r,\expsig_r)}{{\rm d} \exppost_r} \right|_{\expele_r=\expelek{l}{k}}\\
    - & \left.\frac{\frac{{\rm d}\exppay_{exp}(\expsig_r, \expsig_r)}{{\rm d}\exppost_r}}{\frac{\partial \nonipay{l}_{exp}(\expsig_r,\noniele{l}_k)}{\partial \exppost_r}}\right|_{\expele_r=\expelek{l}{k}} \cdot
    \left.\frac{\partial \nonipay{l}_{exp}(\expsig_r,\noniele{l}_k)}{\partial \exppost_r} \right|_{\expele_r=\expelek{l}{k}}\\
    = & 0
\end{align*}
Then we prove that $\frac{{\rm d}^2\expminussemi(p)}{{\rm d}p^2}>0$ for every $p\in[0,1]$. In this way, $\frac{{\rm d}\expminussemi(p)}{{\rm d}p}=0$ is strictly monotone, and \expminussemi(p) is uniquely minimized in $p=\nonipost(\noniele{l}_k)$.
We first consider the second derivative of  $\exppay^*(p)$:
\begin{align*}
    \frac{{\rm d}^2 \exppay^*(p)}{{\rm d} p^2} = & \int_{\expset} \left(f_1(\expele)-f_0(\expele)\right)  \frac{f_1(\expele)-f_0(\expele)}{(1-p)\cdot f_0(\expele) +p\cdot f_1(\expele)} {\rm d}\expele\\
    = & \int_{\expset}\frac{(f_1(\expele)-f_0(\expele))^2}{(1-p)\cdot f_0(\expele) +p\cdot f_1(\expele)} {\rm d}\expele
\end{align*}

According to Assumption~\ref{asm:ip}, $f_0(\expele)$ and $f_1(\expele)$ do not equal \alev, otherwise different realizations of private signal will lead to same posterior distribution of expert's signal to be $f_0(\expele)$. And according to Assumption~\ref{asm:wise}, $f_0(\expele)$ and $f_1(\expele)$ are both strictly positive. Therefore, this second derivative is strictly positive. 
On the other hand, $\frac{{\rm d}^2 \nonipay{*}(p)}{{\rm d} p^2}=0$. Therefore, we have proved that $\expminussemi(p)$ is uniquely minimized at $p=\nonipost(\noniele{l}_k)$. Moreover, according to Lemma~\ref{lemma:exp-mono}, $\exppost(\expele)$ is monotone to $\expsig_r$. This implies that $\expminussemi(\exppost(\expsig_r))$ is uniquely minimized at $\expsig_r=\expelek{l}{k}$. 
Therefore, we have 
\begin{equation*}
    \exppay_{exp}(\expsig_r, \expsig_r) - \noniA{l}_k\nonipay{l}_{exp}(\expsig_r,\noniele{l}_k) \ge \exppay_{exp}(\expelek{l}{k}, \expelek{l}{k}) - \noniA{l}_k\nonipay{l}_{exp}(\expelek{l}{k},\noniele{l}_k)
\end{equation*}
The equation holds if and only if $\expsig_r=\expelek{l}{k}$. 
\end{proof}

\subsubsection{Expert  Part}
In expert part, we'll first prove the interior truthfulness, and them prove the exterior truthfulness. The interior truthfulness of expert comes from the non-negativity of cross entropy. 

\begin{lemma}[Expert Interior Truthfulness]
\label{lemma: exp_exp}
For any expert $r$, and for any $\expsig_r, \exprep_r\in\expset$ we have 
\begin{equation*}
    \exppay_{exp}(\expsig_r,\expsig_r)\ge \exppay_{exp}(\expsig_r, \exprep_r).
\end{equation*}
The equality holds if and only if $\exprep_r=\expsig_r$.
\end{lemma}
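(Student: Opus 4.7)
The plan is to observe that this lemma is essentially a restatement of Proposition~\ref{thm:ecgm} in the expected-payment notation of Section~\ref{subsec: exppay}. By Equation~\ref{eq:CCexp}, $\exppay_{exp}(\expsig_r,\exprep_r)$ is exactly the expected ECGM payment of expert $r$ with true signal $\expsig_r$ reporting $\exprep_r$ and paired with a truthful expert peer $j\in\exprepset\setminus\{r\}$; since every potential peer is assumed truthful, the uniform random choice of $j$ does not affect the expected payment. The strict truthfulness of ECGM established in Proposition~\ref{thm:ecgm} therefore implies the lemma directly.

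For a self-contained proof, I would mirror the argument of Proposition~\ref{thm:ecgm}. First, using Bayes' rule and the definition $\exppost(s)=\Pr[Y{=}1\mid \expSig=s]$, I would rewrite
$$R_{ecgm}(\exprep_r,\expele)=\log\!\left(\frac{(1-\exppost(\exprep_r))f_0(\expele)+\exppost(\exprep_r)f_1(\expele)}{(1-P)f_0(\expele)+Pf_1(\expele)}\right).$$
Then I would introduce the three densities on $\expset$ given by $f_1^*(\expele)=(1-\exppost(\expsig_r))f_0(\expele)+\exppost(\expsig_r)f_1(\expele)$, $f_2^*(\expele)=(1-\exppost(\exprep_r))f_0(\expele)+\exppost(\exprep_r)f_1(\expele)$, and $f_3^*(\expele)=(1-P)f_0(\expele)+Pf_1(\expele)$. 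By Bayes' rule these are $\pdfs[\cdot\mid\expSig_r=\expsig_r]$, the analogous conditional given a hypothetical signal producing posterior $\exppost(\exprep_r)$, and the prior $\pdfs[\cdot]$ respectively, so all three are bona-fide probability densities on $\expset$. A direct computation then yields
$$\exppay_{exp}(\expsig_r,\exprep_r)=\int_{\expset}f_1^*(\expele)\log\!\left(\frac{f_2^*(\expele)}{f_3^*(\expele)}\right){\rm d}\expele,\qquad \exppay_{exp}(\expsig_r,\expsig_r)=\int_{\expset}f_1^*(\expele)\log\!\left(\frac{f_1^*(\expele)}{f_3^*(\expele)}\right){\rm d}\expele.$$
Applying the continuous version of Lemma~\ref{lem:lsr} with $(\mathbf{P},\mathbf{Q},\mathbf{R})=(f_1^*,f_2^*,f_3^*)$ gives the required inequality.

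For the strictness clause, the equality case of Lemma~\ref{lem:lsr} forces $f_1^*=f_2^*$ almost everywhere, i.e.\ $(\exppost(\expsig_r)-\exppost(\exprep_r))(f_1-f_0)=0$ a.e. Assumption~\ref{asm:ip} (applied to two experts) rules out $f_0=f_1$ a.e., since otherwise every realization of an expert's signal would induce the same posterior over a peer's signal, contradicting informativeness. Hence equality forces $\exppost(\exprep_r)=\exppost(\expsig_r)$, and by Lemma~\ref{lemma:exp-mono} (strict monotonicity of $\exppost$ on $\expset$, itself a consequence of Assumption~\ref{asm:ip}) this yields $\exprep_r=\expsig_r$. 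The only mildly delicate step is this last conversion of an almost-everywhere density identity into equality of the underlying signals, which is where Assumption~\ref{asm:ip} is used in an essential way; the inequality itself is a routine consequence of non-negativity of KL divergence.
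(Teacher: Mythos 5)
Your proposal is correct and follows essentially the same route as the paper: the paper's proof of this lemma likewise rewrites $\exppay_{exp}$ via the three mixture densities $f_1^*,f_2^*,f_3^*$, applies the continuous version of Lemma~\ref{lem:lsr}, and invokes Assumption~\ref{asm:ip} for strictness (and the paper's Proposition~\ref{thm:ecgm} is indeed the same argument in different notation). Your equality-case analysis via $(\exppost(\expsig_r)-\exppost(\exprep_r))(f_1-f_0)=0$ a.e.\ and Lemma~\ref{lemma:exp-mono} is in fact spelled out more carefully than the paper's one-line appeal to Assumption~\ref{asm:ip}.
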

\begin{proof}
According to Equation \ref{eq:CCexp}, we have 
\begin{align*}
    &\exppay_{exp}(\expsig_r, \exprep_r)\\ 
    = &\int_{\expset} \pdfs[\expele\mid \expSig_r=\expsig_r]R_{ecgm}(\exprep_r, \expele){\rm d}\expele\\
    = & \int_{\expset} \left((1-\exppost_r)\cdot f_0(\expele) +\exppost_r\cdot f_1(\expele)\right)  \log \left( \frac{(1-\exppost(\exprep_r))\cdot f_0(\expele) +\exppost(\exprep_r)\cdot f_1(\expele)}{(1-P)\cdot f_0(\expele) +P\cdot f_1(\expele)}\right) {\rm d}\expele. 
\end{align*}
Then, we try to rewrite $\exppay_{exp}$ using the following three functions:
\begin{itemize}
    \item $f_1^*(\expele) = (1-\exppost_r)\cdot f_0(\expele) +\exppost_r\cdot f_1(\expele)$
    \item $f_2^*(\expele) = (1-\exppost(\exprep_r))\cdot f_0(\expele) +\exppost(\exprep_r)\cdot f_1(\expele)$
    \item $f_3^*(\expele) = (1-P)\cdot f_0(\expele) +P\cdot f_1(\expele)$
\end{itemize}

We have 
\begin{equation*}
    \exppay_{exp}(\expsig_r, \exprep_r) = \int_{\expset} f_1^*(\expele) \log \left( \frac{f_2^*(\expele)}{f_3^*(\expele)}\right){\rm d}\expele. 
\end{equation*}
Considering that $f_0$ and $f_1$ are both PDFs, we have $\int_{\expset} f_0(\expele){\rm d}\expele=1$ and $\int_{\expset} f_1(\expele){\rm d}\expele=1$. According to the definition of  $f_1^*(\expele), f_2^*(\expele),f_3^*(\expele)$, we have 
\begin{equation*}
   \int_{\expset} f_1^*(\expele){\rm d}\expele=\int_{\expset} f_2^*(\expele){\rm d}\expele=\int_{\expset} f_3^*(\expele){\rm d}\expele=1.
\end{equation*}
Also note that $f_1^*(\expele), f_2^*(\expele),f_3^*(\expele)$ are always positive, we know that $f_1^*(\expele), f_2^*(\expele),f_3^*(\expele)$ are all PDF of some random variables. Then, by applying the continuous version of Lemma~\ref{lem:lsr}, we have, 
\begin{align*}
    \exppay_{exp}(\expsig_r, \exprep_r) = &  \int_{\expset} f_1^*(\expele) \log \left( \frac{f_2^*(\expele)}{f_3^*(\expele)}\right){\rm d}\expele\\
    \le & \int_{\expset} f_1^*(\expele) \log \left( \frac{f_1^*(\expele)}{f_3^*(\expele)}\right){\rm d}\expele\\
    = & \exppay_{exp}(\expsig_r,\expsig_r). 
\end{align*}
Then we consider the strictness of the mechanism. Notice that $f_1^*(\expele)$ and $f_2^*(\expele)$ are posterior distribution of $\expSig$ conditioning on $\expele_r$ and $\exprep_r$. Therefore, with Assumption~\ref{asm:ip}, $f_1^*(\expele)=f_2^*(\expele)$ \alev if and only if $\expsig_r=\exprep_r$. Therefore, the equality holds if and only if $\expsig_r=\exprep_r$, and the strictness holds. 
\end{proof}

Then we consider the exterior truthfulness for experts. In the special case that $\nonidist{l}_0(k)=\nonidist{l}_1(k)$, reporting discrete signal gets a 0 expected payment, and we only need to prove expert's truthful report's expected payment is non-negative. In the case that $\nonidist{l}_0(k)\neq\nonidist{l}_1(k)$, we first show that in a special point $\expelek{l}{k}$, expected payment of truthful report and a certain discrete report ($\noniele{i}_k$) is equal. Since in Lemma~\ref{lemma:noni-min} we show that the difference of two expected payment (truthful minus discrete) is minimized on $\expsig_r=\expelek{l}{k}$, the exterior truthfulness holds. 
\begin{lemma}[Expert Exterior Truthfulness]
\label{lemma:exp-noni}
For any expert $r$, and for any $\expsig_r\in \expset, l\in[\noninum], k\in[\nonisetnum{l}]$, we have 
\begin{equation*}
    \exppay_{exp}(\expsig_r,\expsig_r) \ge \noniA{l}_k\cdot \nonipay{l}_{exp}(\expsig_r, \noniele{l}_k)+\noniB{l}_k
\end{equation*}
Moreover, the equation holds if and only if $\expsig_r=\expelek{l}{k}$, where $\expelek{l}{k}$ satisfies $\exppost_r^k=\exppost(\expelek{l}{k})=\nonipost(\noniele{l}_k)$.
\end{lemma}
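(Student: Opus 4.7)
The natural approach is to split on whether $\nonidist{l}_0(k)=\nonidist{l}_1(k)$ or not, mirroring the case split in the definition of $(\noniA{l}_k,\noniB{l}_k)$ in Equations~\ref{eq:noniA}--\ref{eq:noniB}. Equality in either case will be pinned down using the fact that $\expelek{l}{k}$ is the unique expert signal with $\exppost(\expelek{l}{k})=\nonipost(\noniele{l}_k)$, where uniqueness comes from the strict monotonicity of $\exppost(\cdot)$ established in Lemma~\ref{lemma:exp-mono}.

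In the main case $\nonidist{l}_0(k)\neq\nonidist{l}_1(k)$, set $F(\expsig_r):=\exppay_{exp}(\expsig_r,\expsig_r)-\noniA{l}_k\cdot\nonipay{l}_{exp}(\expsig_r,\noniele{l}_k)$. Lemma~\ref{lemma:noni-min} already gives that $F$ is uniquely minimized at $\expsig_r=\expelek{l}{k}$ (and $\noniA{l}_k$ is well-defined, since its denominator is nonzero by Lemma~\ref{lemma:noni-neq}). What remains is the identity $F(\expelek{l}{k})=\noniB{l}_k$, which via Equation~\ref{eq:noniB} reduces to
\begin{equation*}
\exppay_{exp}(\expelek{l}{k},\expelek{l}{k})=\exppay_{non-l}(\noniele{l}_k,\expelek{l}{k})\quad\text{and}\quad \nonipay{l}_{exp}(\expelek{l}{k},\noniele{l}_k)=\nonipay{l}_{non-l}(\noniele{l}_k,\noniele{l}_k).
\end{equation*}
The bridging observation behind both identities is that since $\exppost(\expelek{l}{k})=\nonipost(\noniele{l}_k)$ by construction, conditional independence (Assumption~\ref{asm:ci}) forces an expert with signal $\expelek{l}{k}$ and a group-$l$ non-expert with signal $\noniele{l}_k$ to assign identical posterior distributions to every peer's signal; substituting this equality into the defining formulas~\ref{eq:CCexp}--\ref{eq:SSsemi} makes the two pairs of expected payments match termwise. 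I expect this termwise verification to be the main (though largely bookkeeping) step.

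Finally, the degenerate case $\nonidist{l}_0(k)=\nonidist{l}_1(k)$ collapses to a KL-divergence argument. Here $\noniA{l}_k=\noniB{l}_k=0$, and moreover $\nonipost(\noniele{l}_k)=P$, so $\exppost(\expelek{l}{k})=P$. The claim reduces to $\exppay_{exp}(\expsig_r,\expsig_r)\ge 0$, which is immediate from the computation in the proof of Lemma~\ref{lemma: exp_exp}: that quantity is the KL divergence between the posterior PDF $(1-\exppost(\expsig_r))f_0+\exppost(\expsig_r)f_1$ and the prior PDF $(1-P)f_0+Pf_1$ of an expert's signal. Non-negativity is automatic; equality forces the two PDFs to agree almost everywhere, which under Assumption~\ref{asm:ip} (together with the positivity guaranteed by Assumption~\ref{asm:wise}(1)) means $\exppost(\expsig_r)=P$, and Lemma~\ref{lemma:exp-mono} then gives $\expsig_r=\expelek{l}{k}$.
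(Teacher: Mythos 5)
Your proposal is correct and follows essentially the same route as the paper's proof: the same case split on $\nonidist{l}_0(k)=\nonidist{l}_1(k)$ versus $\nonidist{l}_0(k)\neq\nonidist{l}_1(k)$, the same reduction of the main case to the two payment identities at $\expelek{l}{k}$ (via matching posteriors) combined with Lemma~\ref{lemma:noni-min}, and the same KL-divergence argument in the degenerate case. Your explicit appeal to Lemma~\ref{lemma:exp-mono} to pin down the equality case is a slightly cleaner articulation of a step the paper leaves implicit, but it is not a different argument.
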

\begin{proof}
First we consider the case that $\nonidist{l}_0(k)=\nonidist{l}_1(k)$. In this case, $\noniA{l}_k=\noniB{l}_k=0$. So we only need to prove that $\exppay_{exp}(\expsig_r,\expsig_r)\ge 0$. 

We use the same technique as that in Lemma \ref{lemma: exp_exp}: Let $f_1^*(\expele) = (1-\exppost_r)\cdot f_0(\expele) +\exppost_r\cdot f_1(\expele)$ and $f_3^*(\expele) = (1-P)\cdot f_0(\expele) +P\cdot f_1(\expele)$, we can rewrite $\exppay_{exp}(\expsig_r,\expsig_r)$ as
\begin{equation*}
    \exppay_{exp}(\expsig_r,\expsig_r)=\int_{\expset} f_1^*(\expele) \log \left( \frac{f_1^*(\expele)}{f_3^*(\expele)}\right){\rm d}\expele. 
\end{equation*}
Since both $f_1^*$ and $f_3^*$ are PDFs on $\expset$, according to Lemma~\ref{lem:lsr},
\begin{align*}
    \exppay_{exp}(\expsig_r, \expsig_r) = &  \int_{\expset} f_1^*(\expele) \log \left( \frac{f_1^*(\expele)}{f_3^*(\expele)}\right){\rm d}\expele\\
    = & -\int_{\expset} f_1^*(\expele) \log \left( \frac{f_3^*(\expele)}{f_1^*(\expele)}\right){\rm d}\expele\\
    \ge & \int_{\expset} f_1^*(\expele) \log \left( \frac{f_1^*(\expele)}{f_1^*(\expele)}\right){\rm d}\expele\\
    = & 0.
\end{align*}
And the equation holds if and only if $\exppost(\expsig_r)=P$. On the other hand, Since $\nonidist{l}_0(k)=\nonidist{l}_1(k)$, we have 
\begin{equation*}
    \exppost(\expelek{l}{k})=\nonipost(\noniele{l}_k)=P=\exppost(\expsig_r).
\end{equation*}
    Therefore, exterior truthfulness holds for this case, and the equation holds if and only if $\expsig_r=\expelek{l}{k}$. 
    
    Then we consider the case that $\nonidist{l}_0(k)\neq\nonidist{l}_1(k)$. In this case, according to Equation \ref{eq:noniA} and \ref{eq:noniB}:
    \begin{equation*}
        \noniA{l}_k=\left.\frac{\frac{{\rm d}\exppay_{exp}(\expsig_r, \expsig_r)}{{\rm d}\exppost(\expsig_r)}}{\frac{\partial \nonipay{l}_{exp}(\expsig_r,\noniele{l}_k)}{\partial \exppost(\expsig_r)}}\right|_{\expsig_r = \expelek{l}{k}}
    \end{equation*}
    \begin{equation*}
    \noniB{l}_k =
       \exppay_{non-l}(\noniele{l}_k,\expelek{l}{k}) - \noniA{l}_k\cdot \nonipay{l}_{non-l}(\noniele{l}_k,\noniele{l}_k)
    \end{equation*}
    We consider a special case of expert $r$. In this case, $r$'s private signal $\expsig_r=\expelek{l}{k}$ satisfies $\exppost_r^k=\exppost(\expelek{l}{k})=\nonipost(\noniele{l}_k)$, i.e. the posterior prediction of $r$ for ground truth $Y$ is exactly the posterior prediction of a group $l$ non-expert receiving private signal $\noniele{l}_k$. According to Assumption~\ref{asm:wise}, such $\expelek{l}{k}$ exists. In this case, the expected payment of $r$ also equals to the expected payment of this group 1 non-expert:
    \begin{align*}
    \exppay_{exp}(\expelek{l}{k}, \expelek{l}{k}) = &\int_{\expset} \left((1-\exppost_r^k)\cdot f_0(\expele) +\exppost_r^k\cdot f_1(\expele)\right)R_{ecgm}(\expelek{l}{k}, \expele){\rm d}\expele\\
    = & \int_{\expset} \left((1-\nonipost(\noniele{l}_k))\cdot f_0(\expele) +\nonipost(\noniele{l}_k)\cdot f_1(\expele)\right)R_{ecgm}(\expelek{l}{k}, \expele){\rm d}\expele\\
    =&\exppay_{non-l}(\nonisig{l}=\noniele{l}_k,\expelek{l}{k}).
    \end{align*}
    \begin{align*}
        \nonipay{l}_{exp}(\expelek{l}{k}, \noniele{l}_k)=&\sum_{j=1}^{\nonisetnum{l}} \left((1-\exppost_r^k)\cdot \nonidist{l}_0(j)+\exppost_r^k\cdot \nonidist{l}_1(j)\right)R_{sppm}^{l}(\noniele{l}_k, \noniele{l}_j)\\
        =&\sum_{j=1}^{\nonisetnum{l}} \left((1-\nonipost(\noniele{l}_k))\cdot \nonidist{l}_0(j)+\nonipost(\noniele{l}_k)\cdot \nonidist{l}_1(j)\right)R_{sppm}^{l}(\noniele{l}_k, \noniele{l}_j)\\
        =&\nonipay{l}_{non-l}(\nonisig{l}=\noniele{l}_k, \noniele{l}_k).
    \end{align*}
    Therefore, with the definition of $\noniB{l}_k$, we have 
    \begin{equation*}
        \exppay_{exp}(\expelek{l}{k}, \expelek{l}{k}) - \noniA{l}_k\nonipay{l}_{exp}(\expelek{l}{k},\noniele{l}_k)=\noniB{l}_k.
    \end{equation*}
    Now we find a special case of $\expsig_r=\expelek{l}{k}$ that satisfies the equality. Then we only need to prove that $\exppay_{exp}(\expsig_r, \expsig_r) - \noniA{l}_k\nonipay{l}_{exp}(\expsig_r,\noniele{l}_k)$ is uniquely minimized when $\expsig_r=\expelek{l}{k}$.
    This is proved in Lemma~\ref{lemma:noni-min}. 
    Therefore,  for all $s_r\in\expset$ the expert's exterior truthfulness for group 1 non-experts holds, and the equation holds if and only if $\expsig_r=\expelek{l}{k}$. 
\end{proof}

\subsubsection{Non-expert Part}
In this part we prove interior and exterior truthfulness for non-expert groups. W.l.o.g, we'll select one specific group (i.e group $l$ non-expert) to present our proof. In exterior truthfulness towards other non-expert groups where we need another non-expert group, we use group $h$ non-expert. 

Before we start the interior and exterior truthfulness, we first propose a lemma that will be used many times in the proof. This lemma shows that the expected payment of a non-expert reporting continuous answer is maximized when she continuous report has the same posterior distribution of ground truth with her private signal. 
\begin{lemma}
\label{lemma:noni-max}
For any group $l$ non-expert $r$, and any $k\in[\nonisetnum{l}], \exprep_r\in\expset$, we have 
\begin{equation*}
        \exppay_{non-l}(\noniele{l}_k,\expelek{l}{k})\ge \exppay_{non-l}(\noniele{l}_k,\exprep_r),
    \end{equation*}
    where $\expelek{l}{k}$ satisfies $\exppost(\expelek{l}{k})=\nonipost(\noniele{l}_k)$. The equality holds if and only if $\exprep=\expelek{l}{k}$. 
\end{lemma}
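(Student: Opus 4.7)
The plan is to mimic the argument used in Lemma~\ref{lemma: exp_exp} (expert interior truthfulness), reducing the claim to an application of the continuous form of Lemma~\ref{lem:lsr}. The key observation is that, from the perspective of $r$'s expected payoff under $R_{ecgm}$, only the posterior on $Y$ matters, so a continuous report $\exprep_r$ effectively plays the role of an agent whose posterior on $Y$ equals $\exppost(\exprep_r)$.

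First I would expand $\exppay_{non-l}(\noniele{l}_k,\exprep_r)$ using Equation~\ref{eq:CCsemi}, rewriting the conditional density $\pdfs[\expsig\mid \noniSig{l}_r = \noniele{l}_k]$ as $(1-\nonipost(\noniele{l}_k))f_0(\expsig)+\nonipost(\noniele{l}_k)f_1(\expsig)$, and rewriting $R_{ecgm}(\exprep_r,\expsig)$ as a log of a ratio of two linear combinations of $f_0,f_1$, exactly as in the proof of Proposition~\ref{thm:ecgm}. Introduce the three functions
\begin{align*}
g_1^*(\expsig) &= (1-\nonipost(\noniele{l}_k))f_0(\expsig) + \nonipost(\noniele{l}_k)f_1(\expsig),\\
g_2^*(\expsig) &= (1-\exppost(\exprep_r))f_0(\expsig) + \exppost(\exprep_r)f_1(\expsig),\\
g_3^*(\expsig) &= (1-P)f_0(\expsig) + P f_1(\expsig),
\end{align*}
all of which are PDFs on $\expset$ (they are nonnegative by Assumption~\ref{asm:wise}(1), and each integrates to $1$ since $\int f_0 = \int f_1 = 1$). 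The expected payment then takes the compact form
\[
\exppay_{non-l}(\noniele{l}_k,\exprep_r) = \int_{\expset} g_1^*(\expsig)\,\log\!\left(\frac{g_2^*(\expsig)}{g_3^*(\expsig)}\right)\,{\rm d}\expsig.
\]

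Applying the continuous version of Lemma~\ref{lem:lsr} with $\mathbf{P}=g_1^*$, $\mathbf{Q}=g_2^*$, $\mathbf{R}=g_3^*$ immediately gives
\[
\exppay_{non-l}(\noniele{l}_k,\exprep_r) \le \int_{\expset} g_1^*(\expsig)\,\log\!\left(\frac{g_1^*(\expsig)}{g_3^*(\expsig)}\right)\,{\rm d}\expsig.
\]
By Assumption~\ref{asm:wise}(2) we may pick $\expelek{l}{k}\in\expset$ with $\exppost(\expelek{l}{k}) = \nonipost(\noniele{l}_k)$; plugging $\exprep_r = \expelek{l}{k}$ into $g_2^*$ produces exactly $g_1^*$, so the right-hand side above is $\exppay_{non-l}(\noniele{l}_k,\expelek{l}{k})$, proving the inequality.

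For the equality case, Lemma~\ref{lem:lsr} requires $g_1^* = g_2^*$ almost everywhere, i.e.\ $(\nonipost(\noniele{l}_k)-\exppost(\exprep_r))(f_1(\expsig)-f_0(\expsig)) = 0$ a.e. The main (minor) obstacle is ruling out the degenerate case: since $f_0$ and $f_1$ are distinct PDFs they must disagree on a set of positive measure, forcing $\exppost(\exprep_r) = \nonipost(\noniele{l}_k) = \exppost(\expelek{l}{k})$; then Lemma~\ref{lemma:exp-mono} (strict monotonicity of $\exppost$, a consequence of the informative-prior Assumption~\ref{asm:ip}) upgrades this to $\exprep_r = \expelek{l}{k}$, giving uniqueness.
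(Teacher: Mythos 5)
Your proof is correct and follows essentially the same route as the paper's: both arguments rest on the observation that the $R_{ecgm}$ expected payment depends only on the reporter's posterior over $Y$, and both reduce the inequality to the continuous KL-divergence bound of Lemma~\ref{lem:lsr}. The only cosmetic difference is that the paper first identifies $\exppay_{non-l}(\noniele{l}_k,\cdot)$ with $\exppay_{exp}(\expelek{l}{k},\cdot)$ and then invokes Lemma~\ref{lemma: exp_exp} as a black box, whereas you inline that lemma's computation directly (and handle the equality case via the factorization through $f_1-f_0$ plus Lemma~\ref{lemma:exp-mono}, which reaches the same conclusion).
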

\begin{proof}
According to Equation~\ref{eq:CCsemi}, 
\begin{align*}
    \exppay_{non-l}(\noniele{l}_k,\expelek{l}{k})=&
    \pdfs[\expele\mid \noniSig{l}_r=\nonisig{l}_r]R_{ecgm}(\expelek{l}{k}, \expele){\rm d}\expele
    \\
     =&\int_{\expset}  \left((1-\nonipost(\noniele{l}_k))\cdot f_0(\expele) +\nonipost(\noniele{l}_k)\cdot f_1(\expele)\right)R_{ecgm}(\expelek{l}{k}, \expele){\rm d}\expele\\
     =&\int_{\expset} \left((1-\exppost(\expelek{l}{k}))\cdot f_0(\expele) +\exppost(\expelek{l}{k})\cdot f_1(\expele)\right)R_{ecgm}(\expelek{l}{k}, \expele){\rm d}\expele\\
     =&\exppay_{exp}(\expelek{l}{k}, \expelek{l}{k}).\\
     \exppay_{non-l}(\noniele{l}_k,\exprep_r)
     =&\exppay_{exp}(\expelek{l}{k}, \exprep_r).\\
\end{align*}
Therefore, we only need to prove that 
\begin{equation*}
    \exppay_{exp}(\expelek{l}{k}, \expelek{l}{k}) \ge \exppay_{exp}(\expelek{l}{k}, \exprep_r).
\end{equation*}
Note that this is proved in Lemma~\ref{lemma: exp_exp}. Therefore the original inequality holds. The equality holds if and only if the equality in Lemma~\ref{lemma: exp_exp} holds, i.e. $\exprep=\expelek{l}{k}$. 
\end{proof}

Now we start from the interior truthfulness for group 1 non-expert part. We need to consider the special cases of  $\nonidist{l}_0(j)=\nonidist{l}_1(j)$ and $\nonidist{l}_0(k)=\nonidist{l}_1(k)$. We will deal with different cases separately since $(\noniA{l}_i, \noniB{l}_i)_{i=j,k}$ are defined differently . But we share the same technique in all cases. We use the definition of$(\noniA{l}_i, \noniB{l}_i)_{i=j,k}$ , or the special $\expelek{l}{k}$ to convert continuous-report expected payments into discrete-report expected payments, and compare them in the same domain. 
\begin{lemma}[Non-expert Interior Truthfulness]
\label{lemma:non-int}
For any group $l$ non-expert $r$, and for any $j,k\in[\nonisetnum{l}]$, we have 
\begin{equation*}
    \noniA{l}_k\cdot\nonipay{l}_{non-l}(\noniele{l}_k, \noniele{l}_k)+\noniB{l}_k\ge \noniA{l}_j\cdot\nonipay{l}_{non-l}(\noniele{l}_k, \noniele{l}_j)+\noniB{l}_j.
\end{equation*}
The equality holds if and only if $j=k$. 
\end{lemma}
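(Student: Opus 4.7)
The plan is to reduce the inequality to the already-proven Expert Exterior Truthfulness (Lemma~\ref{lemma:exp-noni}) via the ``expert twin'' $\expelek{l}{k}$ guaranteed by Assumption~\ref{asm:wise}(2), which satisfies $\exppost(\expelek{l}{k}) = \nonipost(\noniele{l}_k)$. The key observation is that, under conditional independence (Assumption~\ref{asm:ci}), any agent's posterior predictive distribution over a peer's signal depends only on her posterior on $Y$. Hence a group-$l$ non-expert with true signal $\noniele{l}_k$ forms identical predictions to an expert with signal $\expelek{l}{k}$, yielding the twin identity
\begin{align*}
\nonipay{l}_{non-l}(\noniele{l}_k, \nonirep{l}_r) = \nonipay{l}_{exp}(\expelek{l}{k}, \nonirep{l}_r)
\end{align*}
for every $\nonirep{l}_r \in \noniset{l}$. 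Both sides of the target inequality thus become expert expected payments evaluated at the same hypothetical signal $\expelek{l}{k}$.

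In the main case where $\nonidist{l}_0(k) \ne \nonidist{l}_1(k)$ and $\nonidist{l}_0(j) \ne \nonidist{l}_1(j)$, I would apply Lemma~\ref{lemma:exp-noni} twice at $\expsig_r = \expelek{l}{k}$. With target index $k$, the equality clause of that lemma (which holds precisely at $\expsig_r = \expelek{l}{k}$) turns the LHS into $\exppay_{exp}(\expelek{l}{k}, \expelek{l}{k})$; with target index $j$ and coefficient pair $(\noniA{l}_j, \noniB{l}_j)$, the inequality clause bounds the RHS above by the same $\exppay_{exp}(\expelek{l}{k}, \expelek{l}{k})$. Chaining the two gives $\text{LHS} \ge \text{RHS}$ immediately.

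For the corner cases where $\nonidist{l}_0(i) = \nonidist{l}_1(i)$ for some $i \in \{j,k\}$, one has $\noniA{l}_i = \noniB{l}_i = 0$ and $\exppost(\expelek{l}{i}) = P$; the computation already performed inside the proof of Lemma~\ref{lemma:exp-noni} shows $\exppay_{exp}(\expsig_r, \expsig_r) = 0$ exactly when $\exppost(\expsig_r) = P$, so the relevant side either vanishes or remains strictly positive, in each combination of cases yielding $\text{LHS} \ge \text{RHS}$ by the same twin-identity argument. For the equality characterization, Lemma~\ref{lemma:exp-noni}'s strictness clause forces $\expelek{l}{k} = \expelek{l}{j}$, which by the strict monotonicity of $\exppost(\cdot)$ established in Lemma~\ref{lemma:exp-mono} is equivalent to $\nonipost(\noniele{l}_k) = \nonipost(\noniele{l}_j)$. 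The main obstacle is therefore not the inequality itself, which falls out almost formally from the twin identity, but rather reconciling the stated ``iff $j=k$'' with this weaker ``same posterior'' condition: the latter is exactly what \newtruthfulness{} tolerates, and it coincides with $j=k$ only under the implicit nondegeneracy assumption that distinct non-expert signal labels in $\noniset{l}$ produce distinct posteriors on $Y$.
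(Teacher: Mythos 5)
Your proof is correct, but it takes a genuinely shorter route than the paper's. The paper proves this lemma by a three-case direct computation: it expands $\noniB{l}_j$ via Equation~\ref{eq:noniB}, writes the difference $\nonipay{l}_{non-l}(\noniele{l}_k,\noniele{l}_j)-\nonipay{l}_{non-l}(\noniele{l}_j,\noniele{l}_j)$ as $(\nonipost(\noniele{l}_k)-\nonipost(\noniele{l}_j))\cdot\frac{\partial \nonipay{l}_{exp}}{\partial \exppost_r}$ evaluated at $\nonipost(\noniele{l}_j)$, cancels against the explicit form of $\noniA{l}_j$ to identify the right-hand side with $\exppay_{non-l}(\noniele{l}_k,\expelek{l}{j})$, and only then invokes Lemma~\ref{lemma:noni-max}. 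You instead exploit the twin identity $\nonipay{l}_{non-l}(\noniele{l}_k,\cdot)=\nonipay{l}_{exp}(\expelek{l}{k},\cdot)$ (valid under Assumption~\ref{asm:ci} because the posterior prediction of a peer's signal depends on one's own signal only through the posterior on $Y$; the paper establishes the same identity inside its proof of Lemma~\ref{lemma:exp-noni}) and apply Lemma~\ref{lemma:exp-noni} twice at the single point $\expsig_r=\expelek{l}{k}$: the equality clause with index $k$ identifies the left-hand side with $\exppay_{exp}(\expelek{l}{k},\expelek{l}{k})$, and the inequality clause with index $j$ bounds the right-hand side by that same quantity. This is logically sound (Lemma~\ref{lemma:exp-noni} is proved before and independently of this lemma, so there is no circularity), it absorbs the degenerate subcases $\nonidist{l}_0(i)=\nonidist{l}_1(i)$ uniformly rather than splitting into three cases, and it makes transparent that non-expert interior truthfulness is a formal consequence of expert exterior truthfulness plus the signal-richness part of Assumption~\ref{asm:wise}. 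The one loose end is your equality characterization: you leave ``distinct labels in $\noniset{l}$ have distinct posteriors on $Y$'' as an implicit nondegeneracy hypothesis, but it is in fact a consequence of the stated assumptions --- under Assumption~\ref{asm:ci} two signals with $\nonipost(\noniele{l}_j)=\nonipost(\noniele{l}_k)$ and $j\neq k$ would induce identical posterior predictions of every peer's signal, violating Assumption~\ref{asm:ip}; this is exactly the argument the paper runs in its Case~1, and invoking it upgrades your ``iff same posterior'' to the claimed ``iff $j=k$''.
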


\begin{proof}
In this proof we use $\expelek{l}{k}$ to denote a special expert signal such that $\exppost(\expelek{l}{k})=\nonipost(\noniele{l}_k)$. And we use $\expelek{l}{j}$ to denote a special expert signal such that $\exppost(\expelek{l}{j})=\nonipost(\noniele{l}_j)$.
We process this proof in three different cases:
\begin{enumerate}
    \item $\nonidist{l}_0(j)=\nonidist{l}_1(j)$ and $\nonidist{l}_0(k)=\nonidist{l}_1(k)$
    \item $\nonidist{l}_0(j)=\nonidist{l}_1(j)$ but $\nonidist{l}_0(k)\neq\nonidist{l}_1(k)$ \item $\nonidist{l}_0(j)\neq\nonidist{l}_1(j)$
\end{enumerate}

\textbf{Case 1} $\nonidist{l}_0(j)=\nonidist{l}_1(j)$ and $\nonidist{l}_0(k)=\nonidist{l}_1(k)$
Note in this case $\noniA{l}_j=\noniB{l}_j=\noniA{l}_k=\noniB{l}_k=0. $ Therefore $\noniA{l}_k\cdot\nonipay{l}_{non-l}(\noniele{l}_k, \noniele{l}_k)+\noniB{l}_k$ and $ \noniA{l}_j\cdot\nonipay{l}_{non-l}(\noniele{l}_k, \noniele{l}_j)+\noniB{l}_j$ equals to zero. 

Then we show that such case only exists when $j=k$ according to Assumption~\ref{asm:ip}. Consider group $l$ non-expert $r$'s posterior distribution of another group $l$ non-expert's private signal $\noniSig{l}$:
\begin{align*}
    \Pr[\noniSig{l}=\noniele{l}_i\mid \noniSig{l}_r=\noniele{l}_k] =& (1-\nonipost(\noniele{l}_k)) \nonidist{l}_0(i) + \nonipost(\noniele{l}_k)\cdot \nonidist{l}_1(i). \\
   \Pr[\noniSig{l}=\noniele{l}_i\mid \noniSig{l}_r=\noniele{l}_j] =& (1-\nonipost(\noniele{l}_j)) \nonidist{l}_0(i) + \nonipost(\noniele{l}_j)\cdot \nonidist{l}_1(i).
\end{align*}
Note that when $\nonidist{l}_0(k)=\nonidist{l}_1(k)$, 
\begin{equation*}
    \nonipost(\noniele{l}_k) = \Pr[Y=1\mid \noniSig{l}=\noniele{l}_k]=\frac{P\cdot \nonidist{l}_1(k)}{(1-P)\nonidist{l}_0(k)+P\cdot \nonidist{l}_1(k)}=P.
\end{equation*}
Similarly, $\nonipost(\noniele{l}_j)=P.$Therefore, we have $\Pr[\noniSig{l}=\noniele{l}_i\mid \noniSig{l}_r=\noniele{l}_k] = \Pr[\noniSig{l}=\noniele{l}_i\mid \noniSig{l}_r=\noniele{l}_j]$ for every $i\in [\nonisetnum{l}]$. According to Assumption~\ref{asm:ip}, same posterior distributions lead to same private signal. Therefore $j=k$. 

\textbf{Case 2} $\nonidist{l}_0(j)=\nonidist{l}_1(j)$ but $\nonidist{l}_0(k)\neq\nonidist{l}_1(k)$.In this part the right side of the inequality ($\noniA{l}_j\cdot\nonipay{l}_{non-l}(\noniele{l}_k, \noniele{l}_j)+\noniB{l}_j$) equals to 0. Recall the definition of $\noniB{l}_k$ in Equation~\ref{eq:noniB}:
\begin{equation*}
        \noniB{l}_k = \exppay_{non-l}(\noniele{l}_k,\expelek{l}{k}) - \noniA{l}_k\cdot \nonipay{l}_{non-l}(\noniele{l}_k,\noniele{l}_k)
    \end{equation*}
    where $\expelek{l}{k}$ satisfies $\exppost(\expelek{l}{k})=\nonipost(\noniele{l}_k)$.
Therefore, we only need to prove that 
\begin{equation*}
    \exppay_{non-l}(\noniele{l}_k,\expelek{l}{k})>0.
\end{equation*}
From Lemma~\ref{lem:lsr}, with the same technique of Lemma~\ref{lemma: exp_exp} we have 
\begin{align*}
    &\exppay_{non-l}(\noniele{l}_k,\expelek{l}{k})\\
    = & \int_{\expset}\left((1-\nonipost(\noniele{l}_k))\cdot f_0(\expele) +\nonipost(\noniele{l}_k)\cdot f_1(\expele)\right)  \log \left( \frac{(1-\nonipost(\noniele{l}_k))\cdot f_0(\expele) +\nonipost(\noniele{l}_k)\cdot f_1(\expele)}{(1-P)\cdot f_0(\expele) +P\cdot f_1(\expele)}\right) {\rm d}\expele\\
   \ge & \int_{\expset}\left((1-\nonipost(\noniele{l}_k))\cdot f_0(\expele) +\nonipost(\noniele{l}_k)\cdot f_1(\expele)\right)  \log \left( \frac{(1-P)\cdot f_0(\expele) +P\cdot f_1(\expele)}{(1-P)\cdot f_0(\expele) +P\cdot f_1(\expele)}\right) {\rm d}\expele\\
    = & 0.
\end{align*}
Moreover, since $\nonidist{l}_0(k)\neq\nonidist{l}_1(k)$, $\nonipost(\noniele{l}_k)\neq P$, the equality does not hold. Therefore, $\exppay_{non-l}(\noniele{l}_k,\expelek{l}{k})>0.$

\textbf{Case 3}  $\nonidist{l}_0(j)\neq\nonidist{l}_1(j)$. First we show that this case we can convert the original inequality to 
\begin{equation*}
    \exppay_{non-l}(\noniele{l}_k,\expelek{l}{k})\ge\noniA{l}_j\cdot\nonipay{l}_{non-l}(\noniele{l}_k,\noniele{l}_j)+\noniB{l}_j.
\end{equation*}
When $\nonidist{l}_0(k)\neq\nonidist{l}_1(k)$, we expand $\noniB{l}_k$ as in case 2 and get what we want. 
\begin{equation*}
        \noniB{l}_k = \exppay_{non-l}(\noniele{l}_k,\expelek{l}{k}) - \noniA{l}_k\cdot \nonipay{l}_{non-l}(\noniele{l}_k,\noniele{l}_k). 
    \end{equation*}
When $\nonidist{l}_0(k)=\nonidist{l}_1(k)$, the left side of the original inequality 
\begin{equation*}
    \noniA{l}_k\cdot \nonipay{l}_{non-l}(\noniele{l}_k,\noniele{l}_k) + \noniB{l}_k = 0.
\end{equation*}
We show that in this case $ \exppay_{non-l}(\noniele{l}_k,\expelek{l}{k})=0$ as well, and we can still convert the inequality. 
Note that since $\nonidist{l}_0(k)=\nonidist{l}_1(k)$, $\exppost(\expelek{l}{k})=\nonipost(\noniele{l}_k)=P$. Therefore, 
\begin{align*}
    &\exppay_{non-l}(\noniele{l}_k,\expelek{l}{k})\\
    = & \int_{\expset}\left((1-\nonipost(\noniele{l}_k))\cdot f_0(\expele) +\nonipost(\noniele{l}_k)\cdot f_1(\expele)\right)  \log \left( \frac{(1-\nonipost(\noniele{l}_k))\cdot f_0(\expele) +\nonipost(\noniele{l}_k)\cdot f_1(\expele)}{(1-P)\cdot f_0(\expele) +P\cdot f_1(\expele)}\right) {\rm d}\expele\\
    = & \int_{\expset}\left((1-\nonipost(\noniele{l}_k))\cdot f_0(\expele) +\nonipost(\noniele{l}_k)\cdot f_1(\expele)\right)  \log \left( \frac{(1-P)\cdot f_0(\expele) +P\cdot f_1(\expele)}{(1-P)\cdot f_0(\expele) +P\cdot f_1(\expele)}\right) {\rm d}\expele\\
    = & 0.
\end{align*}

Then we will start from the right side  ($\noniA{l}_j\cdot\nonipay{l}_{non-l}(\noniele{l}_k,\noniele{l}_j)+\noniB{l}_j$) and finally reach the left ($\exppay_{non-l}(\noniele{l}_k,\expelek{l}{k})$). First expand the $\noniB{l}_j$ according to Equation~\ref{eq:noniB}, and the right side of the inequality equals to
\begin{equation*}
    \noniA{l}_j\cdot\nonipay{l}_{non-l}(\noniele{l}_k,\noniele{l}_j)+\exppay_{non-l}(\noniele{l}_j,\expelek{l}{j}) - \noniA{l}_j\cdot \nonipay{l}_{non-l}(\noniele{l}_j,\noniele{l}_j).
\end{equation*}
Then we look at the difference between $\nonipay{l}_{non-l}(\noniele{l}_k,\noniele{l}_j)$ and $\nonipay{l}_{non-l}(\noniele{l}_j,\noniele{l}_j)$.
\begin{align*}
   \nonipay{l}_{non-l}(\noniele{l}_k,\noniele{l}_j) =&\sum_{i=1}^{\nonisetnum{l}} \left((1-\nonipost(\noniele{l}_k))\cdot \nonidist{l}_0(i)+\nonipost(\noniele{l}_k)\cdot \nonidist{l}_1(i)\right)R_{sppm}^{l}(\noniele{l}_j, \noniele{l}_i)\\
    \nonipay{l}_{non-l}(\noniele{l}_j,\noniele{l}_j) =&\sum_{i=1}^{\nonisetnum{l}} \left((1-\nonipost(\noniele{l}_j))\cdot \nonidist{l}_0(i)+\nonipost(\noniele{l}_j)\cdot \nonidist{l}_1(i)\right)R_{sppm}^{l}(\noniele{l}_j, \noniele{l}_i)
\end{align*}
Therefore, 
\begin{align*}
   &\nonipay{l}_{non-l}(\noniele{l}_k,\noniele{l}_j) -  \nonipay{l}_{non-l}(\noniele{l}_j,\noniele{l}_j) \\
   = & \sum_{i=1}^{\nonisetnum{l}} \left((\nonipost(\noniele{l}_j)-\nonipost(\noniele{l}_k))\cdot \nonidist{l}_0(i)+(\nonipost(\noniele{l}_k) - \nonipost(\noniele{l}_j))\cdot \nonidist{l}_1(i)\right)R_{sppm}^{l}(\noniele{l}_j, \noniele{l}_i)\\
    = & (\nonipost(\noniele{l}_k) - \nonipost(\noniele{l}_j))\sum_{i=1}^{\nonisetnum{l}} \left(\nonidist{l}_1(i)-\nonidist{l}_0(i)\right)R_{sppm}^{l}(\noniele{l}_j, \noniele{l}_i)
\end{align*}
Recall the derivative of $\nonipay{l}_{exp}$ (Equation~\ref{eq:noni-partial}):
\begin{equation*}
     \frac{\partial  \nonipay{l}_{exp}(\expsig_r, \noniele{l}_j)}{\partial \exppost_r} = \sum_{i=1}^{\nonisetnum{l}}  \left(\nonidist{l}_1(i)-\nonidist{l}_0(i)\right) R_{sppm}^{l}(\noniele{l}_j, \noniele{l}_i). 
\end{equation*}
Expand $\noniA{l}_j$ according to Equation~\ref{eq:noniA}, and we get 
\begin{align*}
    &\noniA{l}_j\cdot\left(\nonipay{l}_{non-l}(\noniele{l}_k,\noniele{l}_j) -  \nonipay{l}_{non-l}(\noniele{l}_j,\noniele{l}_j)\right) \\
    = & \left.\frac{\frac{{\rm d}\exppay_{exp}(\expsig_r, \expsig_r)}{{\rm d}\exppost_r}}{\frac{\partial  \nonipay{l}_{exp}(\expsig_r, \noniele{l}_j)}{\partial \exppost_r}}\right|_{\exppost_r=\nonipost(\noniele{l}_j)} \cdot (\nonipost(\noniele{l}_k) - \nonipost(\noniele{l}_j))\cdot \frac{\partial  \nonipay{l}_{exp}(\expsig_r, \noniele{l}_j)}{\partial \exppost_r}\\
    = &(\nonipost(\noniele{l}_k) - \nonipost(\noniele{l}_j))\cdot  \left.\frac{{\rm d}\exppay_{exp}(\expsig_r, \expsig_r)}{{\rm d}\exppost_r}\right|_{\exppost_r=\nonipost(\noniele{l}_j)}.
\end{align*}
From Equation~\ref{eq:exp-partial} (the derivative of $\exppay_{exp}$), we have 
\begin{align*}
    & (\nonipost(\noniele{l}_k) - \nonipost(\noniele{l}_j))\cdot  \left.\frac{{\rm d}\exppay_{exp}(\expsig_r, \expsig_r)}{{\rm d}\exppost_r}\right|_{\exppost_r=\nonipost(\noniele{l}_j)}\\
    = & (\nonipost(\noniele{l}_k) - \nonipost(\noniele{l}_j))\cdot  \left.\frac{{\rm d} \exppay^*(p)}{{\rm d} p}\right|_{p=\nonipost(\noniele{l}_j)}\\
    = &(\nonipost(\noniele{l}_k) - \nonipost(\noniele{l}_j))\cdot  \int_{\expset} \left(f_1(\expele)-f_0(\expele)\right)  \log \left( \frac{(1-\nonipost(\noniele{l}_j))\cdot f_0(\expele) +\nonipost(\noniele{l}_j)\cdot f_1(\expele)}{(1-P)\cdot f_0(\expele) +P\cdot f_1(\expele)}\right) {\rm d}\expele.
\end{align*}
Now we split the above term into the difference of two terms, one is exactly $\exppay_{non-l}(\noniele{l}_j, \expelek{l}{j})$, and  the other is smaller than $\exppay_{non-l}(\noniele{l}_k, \expelek{l}{k})$. Note that 
\begin{align*}
   (\nonipost(\noniele{l}_k) - \nonipost(\noniele{l}_j))\cdot \left(f_1(\expele)-f_0(\expele)\right) = & \left((1-\nonipost(\noniele{l}_k) )\cdot f_0(\expele) + \nonipost(\noniele{l}_k) \cdot f_1(\expele) \right)\\
   - & \left((1-\nonipost(\noniele{l}_j))\cdot f_0(\expele) + \nonipost(\noniele{l}_j)\cdot f_1(\expele) \right)
\end{align*}
Therefore, 
\begin{align*}
    &(\nonipost(\noniele{l}_k) - \nonipost(\noniele{l}_j))\cdot  \int_{\expset} \left(f_1(\expele)-f_0(\expele)\right)  \log \left( \frac{(1-\nonipost(\noniele{l}_j))\cdot f_0(\expele) +\nonipost(\noniele{l}_j)\cdot f_1(\expele)}{(1-P)\cdot f_0(\expele) +P\cdot f_1(\expele)}\right) {\rm d}\expele
    \\ 
    = & \int_{\expset}\left((1-\nonipost(\noniele{l}_k) )\cdot f_0(\expele) + \nonipost(\noniele{l}_k) \cdot f_1(\expele) \right)\log \left( \frac{(1-\nonipost(\noniele{l}_j))\cdot f_0(\expele) +\nonipost(\noniele{l}_j)\cdot f_1(\expele)}{(1-P)\cdot f_0(\expele) +P\cdot f_1(\expele)}\right) {\rm d}\expele\\
    - & \int_{\expset} \left((1-\nonipost(\noniele{l}_j))\cdot f_0(\expele) + \nonipost(\noniele{l}_j)\cdot f_1(\expele) \right)\log \left( \frac{(1-\nonipost(\noniele{l}_j))\cdot f_0(\expele) +\nonipost(\noniele{l}_j)\cdot f_1(\expele)}{(1-P)\cdot f_0(\expele) +P\cdot f_1(\expele)}\right) {\rm d}\expele\\
    = & \exppay_{non-l}(\noniele{l}_k, \expelek{l}{j}) - \exppay_{non-l}(\noniele{l}_j, \expelek{l}{j}).
\end{align*}
Therefore, going back to the right side,we have 
\begin{align*}
    &\noniA{l}_j\cdot\nonipay{l}_{non-l}(\noniele{l}_k,\noniele{l}_j)+\exppay_{non-l}(\noniele{l}_j,\expelek{l}{j}) - \noniA{l}_j\cdot \nonipay{l}_{non-l}(\noniele{l}_j,\noniele{l}_j) \\
    = & \exppay_{non-l}(\noniele{l}_j,\expelek{l}{j}) + \exppay_{non-l}(\noniele{l}_k, \expelek{l}{j}) - \exppay_{non-l}(\noniele{l}_j, \expelek{l}{j})\\
    = & \exppay_{non-l}(\noniele{l}_k, \expelek{l}{j}).
\end{align*}
Note that from Lemma~\ref{lemma:noni-max} we have 
\begin{equation*}
    \exppay_{non-l}(\noniele{l}_k, \expelek{l}{j})\le \exppay_{non-l}(\noniele{l}_k, \expelek{l}{k}).
\end{equation*}
and the equality holds if and only if $j=k$. Therefore, the original inequality 
\begin{equation*}
    \exppay_{non-l}(\noniele{l}_k, \expelek{l}{j})\ge\noniA{l}_j\cdot\nonipay{l}_{non-l}(\noniele{l}_k,\noniele{l}_j)+\noniB{l}_j.
\end{equation*}
holds, and the equality holds if and only if $j=k$.

\end{proof}

Then we consider the non-expert's exterior truthfulness towards experts. In this case we also need to consider the special case of $\nonidist{l}_0(k)=\nonidist{l}_1(k)$. When $\nonidist{l}_0(k)\neq\nonidist{l}_1(k)$ we can convert $\nonipay{l}_{non-l}$ into $\exppay_{non-l}$, and then take advantage of Lemma~\ref{lemma:noni-max}. When $\nonidist{l}_0(k)=\nonidist{l}_1(k)$, we prove that the inequality can still be converted to the Lemma~\ref{lemma:noni-max} form. 
\begin{lemma}[Non-expert Exterior Truthfulness-Experts]
For any $l\in[\noninum]$ and any group $l$ non-expert $r$, and for any $k\in[\nonisetnum{l}], \exprep_r\in\expset$, we have
\begin{equation*}
    \noniA{l}_k\cdot \nonipay{l}_{non-l}(\noniele{l}_k,\noniele{l}_k)+\noniB{l}_k\ge \exppay_{non-l}(\noniele{l}_k,\exprep_r).
\end{equation*}
The equation holds if and only if $\exprep_r=\expelek{l}{k}$, where $\exppost(\expelek{l}{k})=\nonipost(\noniele{l}_k).$
\end{lemma}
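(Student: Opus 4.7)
The plan is to split the argument into two cases according to whether $\nonidist{l}_0(k)=\nonidist{l}_1(k)$ or not, matching the case split in the definition of the coefficients $(\noniA{l}_k,\noniB{l}_k)$.

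In the non-degenerate case $\nonidist{l}_0(k)\neq\nonidist{l}_1(k)$, the definition of $\noniB{l}_k$ in Equation~\ref{eq:noniB} is tailored exactly so that
\[
\noniA{l}_k\cdot \nonipay{l}_{non-l}(\noniele{l}_k,\noniele{l}_k)+\noniB{l}_k \;=\; \exppay_{non-l}(\noniele{l}_k,\expelek{l}{k}),
\]
after the $\noniA{l}_k\cdot\nonipay{l}_{non-l}(\noniele{l}_k,\noniele{l}_k)$ terms cancel. I would then invoke Lemma~\ref{lemma:noni-max} to conclude that $\exppay_{non-l}(\noniele{l}_k,\expelek{l}{k})\ge \exppay_{non-l}(\noniele{l}_k,\exprep_r)$ with equality iff $\exprep_r=\expelek{l}{k}$. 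This case is essentially a bookkeeping observation and should take only a few lines.

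The degenerate case $\nonidist{l}_0(k)=\nonidist{l}_1(k)$ is where the real content lies. Here the LHS collapses to $0$ since $\noniA{l}_k=\noniB{l}_k=0$, and also $\nonipost(\noniele{l}_k)=P$, so that $\expelek{l}{k}$ is any expert signal with $\exppost(\expelek{l}{k})=P$ (unique in posterior by Lemma~\ref{lemma:exp-mono}). The key computation is that the posterior PDF $\pdfs[\expele\mid\noniSig{l}_r=\noniele{l}_k]$ then equals the prior PDF $\pdfs[\expele]=(1-P)f_0(\expele)+Pf_1(\expele)$. Substituting into $\exppay_{non-l}(\noniele{l}_k,\exprep_r)$ and applying the identity $\exppost(\expele)=\frac{Pf_1(\expele)}{(1-P)f_0(\expele)+Pf_1(\expele)}$ inside $R_{ecgm}$, the expected payment simplifies to
\[
\int_{\expset}\pdfs[\expele]\,\log\!\left(\frac{(1-\exppost(\exprep_r))f_0(\expele)+\exppost(\exprep_r)f_1(\expele)}{\pdfs[\expele]}\right){\rm d}\expele,
\]
which is the negative KL-divergence between the prior mixture and the mixture induced by $\exppost(\exprep_r)$, and is therefore $\le 0$ by the non-negativity of KL-divergence reviewed in Appendix~\ref{sec:KL}. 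Equality holds iff the two mixtures agree almost everywhere, which by Assumption~\ref{asm:ip} forces $\exppost(\exprep_r)=P=\exppost(\expelek{l}{k})$, hence $\exprep_r=\expelek{l}{k}$ by Lemma~\ref{lemma:exp-mono}.

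The main obstacle is the degenerate case: one must verify carefully that the mixture collapse indeed turns $\exppay_{non-l}(\noniele{l}_k,\exprep_r)$ into a negative KL-divergence, and that the equality condition matches the unique $\expelek{l}{k}$ guaranteed by Assumption~\ref{asm:wise}. The non-degenerate case is a direct algebraic cancellation plus one appeal to Lemma~\ref{lemma:noni-max}, so I would present it first to establish the pattern and then handle the degenerate case separately.
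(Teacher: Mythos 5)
Your proposal is correct and follows essentially the same route as the paper: the non-degenerate case is the identical cancellation of $\noniA{l}_k\cdot\nonipay{l}_{non-l}(\noniele{l}_k,\noniele{l}_k)$ via Equation~\ref{eq:noniB} followed by an appeal to Lemma~\ref{lemma:noni-max}. In the degenerate case the paper also reuses Lemma~\ref{lemma:noni-max} and then just checks that $\exppay_{non-l}(\noniele{l}_k,\expelek{l}{k})=0$ (the logarithm collapses to $\log 1$ when $\nonipost(\noniele{l}_k)=P$), whereas you unfold the same underlying KL-divergence inequality directly for arbitrary $\exprep_r$; the two presentations are interchangeable and reach the same equality condition.
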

\begin{proof}
We process this proof in two different cases: $\nonidist{l}_0(k)\neq\nonidist{l}_1(k)$ and $\nonidist{l}_0(k)=\nonidist{l}_1(k)$.

\textbf{Case 1} $\nonidist{l}_0(k)\neq\nonidist{l}_1(k)$  From the definition of $\noniB{l}_k$ we have:
\begin{equation*}
        \noniB{l}_k = \exppay_{non-l}(\noniele{l}_k,\expelek{l}{k}) - \noniA{l}_k\cdot \nonipay{l}_{non-l}(\noniele{l}_k,\noniele{l}_k)
    \end{equation*}
    where $\expelek{l}{k}$ satisfies $\exppost(\expelek{l}{k})=\nonipost(\noniele{l}_k)$. Therefore, we only need to prove that 
    \begin{equation*}
        \exppay_{non-l}(\noniele{l}_k,\expelek{l}{k})\ge \exppay_{non-l}(\noniele{l}_k,\exprep_r).
    \end{equation*}
    Note that this is proved in Lemma~\ref{lemma:noni-max}, and the equality holds if and only if $\exprep_r=\expelek{l}{k}$. 
    
    \textbf{Case 2} $\nonidist{l}_0(k)=\nonidist{l}_1(k)$. In this case since $\noniA{l}_k=\noniB{l}_k=0$, we only need to prove that $\exppay_{non-l}(\noniele{l}_k,\exprep_r)\le 0$. Since we have proved that $ \exppay_{non-l}(\noniele{l}_k,\exprep_r)\le \exppay_{non-l}(\noniele{l}_k,\expelek{l}{k}) $ in Lemma~\ref{lemma:noni-max} (and the equality holds if and only if $\exprep_r=\expelek{l}{k}$ ), we only need to show that $\exppay_{non-l}(\noniele{l}_k,\expelek{l}{k})=0$ . Note that when $\nonidist{l}_0(k)=\nonidist{l}_1(k)$, $\nonipost(\noniele{l}_k)=P$. Therefore, \begin{align*}
    &\exppay_{non-l}(\noniele{l}_k,\expelek{l}{k})\\
    = & \int_{\expset}\left((1-\nonipost(\noniele{l}_k))\cdot f_0(\expele) +\nonipost(\noniele{l}_k)\cdot f_1(\expele)\right)  \log \left( \frac{(1-\nonipost(\noniele{l}_k))\cdot f_0(\expele) +\nonipost(\noniele{l}_k)\cdot f_1(\expele)}{(1-P)\cdot f_0(\expele) +P\cdot f_1(\expele)}\right) {\rm d}\expele\\
    = & \int_{\expset}\left((1-\nonipost(\noniele{l}_k))\cdot f_0(\expele) +\nonipost(\noniele{l}_k)\cdot f_1(\expele)\right)  \log \left( \frac{(1-P)\cdot f_0(\expele) +P\cdot f_1(\expele)}{(1-P)\cdot f_0(\expele) +P\cdot f_1(\expele)}\right) {\rm d}\expele\\
    = & 0.
\end{align*}
    Therefore, 
    \begin{equation*}
        \exppay_{non-l}(\noniele{l}_k,\exprep_r)\le \exppay_{non-l}(\noniele{l}_k,\expelek{l}{k})=0. 
    \end{equation*}
    and the equality holds if and only if $\exprep_r=\expelek{l}{k}$. 
\end{proof}

Then we prove non-expert's exterior truthfulness towards another non-expert's group. The proof shares a similar process and techniques with non-expert's interior truthfulness. 
\begin{lemma}[Non-expert Exterior Truthfulness-non-experts]
For any $l,h\in [\noninum]$, and for any group $l$ non-expert $r$, and for any $k\in[\nonisetnum{l}], j\in[\nonisetnum{h}]$, we have

\begin{equation*}
    \noniA{l}_k\cdot \nonipay{l}_{non-l}(\noniele{l}_k,\noniele{l}_k)+\noniB{l}_k\ge \noniA{h}_j\cdot \nonipay{h}_{non-l}(\noniele{l}_k,\noniele{h}_j)+\noniB{h}_j.
\end{equation*}
The equality holds if and only if $\nonipost(\noniele{l}_k)=\nonipost(\noniele{h}_j)$. 
\end{lemma}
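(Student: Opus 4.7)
The plan mirrors Case~3 of the interior non-expert truthfulness lemma, but now bridges two different non-expert groups through the expert signal space. First I would dispose of the degenerate cases. If $\nonidist{l}_0(k)=\nonidist{l}_1(k)$, then $\noniA{l}_k=\noniB{l}_k=0$ and $\nonipost(\noniele{l}_k)=P$, so that $\expelek{l}{k}$ satisfies $\exppost(\expelek{l}{k})=P$; the same computation used in Case~2 of Lemma~\ref{lemma:non-int} shows that $\exppay_{non-l}(\noniele{l}_k,\expelek{l}{k})=0$, and Lemma~\ref{lemma:noni-max} would then give the inequality once the right-hand side is converted. A symmetric reduction handles $\nonidist{h}_0(j)=\nonidist{h}_1(j)$.

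In the generic case, I would first rewrite the left-hand side using the definition of $\noniB{l}_k$ in Equation~\ref{eq:noniB}, which immediately gives
\begin{equation*}
\noniA{l}_k\cdot \nonipay{l}_{non-l}(\noniele{l}_k,\noniele{l}_k)+\noniB{l}_k = \exppay_{non-l}(\noniele{l}_k,\expelek{l}{k}).
\end{equation*}
The main task is then to show that the right-hand side equals $\exppay_{non-l}(\noniele{l}_k,\expelek{h}{j})$. For this, I would substitute the definition of $\noniB{h}_j$ to obtain
\begin{equation*}
\noniA{h}_j\cdot \nonipay{h}_{non-l}(\noniele{l}_k,\noniele{h}_j)+\noniB{h}_j
= \exppay_{non-h}(\noniele{h}_j,\expelek{h}{j}) + \noniA{h}_j\bigl(\nonipay{h}_{non-l}(\noniele{l}_k,\noniele{h}_j)-\nonipay{h}_{non-h}(\noniele{h}_j,\noniele{h}_j)\bigr),
\end{equation*}
and then compute the difference in the parenthesis exactly as in Case~3 of Lemma~\ref{lemma:non-int}. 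Because agents in groups $l$ and $h$ differ only through their posterior on $Y$, the bracket factors as $(\nonipost(\noniele{l}_k)-\nonipost(\noniele{h}_j))\cdot \frac{\partial \nonipay{h}_{exp}(\expsig_r,\noniele{h}_j)}{\partial \exppost_r}$, and multiplying by $\noniA{h}_j$ (given by Equation~\ref{eq:noniA}) cancels the non-expert derivative and leaves $(\nonipost(\noniele{l}_k)-\nonipost(\noniele{h}_j))\cdot \frac{d \exppay_{exp}(\expsig_r,\expsig_r)}{d\exppost_r}|_{\exppost_r=\nonipost(\noniele{h}_j)}$. Using the explicit derivative formula of Equation~\ref{eq:exp-partial} and the identity $(\nonipost(\noniele{l}_k)-\nonipost(\noniele{h}_j))(f_1(\expele)-f_0(\expele))$ as a difference of two mixture densities, this rearranges into $\exppay_{non-l}(\noniele{l}_k,\expelek{h}{j})-\exppay_{non-h}(\noniele{h}_j,\expelek{h}{j})$, so the whole right-hand side collapses to $\exppay_{non-l}(\noniele{l}_k,\expelek{h}{j})$.

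Having reduced the claim to $\exppay_{non-l}(\noniele{l}_k,\expelek{l}{k})\geq \exppay_{non-l}(\noniele{l}_k,\expelek{h}{j})$, the conclusion follows immediately from Lemma~\ref{lemma:noni-max}, which identifies $\expelek{l}{k}$ as the unique maximizer of $\exppay_{non-l}(\noniele{l}_k,\cdot)$ over $\expset$. The equality condition becomes $\expelek{h}{j}=\expelek{l}{k}$, which by Assumption~\ref{asm:ip} and the strict monotonicity of $\exppost(\cdot)$ established in Lemma~\ref{lemma:exp-mono} is equivalent to $\exppost(\expelek{l}{k})=\exppost(\expelek{h}{j})$, i.e.\ $\nonipost(\noniele{l}_k)=\nonipost(\noniele{h}_j)$.

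The main obstacle is the algebraic step that converts $\noniA{h}_j\cdot\nonipay{h}_{non-l}(\noniele{l}_k,\noniele{h}_j)$ into an expectation in the expert signal space: it relies on the fact that the only dependence of $\nonipay{h}_{non-\cdot}(\cdot,\noniele{h}_j)$ on the first argument is affine through $\nonipost$ of that argument, which lets the difference between two non-expert groups be expressed cleanly as a directional derivative of $\nonipay{h}_{exp}$, whose pre-image under the definition of $\noniA{h}_j$ is a directional derivative of the expert expected payment. Once this bridging identity is carried out, the rest is an appeal to Lemma~\ref{lemma:noni-max}.
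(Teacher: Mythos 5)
Your proposal is correct, and it reaches the conclusion by a route that diverges from the paper's in the final step of the main case. Both proofs share the same skeleton: case analysis on whether $\nonidist{l}_0(k)=\nonidist{l}_1(k)$ and $\nonidist{h}_0(j)=\nonidist{h}_1(j)$, expansion of $\noniB{l}_k$ and $\noniB{h}_j$ via Equation~\ref{eq:noniB}, and use of the bridging signals $\expelek{l}{k},\expelek{h}{j}$ from Assumption~\ref{asm:wise}. Where you differ is in how the reduced inequality is closed. You transplant the Case~3 computation of the interior lemma (Lemma~\ref{lemma:non-int}) across groups: exploiting that $\nonipay{h}_{non-\cdot}(\cdot,\noniele{h}_j)$ depends on its first argument only affinely through the posterior, you collapse the entire right-hand side to $\exppay_{non-l}(\noniele{l}_k,\expelek{h}{j})$ and finish with Lemma~\ref{lemma:noni-max}, i.e.\ the statement that $\expelek{l}{k}$ uniquely maximizes $\exppay_{non-l}(\noniele{l}_k,\cdot)$. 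The paper instead leaves the right-hand side in the form $\exppay_{non-h}(\noniele{h}_j,\expelek{h}{j})-\noniA{h}_j\cdot\nonipay{h}_{non-h}(\noniele{h}_j,\noniele{h}_j)+\noniA{h}_j\cdot\nonipay{h}_{non-l}(\noniele{l}_k,\noniele{h}_j)$, rewrites all four expected payments as \emph{expert} expected payments evaluated at $\expelek{l}{k}$ and $\expelek{h}{j}$, and closes with Lemma~\ref{lemma:noni-min}, i.e.\ the statement that $\exppay_{exp}(\expsig_r,\expsig_r)-\noniA{h}_j\cdot\nonipay{h}_{exp}(\expsig_r,\noniele{h}_j)$ is uniquely minimized at $\expsig_r=\expelek{h}{j}$. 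The two reductions are two faces of the same strict convexity of $\exppay^*(p)$, and both yield the correct equality condition ($\expelek{l}{k}=\expelek{h}{j}$, hence $\nonipost(\noniele{l}_k)=\nonipost(\noniele{h}_j)$ via Lemma~\ref{lemma:exp-mono}). What your route buys is uniformity: the interior and exterior non-expert lemmas become literally the same argument, and the degenerate subcases unify more cleanly because $\exppay_{non-l}(\noniele{l}_k,\expelek{h}{j})=0$ automatically when $\nonipost(\noniele{h}_j)=P$. What the paper's route buys is symmetry between the two non-expert groups and reuse of the lemma that already drives the expert-versus-non-expert exterior truthfulness. The only place where your write-up is terser than it should be is the case $\nonidist{h}_0(j)=\nonidist{h}_1(j)$ with $\nonidist{l}_0(k)\neq\nonidist{l}_1(k)$: there the needed fact is $\exppay_{non-l}(\noniele{l}_k,\expelek{l}{k})>0$, which comes from KL non-negativity as in Case~2 of Lemma~\ref{lemma:non-int} rather than from Lemma~\ref{lemma:noni-max} alone, but this is a matter of citation, not a gap.
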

\begin{proof}
In this proof we use $\expelek{l}{k}$ to denote a special expert signal such that $\exppost(\expelek{l}{k})=\nonipost(\noniele{l}_k)$. And we use $\expelek{h}{j}$ to denote a special expert signal such that $\exppost(\expelek{h}{j})=\nonipost(\noniele{h}_j)$. {\em Note that in this proof $\expelek{l}{k}$ and $\expelek{h}{j}$ are corresponding to different non-expert groups.}
We process this proof in three different cases:
\begin{enumerate}
    \item $\nonidist{h}_0(j)=\nonidist{h}_1(j)$ and $\nonidist{l}_0(k)=\nonidist{l}_1(k)$
    \item $\nonidist{h}_0(j)=\nonidist{h}_1(j)$ but $\nonidist{l}_0(k)\neq\nonidist{l}_1(k)$ \item $\nonidist{h}_0(j)\neq\nonidist{}_1(j)$
\end{enumerate}

\textbf{Case 1} $\nonidist{h}_0(j)=\nonidist{h}_1(j)$ and $\nonidist{l}_0(k)=\nonidist{l}_1(k)$. 
In this case both $\noniA{l}_k\cdot \nonipay{l}_{non-l}(\noniele{l}_k,\noniele{l}_k)+\noniB{l}_k$ and $ \noniA{h}_j\cdot \nonipay{h}_{non-l}(\noniele{l}_k,\noniele{h}_j)+\noniB{h}_j$ equals to 0. Similar to Lemma~\ref{lemma:non-int},  we have $\nonipost(\noniele{l}_k)=\nonipost(\noniele{h}_j)$.

\textbf{Case 2} $\nonidist{h}_0(j)=\nonidist{h}_1(j)$ but $\nonidist{l}_0(k)\neq\nonidist{l}_1(k)$. In this case $\noniA{h}_j\cdot \nonipay{h}_{non-l}(\noniele{l}_k,\noniele{h}_j)+\noniB{h}_j$ equals to zero. 
Recall the definition of $\noniB{l}_k$ in Equation~\ref{eq:noniB}:
\begin{equation*}
        \noniB{l}_k = \exppay_{non-l}(\noniele{l}_k,\expelek{l}{k}) - \noniA{l}_k\cdot \nonipay{l}_{non-l}(\noniele{l}_k,\noniele{l}_k)
    \end{equation*}
    where $\expelek{l}{k}$ satisfies $\exppost(\expelek{l}{k})=\nonipost(\noniele{l}_k)$.
Therefore, we only need to prove that 
\begin{equation*}
    \exppay_{non-l}(\noniele{l}_k,\expelek{l}{k})> 0.
\end{equation*}
This has been proved in the case 2 of Lemma~\ref{lemma:non-int} (interior truthfulness of non-expert). 

\textbf{Case 3} 
$\nonidist{h}_0(j)\neq\nonidist{h}_1(j)$
First we show that this case we can convert the original inequality to 
\begin{equation}
\label{eq:case3}
    \exppay_{non-l}(\noniele{l}_k,\expelek{l}{k})\ge \noniA{h}_j\cdot \nonipay{h}_{non-l}(\noniele{l}_k,\noniele{h}_j)+\noniB{h}_j.
\end{equation}
When $\nonidist{l}_0(k)\neq\nonidist{l}_1(k)$, we expand $\noniB{l}_k$ as in case 2 and get what we want. 
\begin{equation*}
        \noniB{l}_k = \exppay_{non-l}(\noniele{l}_k,\expelek{l}{k}) - \noniA{l}_k\cdot \nonipay{l}_{non-l}(\noniele{l}_k,\noniele{l}_k) .
    \end{equation*}
When $\nonidist{l}_0(k)=\nonidist{l}_1(k)$
\begin{equation*}
    \noniA{l}_k\cdot \nonipay{l}_{non-l}(\noniele{l}_k,\noniele{l}_k) + \noniB{l}_k = 0.
\end{equation*}
We show that in this case $ \exppay_{non-l}(\noniele{l}_k,\expelek{l}{k})$ as well. Note that since $\nonidist{l}_0(k)=\nonidist{l}_1(k)$, $\exppost(\expelek{l}{k})=\nonipost(\noniele{l}_k)=P$. Therefore, 
\begin{align*}
    &\exppay_{non-l}(\noniele{l}_k,\expelek{l}{k})\\
    = & \int_{\expset}\left((1-\nonipost(\noniele{l}_k))\cdot f_0(\expele) +\nonipost(\noniele{l}_k)\cdot f_1(\expele)\right)  \log \left( \frac{(1-\nonipost(\noniele{l}_k))\cdot f_0(\expele) +\nonipost(\noniele{l}_k)\cdot f_1(\expele)}{(1-P)\cdot f_0(\expele) +P\cdot f_1(\expele)}\right) {\rm d}\expele\\
    = & \int_{\expset}\left((1-\nonipost(\noniele{l}_k))\cdot f_0(\expele) +\nonipost(\noniele{l}_k)\cdot f_1(\expele)\right)  \log \left( \frac{(1-P)\cdot f_0(\expele) +P\cdot f_1(\expele)}{(1-P)\cdot f_0(\expele) +P\cdot f_1(\expele)}\right) {\rm d}\expele\\
    = & 0.
\end{align*}

Then we change inequality's form into the form of Lemma~\ref{lemma:noni-min}. Recall the definition of $\noniB{h}_j$:
\begin{equation*}
        \noniB{h}_j = \exppay_{non-h}(\noniele{h}_j,\expelek{h}{j}) - \noniA{h}_j\cdot \nonipay{h}_{non-h}(\noniele{h}_j,\noniele{h}_j)
    \end{equation*}
    where $\expelek{h}{j}$ satisfies $\exppost(\expelek{h}{j})=\nonipost(\noniele{h}_j)$.
Therefore, expanding $\noniB{l}_k$ and $\noniB{h}_j$,  the equality can be written as
\begin{equation*}
     \exppay_{non-l}(\noniele{l}_k,\expelek{l}{k}) - \noniA{h}_j\cdot \nonipay{h}_{non-l}(\noniele{l}_k,\noniele{h}_j)\ge  \exppay_{non-h}(\noniele{h}_j,\expelek{h}{j}) - \noniA{h}_j\cdot \nonipay{h}_{non-h}(\noniele{h}_j,\noniele{h}_j).
\end{equation*}
We use the Assumption~\ref{asm:wise} and rewrite the expected payments to an equivalent form of expert's expected payments. Then by Lemma~\ref{lemma:noni-min} we prove the inequality. 

For the group $l$ non-expert part, since $\expelek{l}{k}$ satisfies $\exppost(\expelek{l}{k})=\nonipost(\noniele{l}_k)$, we have 
\begin{align*}
     \exppay_{non-l}(\noniele{l}_k,\expelek{l}{k}) =&\int_{\expset}  \left((1-\nonipost(\noniele{l}_k))\cdot f_0(\expele) +\nonipost(\noniele{l}_k)\cdot f_1(\expele)\right)R_{ecgm}(\expelek{l}{k}, \expele){\rm d}\expele\\
     =&\int_{\expset} \left((1-\exppost(\expelek{l}{k}))\cdot f_0(\expele) +\exppost(\expelek{l}{k})\cdot f_1(\expele)\right)R_{ecgm}(\expelek{l}{k}, \expele){\rm d}\expele\\
     =&\exppay_{exp}(\expelek{l}{k}, \expelek{l}{k}).\\
     \nonipay{h}_{non-l}(\noniele{l}_k,\noniele{h}_j) =&
     \sum_{i=1}^{\nonisetnum{h}}\left((1-\nonipost(\noniele{l}_k))\cdot \nonidist{h}_0(i) +\nonipost(\noniele{l}_k)\cdot \nonidist{h}_1(i) \right) R_{sppm}^{h}(\noniele{h}_j,\noniele{h}_i)\\
     = & \sum_{i=1}^{\nonisetnum{h}}\left((1-\exppost(\expelek{l}{k}))\cdot \nonidist{h}_0(i) +\exppost(\expelek{l}{k})\cdot \nonidist{h}_1(i) \right) R_{sppm}^{h}(\noniele{h}_j,\noniele{h}_i)\\
     = & \nonpay_{exp}(\expelek{l}{k}, \noniele{h}_j). 
\end{align*}
For the group $h$ non-expert part, since $\expelek{h}{j}$ satisfies $\exppost(\expelek{h}{j})=\nonipost(\noniele{h}_j)$, we have 
\begin{align*}
    \exppay_{non-h}(\noniele{h}_j,\expelek{h}{j})=
     &\int_{\expset}  \left((1-\nonipost(\noniele{h}_j))\cdot f_0(\expele) +\nonipost(\noniele{h}_j)\cdot f_1(\expele)\right)R_{ecgm}(\expelek{h}{j}, \expele){\rm d}\expele\\
     =&\int_{\expset} \left((1-\exppost(\expelek{h}{j}))\cdot f_0(\expele) +\exppost(\expelek{h}{j})\cdot f_1(\expele)\right)R_{ecgm}(\expelek{h}{j}, \expele){\rm d}\expele\\
     =&\exppay_{exp}(\expelek{h}{j}, \expelek{h}{j}).\\
     \nonpay_{non-h}(\noniele{h}_j,\noniele{h}_j) =&
     \sum_{i=1}^{\nonisetnum{h}}\left((1-\nonipost(\noniele{h}_j))\cdot \nonidist{h}_0(i) +\nonipost(\noniele{h}_j)\cdot \nonidist{h}_1(i) \right) R_{sppm}^{h}(\noniele{}_j,\noniele{h}_i)\\
     = & \sum_{i=1}^{\nonisetnum{h}}\left((1-\exppost(\expelek{h}{j}))\cdot \nonidist{h}_0(i) +\exppost(\expelek{h}{j})\cdot \nonidist{h}_1(i) \right) R_{sppm}^{h}(\noniele{}_j,\noniele{h}_i)\\
     = & \nonipay{h}_{exp}(\expelek{h}{j}, \noniele{h}_j). 
\end{align*}
Therefore, we can further rewrite the original inequality into the expert's expected payment form: 

\begin{equation*}
   \exppay_{exp}(\expelek{l}{k}, \expelek{l}{k}) - \noniA{h}_j\cdot \nonipay{h}_{exp}(\expelek{l}{k}, \noniele{h}_j)\ge  \exppay_{exp}(\expelek{h}{j}, \expelek{h}{j}) - \noniA{h}_j\cdot \nonipay{h}_{exp}(\expelek{h}{j}, \noniele{h}_j).
\end{equation*}

From Lemma~\ref{lemma:noni-min}, we know that this inequality holds, and equality holds if and only if $\exppost(\expelek{l}{k})=\exppost(\expelek{h}{j})$. There for the original inequality also holds, and the equality holds if and only if $\nonipost(\noniele{l}_k)=\nonipost(\noniele{h}_j)$. 

\end{proof}

\section{Running Example}
\label{sec:runexp}
In this section we give a example of how two mechanisms calculates the payments of different agents. 
\subsection{Setting}
Consider the weather prediction problem. Let $Y=0$  and $Y=1$ denote the weather be rainy and  sunny, respectively. Let the prior distribution be uniform, i.e.~$P=0.5$.  Let $\expset=[-100,100]$ be experts' signal space.  Let $f_0\sim N(-100,100^2)$ (respectively, $f_1\sim N(100,100^2)$) denote the restriction of the Gaussian whose mean is $-100$ (respectively, $100$) and whose variance is $100^2$ on $\expset$. 

Suppose group 1 non-experts  receive  signals about the intervals of weather. Let $\noniset{1}=\{[-100,-50),[-50,0),[0,50),[50,100]\}$ denoting {Very Likely Rainy, Likely Rainy, Likely Sunny, Very Likely Sunny}. When $Y=0$, let the signal distribution be $\nonidist{1}_0 =(0.3,0.4,0.2,0.1)$; and when $Y=1$, let the signal distribution be $\nonidist{1}_1=(0.1,0.2,0.4,0.3)$. 

Suppose non-experts group 2 receive signals about a more coarse set of intervals of weather.  Let $\noniset{2}=\{\text{Rainy, Sunny}\}$. When $Y=0$, let the signal distribution be $\nonidist{2}_0 =(0.3,0.7)$; and when $Y=1$, let the signal distribution be  $\nonidist{2}_1 =(0.7,0.3)$. %

We list three pairs of agents of each signal space in CEM in Table~\ref{tab:ce}. Note that there may be more than two agents in each signal space and we only focus on these six agents.

\subsection{Composite Elicitation Mechanism}
\begin{table}[ht]
\centering
\begin{tabular}{|l|l|l|l|}
\hline
Agents         & Report domain & Report       & Posterior distribution on ground truth  \\ \hline
$\expreporterA$  & $\expset$     & $20$         & $\exppost_A^0\approx 0.60$ \\ \hline
$\expreporterB$  & $\expset$     & $70$         & $\exppost_B^0\approx 0.80$ \\ \hline
$\semireporterA$ & $\noniset{1}$    & $[-100,-50)$ & $\nonipost^{1}_A=0.25$       \\ \hline
$\semireporterB$ & $\noniset{1}$    & $[0,50)$     & $\nonipost^{1}_B=0.75$       \\ \hline
$\nonreporterA$  & $\noniset{2}$     & $up$         & $\nonipost^{2}_A=0.75$        \\ \hline
$\nonreporterB$  & $\noniset{2}$     & $down$       & $\nonipost^{2}_B=0.25$        \\ \hline
\end{tabular}
\caption{Notations, report domains and reports of six reporters. \label{tab:ce}}
\end{table}
This section we give a running example of how composite elicitation mechanism compute payment of different reporters. 

We first compute the agents' posterior distribution on ground truth. We take $\semireporterA$ as an example. 
\begin{equation*}
 \nonipost_A^1=\frac{0.5\times 0.1}{0.5\times 0.3+0.5\times 0.1}=0.25. 
\end{equation*}
Other's posterior distribution are in the fourth column of Table~\ref{tab:ce}. 

Then we compute the payment of $\expset$ reporters. Since $\expreporterA$ and $\expreporterB$ are the peers of each other, and both of their payments are their ECGM mutual payments. 
\begin{equation*}
    Pay(\expreporterA)=Pay(\expreporterB)=R_{ecgm}(20,70)= \log \left( \frac{0.6\times 0.8}{0.5} + \frac{0.4\times0.2}{0.5} \right) = 0.22. 
\end{equation*}
For $\noniset{1}$ and $\noniset{2}$ reporters, we need to compute their linear transformation. We first compute payment for $\noniset{2}$ reporters for simplicity.  Here we used the linear transformation coefficients that mentioned in the proof.
According to the setting, $\nonidist{2}_0(1)=0.75, \nonidist{2}_1(1)=0.25, \nonidist{2}_0(2)=0.25, \nonidist{2}_1(2)=0.75$. Therefore, both cases are non-trivial. We use formula~\ref{eq:noniA} and \ref{eq:noniB} to calculate the coefficients. 
\begin{equation*}
    \noniA{2}_1=\left.\frac{\frac{{\rm d}\exppay_{exp}(\expsig_r, \expsig_r)}{{\rm d}\exppost(\expsig_r)}}{\frac{\partial \nonipay{2}_{exp}(\expsig_r,down)}{\partial \exppost(\expsig_r)}}\right|_{\exppost(\expsig_r) = \nonipost(down)}
    \approx 0.45 
\end{equation*}
\begin{equation*}
    \noniB{2}_1 = \exppay_{non-2}(down,\expelek{2}{1}) - \noniA{2}_1\cdot \nonipay{2}_{non-2}(down, down) \approx 8\times10^{-5}. 
\end{equation*}
\begin{equation*}
    \noniA{2}_2=\left.\frac{\frac{{\rm d}\exppay_{exp}(\expsig_r, \expsig_r)}{{\rm d}\exppost(\expsig_r)}}{\frac{\partial \nonipay{2}_{exp}(\expsig_r,up)}{\partial \exppost(\expsig_r)}}\right|_{\exppost(\expsig_r) = \nonipost(up)} \approx 0.45 
\end{equation*}
\begin{equation*}
    \noniB{2}_2 = \exppay_{non-2}(up,\expelek{2}{2}) - \noniA{2}_1\cdot \nonipay{2}_{non-2}(up,up) \approx 8\times10^{-5}. 
\end{equation*}
Then we compute their SPPM mutual payment. For simplicity, we use log scoring rule in SPPM. 
\begin{equation*}
    R_{sppm}^{2}(up, down) = \log \left(\frac{\Pr[\noniSig{2}=down\mid \noniSig{2}_r=up]}{\Pr[\noniSig{2}=down]}\right) =0.375.  
\end{equation*}
Then we can calculate the payment for both $\noniset{2}$ reporter:
\begin{align*}
    Pay(\nonreporterA)=\noniA{2}_1\cdot R_{sppm}^{2}(up, down) + \noniB{2}_1 \approx & 0.17\\
    Pay(\nonreporterB)=\noniA{2}_2\cdot R_{sppm}^{2}(up, down) + \noniB{2}_2 \approx & 0.17
\end{align*}
$\noniset{1}$ reporters takes the similar procedure of computing linear transformation and SPPM mutual payment. We first compute the linear transformation that results in 
\begin{table}[ht]
\centering
\begin{tabular}{ll}
$\semiA_1\approx0.69$ & $\semiB_1\approx1.2\times10^{-4}$  \\
$\semiA_3\approx0.68$ & $\semiB_3\approx-2.2\times10^{-5}$
\end{tabular}
\end{table}
And the mutual payment between two agents are
\begin{equation*}
R_{sppm}^{1}([-100.-50), [0,50)) \approx -0.18. 
\end{equation*}
Therefore, we can calculate the payment of $\noniset{1}$ reporters:
\begin{align*}
     Pay(\semireporterA)=\semiA_1\cdot R_{sppm}^{1}([-100.-50), [0,50)) + \semiB_1 \approx & -0.12\\
    Pay(\semireporterB)=\semiA_3\cdot R_{sppm}^{1}([-100.-50), [0,50)) + \semiB_3 \approx & -0.11
\end{align*}

\subsection{Mutual-Information-Based Mechanism}
Let $\PSI{0}, \PSI{1}, \PSI{2}$ be the log scoring rule, and let the coefficients be
\begin{align*}
    \alpha^l_i(|\nonirepset{0}|,|\nonirepset{1}|,|\nonirepset{2}|)=\frac{1}{|\nonirepset{0}|+|\nonirepset{1}|+|\nonirepset{2}|}
\end{align*}
for $0\le l,i\le 2$. For each agent, the mechanism will u.a.r select three other agents (i.e. peers) in $\nonirepset{0},\nonirepset{1},\nonirepset{2}$ respectively in order to calculate the payment. We assume the peers for each agent are in Table~\ref{tab:mipeer}.
\begin{table}[ht]
\centering
\begin{tabular}{|l|l|l|l|}
\hline
Reporter & $\nonirepset{0}$ peer & $\nonirepset{1}$ peer & $\nonirepset{2}$ peer \\ \hline
$\expreporterA$ & $\expreporterB$ & $\semireporterB$ & $\nonreporterB$ \\ \hline
$\expreporterB$  & $\expreporterA$ & $\semireporterA$ & $\nonreporterA$ \\ \hline
$\semireporterA$ & $\expreporterB$ & $\semireporterB$ & $\nonreporterB$ \\ \hline
$\semireporterB$ & $\expreporterA$ & $\semireporterA$ & $\nonreporterA$ \\ \hline
$\nonreporterA$  & $\expreporterB$ & $\semireporterB$ & $\nonreporterB$ \\ \hline
$\nonreporterB$  & $\expreporterA$ & $\semireporterA$ & $\nonreporterA$ \\ \hline
\end{tabular}
\caption{Peers of six reporters. \label{tab:mipeer}}
\end{table}

Note that with Assumption~\ref{asm:ci}, and when we use log scoring rule for mutual-information-based mechanism, $R_{l_1 l_2}(s^{l_1},s^{l_2})$ will become similar with the payment function of ECGM (Equation~\ref{eq:ecg}).

For agent $\expreporterA$, the payment is calculated as
{\normalsize
\begin{align*}
    Pay(\expreporterA)=&\frac{\log\left(\Pr[\expSig=70\mid\expSig=20]\right)-\log\left(\Pr[\expSig=70]\right)}{|\nonirepset{0}|+|\nonirepset{1}|+|\nonirepset{2}|} \\
    +&\frac{\log\left(\Pr[\noniSig{1}=[0,50)\mid\expSig=20]\right)-\log\left(\Pr[\noniSig{1}=[0,50)]\right)}{|\nonirepset{0}|+|\nonirepset{1}|+|\nonirepset{2}|} \\
    +&\frac{\log\left(\Pr[\noniSig{2}=down\mid\expSig=20]\right)-\log\left(\Pr[\noniSig{2}=down]\right)}{|\nonirepset{0}|+|\nonirepset{1}|+|\nonirepset{2}|} \\
    =&\frac{1}{6}\log\left(\sum_{y\in\{0,1\}}\frac{\Pr[Y=y\mid \expSig=70]\Pr[Y=y\mid\expSig=20]}{\Pr[Y=y]}\right) \\
    +&\frac{1}{6}\log\left(\sum_{y\in\{0,1\}}\frac{\Pr[Y=y\mid \expSig=70]\Pr[Y=y\mid\noniSig{1}=[0,50)]}{\Pr[Y=y]}\right) \\
    +&\frac{1}{6}\log\left(\sum_{y\in\{0,1\}}\frac{\Pr[Y=y\mid \expSig=70]\Pr[Y=y\mid\noniSig{2}=down]}{\Pr[Y=y]}\right) \\
    =&\frac{1}{6}\log\left(\frac{\exppost^0_A\exppost^0_B}{P}+\frac{(1-\exppost^0_A)(1-\exppost^0_B)}{1-P}\right) \\
    +&\frac{1}{6}\log\left(\frac{\exppost^0_A\nonipost^{1}_B}{P}+\frac{(1-\exppost^0_A)(1-\nonipost^{1}_B)}{1-P}\right) \\
    +&\frac{1}{6}\log\left(\frac{\exppost^0_A\nonipost^{2}_B}{P}+\frac{(1-\exppost^0_A)(1-\nonipost^{2}_B)}{1-P}\right) \\
    \approx& 0.017
\end{align*}
}
Similarly, 
\begin{align*}
    Pay(\expreporterB)=&\frac{1}{6}\log\left(\frac{\exppost^0_B\exppost^0_A}{P}+\frac{(1-\exppost^0_B)(1-\exppost^0_A)}{1-P}\right) \\
    +&\frac{1}{6}\log\left(\frac{\exppost^0_B\nonipost^{1}_A}{P}+\frac{(1-\exppost^0_B)(1-\nonipost^{1}_A)}{1-P}\right) \\
    +&\frac{1}{6}\log\left(\frac{\exppost^0_B\nonipost^{2}_B}{P}+\frac{(1-\exppost^0_B)(1-\nonipost^{2}_A)}{1-P}\right) \\
    \approx& 0.003
\end{align*}
For $\semireporterA$ or $\semireporterB$, the payment can be derived by replacing $\exppost^0_A$ by $\nonipost^{1}_A$ in $Pay(\expreporterA)$ or $\exppost^0_B$ by $\nonipost^{1}_B$ in $Pay(\expreporterB)$ respectively. The payments are
$$Pay(\semireporterA)\approx -0.070$$
$$Pay(\semireporterB)\approx 0.005$$
For $\nonreporterA$ or $\nonreporterB$, the payment can be derived by replacing $\exppost^0_A$ by $\nonipost^{2}_A$ in $Pay(\expreporterA)$ or $\exppost^0_B$ by $\nonipost^{2}_B$ in $Pay(\expreporterB)$ respectively. The payments are
$$Pay(\nonreporterA)\approx 0.033$$
$$Pay(\nonreporterB)\approx -0.028$$

\section{Proof of Main Theorem of Mutual-Information-Based Mechanism}
\label{proofmi}
\textbf{Theorem \ref{thm:mi}} \emph{Given Assumption~\ref{asm:ci} and \ref{asm:ip}. 
Assume $|\nonirepset{l}|\ge3$ for all $0\le l\le n$.
If for $0\le l_1<l_2\le n$ and $0\le i\le n$, the following conditions holds for $t_0\ge2,\cdots,t_n\ge 2$:
$$ \alpha^{l_1}_i(t_0,\cdots,t_n) = \alpha^{l_2}_i(t_0,\cdots,t_{l_1}-1, \cdots,t_{l_2}+1,\cdots,t_n)$$
then MIBM is \newtruthful.}

\begin{proof}
For convenience, when we say $\exprep_t$ or $\expsig_t$ in $\pdfs[\cdot]$, the corresponding random variable is $\expSig_t$. Similarly, for $l\in[\noninum]$, when we say $\nonirep{l}_t$ or $\nonisig{l}_t$ in $\Pr[\cdot]$, the corresponding random variable is $\noniSig{l}_t$. First, we will prove the \newtruthfulness{} for expert, then we will prove the group $l$ non-expert cases. \\

\noindent\textbf{Expert Part} For expert $r$, we pick an expert $j^0\in\nonirepset{0}\setminus\{r\}$ u.a.r., and pick agents $j^i\in\nonirepset{i},1\le i\le\noninum$ u.a.r. Given that all other agents report truthfully, which means $\nonirep{i}_{j^i}=\nonisig{i}_{j^i}$ for $0\le i\le\noninum$, and conditioning on expert $r$ receives continuous signal $\expsig_r$, if she chooses to report $\exprep_r\in\expset$, the expected payment will be
{\normalsize
\begin{align*}
    &\alpha^{0}_0\left(|\nonirepset{0}|,\cdots,|\nonirepset{\noninum}|\right)\int_{\expset}\pdfs[\exprep_{j^0}\mid\expsig_r]\left(\PSI{0}\left(\exprep_{j^0},\SSIdist{0}(\exprep_r)\right)-\PSI{0}\left(\exprep_{j^0},\SSIdist{0}\right)\right){\rm d}\exprep_{j^0} \\
    +&\sum_{i=1}^{\noninum}\alpha^{0}_i\left(|\nonirepset{0}|,\cdots,|\nonirepset{\noninum}|\right)\sum_{\nonirep{i}_{j^i}\in\noniset{i}}\Pr[\nonirep{i}_{j^i}\mid\expsig_r] \left(\PSI{i}\left(\nonirep{i}_{j^i},\SSIdist{0}(\exprep_r)\right)-\PSI{i}\left(\nonirep{i}_{j^i},\SSIdist{i}\right)\right)
\end{align*}
}
Since all the proper scoring rules are strictly proper and we assume the informative prior (Assumption~\ref{asm:ip}), so the expected payment is uniquely maximized when $\exprep_r=\expsig_r$. Now we need to prove that if she chooses to report in different signal spaces, she won't get higher expected payment, and she will get the maximized expected payment if and only if the posterior signal distributions are the same as her private signal's ones.

Conditioning on expert $r$ gets her signal $\expsig_r$ and others report truthfully, she can calculate the expected payment when she reports truthfully,
{\normalsize
\begin{align*}
    &\alpha^{0}_0\left(|\nonirepset{0}|,\cdots,|\nonirepset{\noninum}|\right) \\
    \times&\underset{\text{EE}_{exp}}{\underbrace{\int_{\expset}\pdfs[\exprep_{j^0}\mid\expsig_r]\left(\PSI{0}\left(\exprep_{j^0},\SSIdist{0}(\expsig_r)\right)-\PSI{0}\left(\exprep_{j^0},\SSIdist{0}\right)\right){\rm d}\exprep_{j^0}}} \\
    +&\sum_{i=1}^{\noninum}\alpha^{0}_i\left(|\nonirepset{0}|,\cdots,|\nonirepset{\noninum}|\right) \\
    \times&\underset{\text{EN}^i_{exp}}{\underbrace{\sum_{\nonirep{i}_{j^i}\in\noniset{i}}\Pr[\nonirep{i}_{j^i}\mid\expsig_r] \left(\PSI{i}\left(\nonirep{i}_{j^i},\SSIdist{0}(\expsig_r)\right)-\PSI{i}\left(\nonirep{i}_{j^i},\SSIdist{i}\right)\right)}}
\end{align*}
}

If she chooses to report answer $\noniele{h}_t\in\noniset{h}$, $h\in[n]$, note that the mechanism will regard $\noniele{h}_t$ as a realization of random variable $\noniSig{h}$. When she moves from $\exprepset$ to $\nonirepset{h}$, the sizes of the two sets will change, and the expected payment will be
{\normalsize
\begin{align*}
    &\alpha^{h}_0\left(|\nonirepset{0}|-1,\cdots,|\nonirepset{h}|+1,\cdots|\nonirepset{\noninum}|\right) \\
    \times&\underset{\text{N}^{h}\text{E}_{exp}}{\underbrace{\int_{\expset}\pdfs[\exprep_{j^0}\mid\expsig_r]\left(\PSI{0}\left(\exprep_{j^0},\SSIdist{h}(\noniele{h}_t)\right)-\PSI{0}\left(\exprep_{j^0},\SSIdist{0}\right)\right){\rm d}\exprep_{j^0}}} \\
    +&\sum_{i=1}^{\noninum}\alpha^{h}_i\left(|\nonirepset{0}|-1,\cdots,|\nonirepset{h}|+1,\cdots|\nonirepset{\noninum}|\right) \\
    \times&\underset{\text{N}^{h}\text{N}^i_{exp}}{\underbrace{\sum_{\nonirep{i}_{j^i}\in\noniset{i}}\Pr[\nonirep{i}_{j^i}\mid\expsig_r] \left(\PSI{i}\left(\nonirep{i}_{j^i},\SSIdist{h}(\noniele{h}_t)\right)-\PSI{i}\left(\nonirep{i}_{j^i},\SSIdist{i}\right)\right)}}
\end{align*}
}
By the definition of strictly proper scoring rule, we know that, for $h,i\in[n]$
$$\text{N}^{h}\text{E}_{exp}\le \text{EE}_{exp}$$
$$\text{N}^{h}\text{N}^i_{exp}\le \text{EN}^i_{exp}$$
Since the coefficients satisfy
\begin{align*}
    \alpha^{0}_i\left(|\nonirepset{0}|,\cdots,|\nonirepset{\noninum}|\right) &= \alpha^{h}_i\left(|\nonirepset{0}|-1,\cdots,|\nonirepset{h}|+1,\cdots|\nonirepset{\noninum}|\right)
\end{align*}
for $0\le i\le\noninum$, so
{\normalsize
\begin{align*}
    &\alpha^{0}_0\left(|\nonirepset{0}|,\cdots,|\nonirepset{\noninum}|\right)\text{EE}_{exp} \\
    +&\sum_{i=1}^{\noninum}\alpha^{0}_i\left(|\nonirepset{0}|,\cdots,|\nonirepset{\noninum}|\right)\text{EN}^i_{exp} \\
    \ge&\alpha^{h}_0\left(|\nonirepset{0}|-1,\cdots,|\nonirepset{h}|+1,\cdots|\nonirepset{\noninum}|\right)\text{N}^{h}\text{E}_{exp} \\
    +&\sum_{i=1}^{\noninum}\alpha^{h}_i\left(|\nonirepset{0}|-1,\cdots,|\nonirepset{h}|+1,\cdots|\nonirepset{\noninum}|\right)\text{N}^{h}\text{N}^i_{exp}
\end{align*}
}
which means expert $r$ won't get higher expected payment if she chooses to report answer in $\noniset{h}$. And the equality holds if and only if $\SSIdist{0}(\expsig_r)=\SSIdist{h}(\noniele{h}_t)$.

We've finished the proof of the expert part. The proof of the group $l$ non-expert part is similar to the proof of the expert part by using the same idea and inequality. \\

\noindent\textbf{Group $l$ Non-expert Part} For group $l$ non-expert $r$, we pick an group $l$ non-expert $j^l\in\nonirepset{l}\setminus\{r\}$ u.a.r., and pick agents $j^i\in\nonirepset{i},0\le i\neq l\le\noninum$ u.a.r. Given that all other agents report truthfully, which means $\nonirep{i}_{j^i}=\nonisig{i}_{j^i}$ for $0\le i\le\noninum$, and conditioning on group $l$ non-expert $r$ receives signal $\nonisig{l}_r$, if she chooses to report $\nonirep{l}_r\in\noniset{l}$, the expected payment will be
{\normalsize
\begin{align*}
    &\alpha^{l}_0\left(|\nonirepset{0}|,\cdots,|\nonirepset{\noninum}|\right)\int_{\expset}\pdfs[\exprep_{j^0}\mid\nonisig{l}_r]\left(\PSI{0}\left(\exprep_{j^0},\SSIdist{l}(\nonirep{l}_r)\right)-\PSI{0}\left(\exprep_{j^0},\SSIdist{0}\right)\right){\rm d}\exprep_{j^0} \\
    +&\sum_{i=1}^{\noninum}\alpha^{l}_i\left(|\nonirepset{0}|,\cdots,|\nonirepset{\noninum}|\right)\sum_{\nonirep{i}_{j^i}\in\noniset{i}}\Pr[\nonirep{i}_{j^i}\mid\nonisig{l}_r] \left(\PSI{i}\left(\nonirep{i}_{j^i},\SSIdist{l}(\nonirep{l}_r)\right)-\PSI{i}\left(\nonirep{i}_{j^i},\SSIdist{i}\right)\right)
\end{align*}
}
Since all the proper scoring rules are strictly proper and we assume the informative prior (Assumption~\ref{asm:ip}), so the expected payment is uniquely maximized when $\nonirep{l}_r=\nonisig{l}_r$, which means she can get the highest expected payment when she reports truthfully, and the expected payment is
{\normalsize
\begin{align*}
    &\alpha^{l}_0\left(|\nonirepset{0}|,\cdots,|\nonirepset{\noninum}|\right)\underset{\text{N}^l\text{E}_{non-l}}{\underbrace{\int_{\expset}\pdfs[\exprep_{j^0}\mid\nonisig{l}_r]\left(\PSI{0}\left(\exprep_{j^0},\SSIdist{l}(\nonisig{l}_r)\right)-\PSI{0}\left(\exprep_{j^0},\SSIdist{0}\right)\right){\rm d}\exprep_{j^0}}} \\
    +&\sum_{i=1}^{\noninum}\alpha^{l}_i\left(|\nonirepset{0}|,\cdots,|\nonirepset{\noninum}|\right)\underset{\text{N}^l\text{N}^i_{non-l}}{\underbrace{\sum_{\nonirep{i}_{j^i}\in\noniset{i}}\Pr[\nonirep{i}_{j^i}\mid\nonisig{l}_r] \left(\PSI{i}\left(\nonirep{i}_{j^i},\SSIdist{l}(\nonisig{l}_r)\right)-\PSI{i}\left(\nonirep{i}_{j^i},\SSIdist{i}\right)\right)}}
\end{align*}
}

If she chooses to report answer $\expele_t\in\expset$, note that the mechanism will regard $\expele_t$ as a realization of random variable $\expSig$. When she moves from $\nonirepset{l}$ to $\exprepset$, the expected payment will be
{\normalsize
\begin{align*}
    &\alpha^{0}_0\left(|\nonirepset{0}|+1,\cdots,|\nonirepset{l}|-1,\cdots|\nonirepset{\noninum}|\right) \\
    \times&\underset{\text{EE}_{non-l}}{\underbrace{\int_{\expset}\pdfs[\exprep_{j^0}\mid\nonisig{l}_r]\left(\PSI{0}\left(\exprep_{j^0},\SSIdist{0}(\expele_t)\right)-\PSI{0}\left(\exprep_{j^0},\SSIdist{0}\right)\right){\rm d}\exprep_{j^0}}} \\
    +&\sum_{i=1}^{\noninum}\alpha^{0}_i\left(|\nonirepset{0}|+1,\cdots,|\nonirepset{l}|-1,\cdots|\nonirepset{\noninum}|\right) \\
    \times&\underset{\text{EN}^i_{non-l}}{\underbrace{\sum_{\nonirep{i}_{j^i}\in\noniset{i}}\Pr[\nonirep{i}_{j^i}\mid\nonisig{l}_r] \left(\PSI{i}\left(\nonirep{i}_{j^i},\SSIdist{0}(\expele_t)\right)-\PSI{i}\left(\nonirep{i}_{j^i},\SSIdist{i}\right)\right)}}
\end{align*}
}
By the definition of strictly proper scoring rule, we know that, for $i\in[n]$
$$\text{EE}_{non-l}\le \text{N}^l\text{E}_{non-l}$$
$$\text{EN}^i_{non-l}\le {N^l}{N^i}_{non-l}$$
Since the coefficients satisfy
\begin{align*}
    \alpha^{l}_i\left(|\nonirepset{0}|,\cdots,|\nonirepset{\noninum}|\right) &= \alpha^{0}_i\left(|\nonirepset{0}|+1,\cdots,|\nonirepset{l}|-1,\cdots|\nonirepset{\noninum}|\right)
\end{align*}
for $0\le i\le\noninum$, so
{\normalsize
\begin{align*}
    &\alpha^{l}_0\left(|\nonirepset{0}|,\cdots,|\nonirepset{\noninum}|\right)\text{N}^l\text{E}_{non-l} \\
    +&\sum_{i=1}^{\noninum}\alpha^{l}_i\left(|\nonirepset{0}|,\cdots,|\nonirepset{\noninum}|\right)\text{N}^l\text{N}^i_{non-l} \\
    \ge&\alpha^{0}_0\left(|\nonirepset{0}|+1,\cdots,|\nonirepset{l}|-1,\cdots|\nonirepset{\noninum}|\right)\text{EE}_{non-l} \\
    +&\sum_{i=1}^{\noninum}\alpha^{0}_i\left(|\nonirepset{0}|+1,\cdots,|\nonirepset{l}|-1,\cdots|\nonirepset{\noninum}|\right)\text{EN}^i_{non-l}
\end{align*}
}
which means group $l$ non-expert $r$ won't get higher payment expected payment if she chooses to report answer in $\expset$. And the equality holds if and only if $\SSIdist{l}(\nonisig{l}_r)=\SSIdist{0}(\expele_t)$.

If she chooses to report answer $\noniele{h}_t\in\noniset{h}$, note that the mechanism will regard $\noniele{h}_t$ as a realization of random variable $\noniSig{h}$. Suppose $h<l$ here , when she moves from $\nonirepset{l}$ to $\nonirepset{h}$, the expected payment will be
{\normalsize
\begin{align*}
    &\alpha^{h}_0\left(|\nonirepset{0}|,\cdots,|\nonirepset{h}|+1,\cdots,|\nonirepset{l}|-1,\cdots|\nonirepset{\noninum}|\right)\\
    \times&\underset{\text{N}^h\text{E}_{non-l}}{\underbrace{\int_{\expset}\pdfs[\exprep_{j^0}\mid\nonisig{l}_r]\left(\PSI{0}\left(\exprep_{j^0},\SSIdist{h}(\noniele{h}_t)\right)-\PSI{0}\left(\exprep_{j^0},\SSIdist{0}\right)\right){\rm d}\exprep_{j^0}}} \\
    +&\sum_{i=1}^{\noninum}\alpha^{h}_i\left(|\nonirepset{0}|,\cdots,|\nonirepset{h}|+1,\cdots,|\nonirepset{l}|-1,\cdots|\nonirepset{\noninum}|\right)\\
    \times&\underset{\text{N}^h\text{N}^i_{non-l}}{\underbrace{\sum_{\nonirep{i}_{j^i}\in\noniset{i}}\Pr[\nonirep{i}_{j^i}\mid\nonisig{l}_r] \left(\PSI{i}\left(\nonirep{i}_{j^i},\SSIdist{h}(\noniele{h}_t)\right)-\PSI{i}\left(\nonirep{i}_{j^i},\SSIdist{i}\right)\right)}}
\end{align*}
}
By the definition of strictly proper scoring rule, we know that, for $h,i\in[n]$
$$\text{N}^h\text{E}_{non-l}\le \text{N}^l\text{E}_{non-l}$$
$$\text{N}^h\text{N}^i_{non-l}\le {N^l}{N^i}_{non-l}$$
Since the coefficients satisfy
\begin{align*}
    \alpha^{l}_i\left(|\nonirepset{0}|,\cdots,|\nonirepset{\noninum}|\right) &= \alpha^{h}_i\left(|\nonirepset{0}|,\cdots,|\nonirepset{h}|+1,\cdots,|\nonirepset{l}|-1,\cdots|\nonirepset{\noninum}|\right)
\end{align*}
for $0\le i\le\noninum$ and $1\le h\neq l\le\noninum$, so
{\normalsize
\begin{align*}
    &\alpha^{l}_0\left(|\nonirepset{0}|,\cdots,|\nonirepset{\noninum}|\right)\text{N}^l\text{E}_{non-l} \\
    +&\sum_{i=1}^{\noninum}\alpha^{l}_i\left(|\nonirepset{0}|,\cdots,|\nonirepset{\noninum}|\right)\text{N}^l\text{N}^i_{non-l} \\
    \ge&\alpha^{h}_0\left(|\nonirepset{0}|,\cdots,|\nonirepset{h}|+1,\cdots,|\nonirepset{l}|-1,\cdots|\nonirepset{\noninum}|\right)\text{N}^h\text{E}_{non-l} \\
    +&\sum_{i=1}^{\noninum}\alpha^{h}_i\left(|\nonirepset{0}|,\cdots,|\nonirepset{h}|+1,\cdots,|\nonirepset{l}|-1,\cdots|\nonirepset{\noninum}|\right)\text{N}^h\text{N}^i_{non-l}
\end{align*}
}
which means group $l$ non-expert $r$ won't get higher payment expected payment if she chooses to report answer in $\noniset{h}$. And the equality holds if and only if $\SSIdist{l}(\nonisig{l}_r)=\SSIdist{h}(\noniele{h}_t)$. For $h>l$ cases, just substitute all $\alpha^{h}_i\left(|\nonirepset{0}|,\cdots,|\nonirepset{h}|+1,\cdots,|\nonirepset{l}|-1,\cdots|\nonirepset{\noninum}|\right)$ with $\alpha^{h}_i\left(|\nonirepset{0}|,\cdots,|\nonirepset{l}|-1,\cdots,|\nonirepset{h}|+1,\cdots|\nonirepset{\noninum}|\right)$ then the proof remains the same.
\end{proof}

\section{Exclusion of Two Mechanisms}
Now we give a special case of setting to show that Composite Elicitation Mechanism and MI-based Mechanisms are exclusive, i.e. none of them is a subclass of the other. 

We consider the following setting:
$P=0.5$. Expert's signal domain $\expset=[0,1]$. $\nonidist{0}_0\sim N(0,1)$ restricted on $\expset$, and $\nonidist{0}_1\sim N(1,1)$ restricted on $\expset$. 

Group 1 non-expert's signal domain $\noniset{1}=\{\noniele{1}_1,\noniele{1}_2,\noniele{1}_3\}$. When $Y=0$, let the probability distribution be $\nonidist{1}_0=(0.6,0.2,0.2)$; and when $Y=1$, let the probability distribution be $\nonidist{1}_1=(0.2,0.5,0.3)$. 

Group 2 non-expert's signal domain $\noniset{2}=\{\noniele{2}_1,\noniele{2}_2\}$. When $Y=0$, let the probability distribution be $\nonidist{2}_0=(0.5,0.5)$; and when $Y=1$, let the probability distribution be $\nonidist{2}_1=(0.5,0.5)$. 

Now we consider the agents in the mechanisms. We list three pairs of agents of each signal space. Their reports and posterior distributions are in the Table~\ref{tab:exclusion}. 
\begin{table}[ht]
\centering
\begin{tabular}{|l|l|l|l|}
\hline
Agents           & Report domain & Report       & Posterior distribution on ground truth \\ \hline
$\expreporterA$  & $\expset$     & $0.5$        & $\exppost_A^0= 0.5$                      \\ \hline
$\expreporterB$  & $\expset$     & $0.5$        & $\exppost_B^0= 0.5$                \\ \hline
$\semireporterA$ & $\noniset{1}$    & $\noniele{1}_1$ & $\nonipost^{1}_A=0.25$                     \\ \hline
$\semireporterB$ & $\noniset{1}$    & $\noniele{1}_3$ & $\nonipost^{1}_B=0.6$                      \\ \hline
$\nonreporterA$  & $\noniset{2}$     & $\noniele{2}_1$  & $\nonipost^{2}_A=0.5$                       \\ \hline
$\nonreporterB$  & $\noniset{2}$     & $\noniele{2}_2$  & $\nonipost^{2}_B=0.5$                       \\ \hline
\end{tabular}
\caption{Reporters for the exclusion case \label{tab:exclusion}}
\end{table}
Now we show the exclusion of the two mechanism by showing that in this special case, the payment of two $\noniset{1}$ reporters are different in CEM, but same in MIBM. 
\subsection{Composite Elicitation Mechanism Payment}
We first compute the linear transformation coefficients:
\begin{table}[ht]
\centering
\begin{tabular}{ll}
$\semiA_1=2.36$ & $\semiB_1=0.25$  \\
$\semiA_3=3.15$ & $\semiB_3=-0.10$
\end{tabular}
\end{table}
Then we compute the SPP mutual payment: $R_{sppm}(\noniele{1}_1,\noniele{1}_3)=-0.10$. Therefore, the CEM payment for two group 1 non-experts are
\begin{align*}
    Pay_{ce}(\semireporterA)=\semiA_1\cdot R_{sppm}(\noniele{1}_1,\noniele{1}_3) + \semiB_1 \approx & 0.014\\
    Pay_{ce}(\semireporterB)=\semiA_3\cdot R_{sppm}(\noniele{1}_1,\noniele{1}_3) + \semiB_3 \approx & 0.42
\end{align*}
We can see that payment are different for two agents in CEM. 

\subsection{Mutual-Information-Based Mechanism Payment}
We just calculate the payments for $\semireporterA$ and $\semireporterB$ here. Since the reports for all experts (and all group 2 non-experts) are the same, it doesn't matter which peer the mechanism chooses. The coefficients are constants when $|\nonirepset{0}|,|\nonirepset{1}|,|\nonirepset{2}|$ are known. Since CEM uses log scoring rule, here we use log scoring rule as well.
\begin{align*}
    &Pay(\semireporterA)=Pay(\semireporterB) \\
    =&\alpha^1_{0}\left(2,2,2\right)\cdot0+\alpha^1_{1}\left(2,2,2\right)\log\left(\frac{0.25\cdot0.6}{0.5}+\frac{(1-0.25)\cdot(1-0.6)}{0.5}\right)+\alpha^1_{2}\left(2,2,2\right)\cdot0 \\
    \approx&-0.105\cdot\alpha^1_{1}\left(2,2,2\right)
\end{align*}
The MIBM payments for two group 1 non-experts are the same.

From the calculation above, we show that two $\noniset{1}$ reporter are paid differently in CEM, but same in MIBM. Therefore, the two mechanisms are exclusive to each other.

\section{Experiments}
In this section, we present results of experiments on synthetic data and on MTurk to verify the statistical efficiency of information elicitation. 
\label{sec:exp}

\subsection{Computer Configuration}
\label{appsec:confOfExp}
The computer configuration is as follows:
\begin{enumerate}
    \item Laptop: Lenovo Legion Y7000P
    \item Processor: Intel® Core™ i7-9750H @2.60GHz
    \item Graphics: NVIDIA® GeForce RTX™ 2060
    \item Memory: DDR4-2666 8GBx2
    \item Operating System: Windows 10 Home
\end{enumerate}

\subsection{Numerical Experiment Main Results}
\label{subsec:result}
We ran the numerical experiments in three different settings and due to the space constraint, we only present one of them here, and the other two can be found in Appendix~\ref{appsec:moreResult}. 

\noindent{\bf The High Variance Setting} is exactly the same as the one presented in Example~\ref{ex:sig}.

We use the {\em maximum a posteriori estimation (MAP)} to aggregate the reports from the agents to predict the ground truth. Five mechanisms are compared: CEM, MIBM, ECGM, the SPPM for group 1 non-experts (whose payment function is $R_{sppm}^{1}$) and the SPPM for group 2 non-experts (whose payment function is $R_{sppm}^{2}$). In the experiments we assume that all agents would like to be as truthful as possible, but are sometimes asked to report signals in a different report space. In the latter case we need to make some natural assumptions on their behavior described as follows.

\noindent $\bullet$ {\em CEM \& MIBM.} Since agents' reports are the same under CEM and MIBM, we merge the curves for CEM and MIBM.

\noindent $\bullet$ {\em ECGM Uniform Non-expert.} If the group $i$ non-expert receives signal $\noniele{1}_k$ or $\noniele{2}_k$, she will report a probability drawn from the uniform distribution over $\left(\frac{k-1}{m_i},\frac{k}{m_i}\right)$.

\noindent $\bullet$ {\em SPPM for Group 1 Non-expert.} Each expert converts her cardinal signal to the group-1-interval it falls into. Each group 2 non-expert  chooses a group-1-interval that is consistent with her cardinal signal u.a.r.

\noindent $\bullet$ {\em SPPM for Group 2 Non-expert.} Since the signal space of group 2 non-experts is the most coarse, each expert and group 1 non-expert simply converts her   signal to the group-1-interval it falls into.

\noindent\textbf{Results.} Figure~\ref{fig:variance} shows the results calculated over $10^4$ iterations with 95\% confidence interval.  The computer configuration is in Appendix~\ref{appsec:confOfExp}.   As can be seen from the figure, while the accuracy of MAP under the five mechanisms all converge to $1$, \textbf{CEM and MIBM have the highest accuracy among all mechanisms}. The same conclusion can be drawn in other settings we have tested, see Appendix~\ref{appsec:moreResult} for more details, in particular, ECGM works poorly in one of the settings, whose accuracy converges to $0$.

\begin{figure}[htp]
    \centering
    \includegraphics[trim=1cm 0.1cm 1.2cm 2.3cm, clip,width=\linewidth] {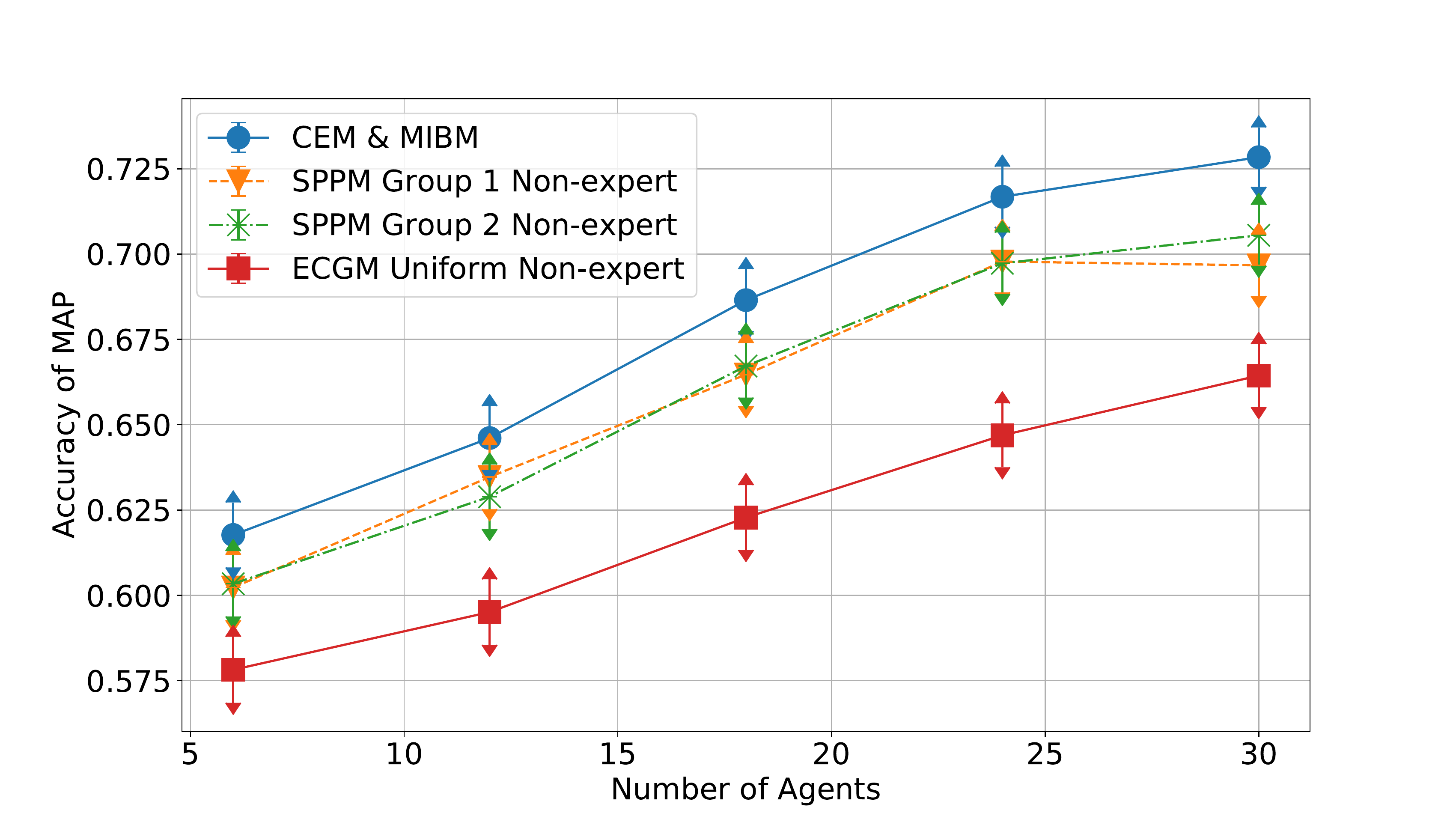}
    \caption{Accuracy of MAP   in the \textit{High Variance} setting.}
    \label{fig:variance}
\end{figure}

\subsection{More Numerical Experiment Results}
\label{appsec:moreResult}
In the \textit{standard} setting, we only change the variances of the signal distributions $f_0, f_1$ to $1$ from the \textit{High Variance} setting. In this scenario, we note from the results in Figure~\ref{fig:standard} \textbf{MAP accuracy for SPPM and ECGM converges to 1 slowly compared to CEM/MIBM}.

\begin{figure}[htp]
    \centering
    \includegraphics[width=0.7\linewidth] {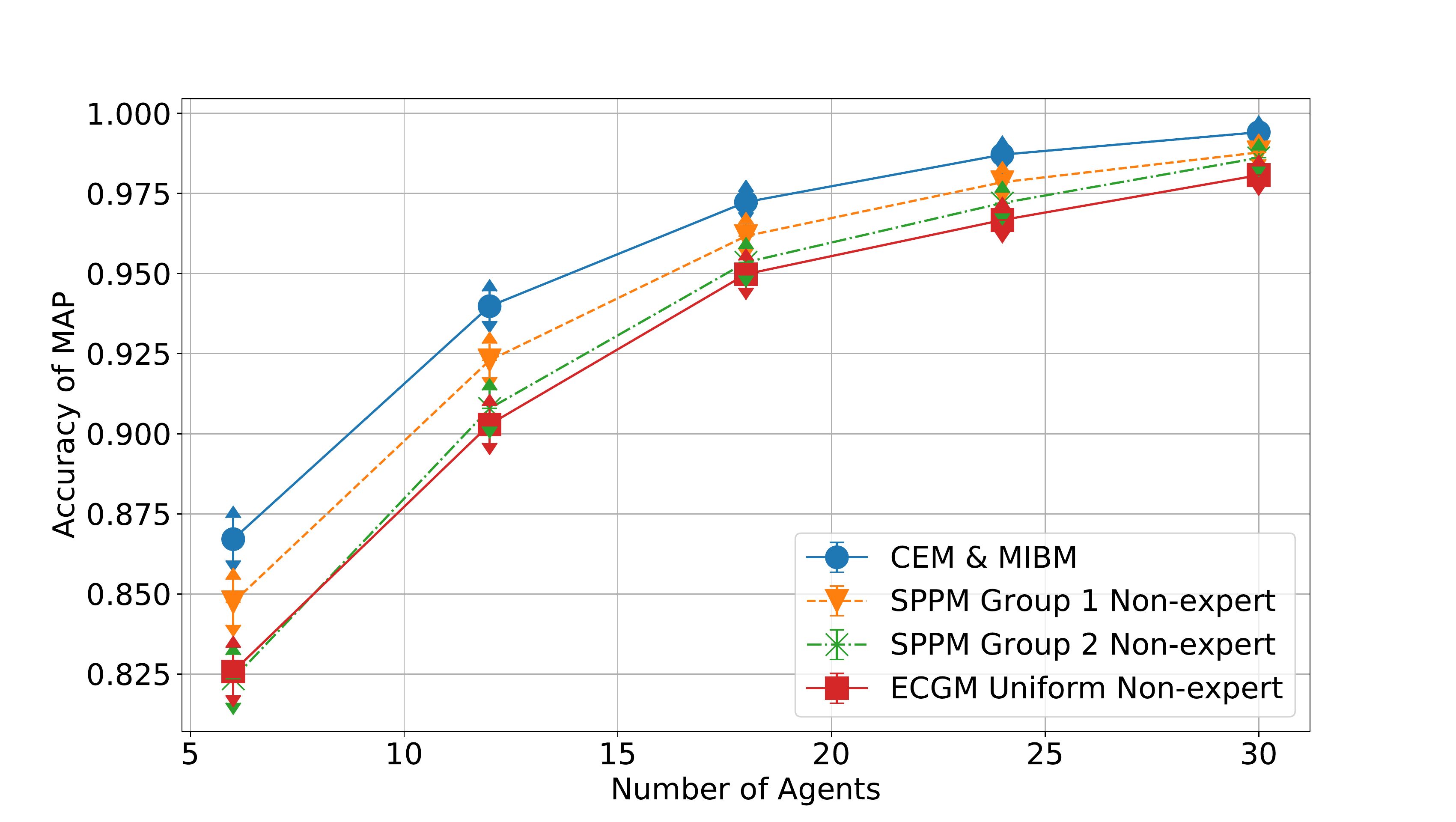}
    \caption{MAP aggregated prediction accuracy of the four mechanisms in the \textit{standard} setting.}
    \label{fig:standard}
\end{figure}

In the \textit{Prior 0.8} setting, we change the prior distribution to $\Pr[Y=1]=0.8$ from the \textit{standard} setting.
We see from the result in the Figure~\ref{fig:prior0.8} that under certain scenarios, it is possible for the ECGM mechanism to perform poorly in terms of aggregation accuracy where the accuracy never converges to 1.

\begin{figure}[htp]
    \centering
    \includegraphics[width=0.7\linewidth] {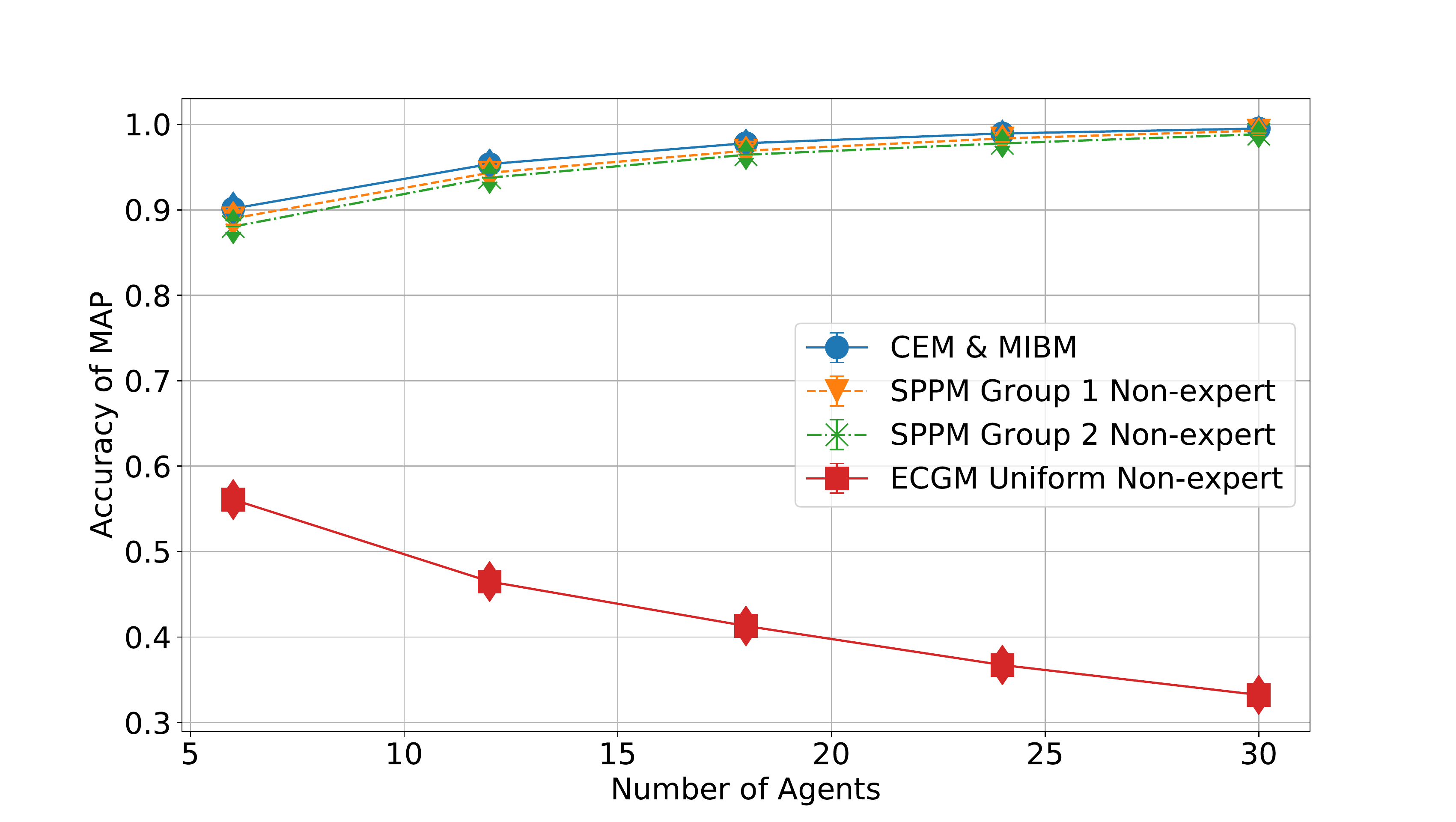}
    \caption{MAP aggregated prediction accuracy of the four mechanisms in the \textit{Prior 0.8} setting.}
    \label{fig:prior0.8}
\end{figure}

Our experiments indicate that CEM and MIBM performs consistently well with different types of hybrid data.

\subsection{MTurk Experiments}
\label{appsec:mturk}
We conducted a poll on  MTurk on Nov.~1 and Nov.~2, 2020, which was one day before the US 2020 Presidential election, about predictions for the election result. 
We asked every participant to answer {\em all} of the following three types of questions that roughly correspond to the report spaces of experts, group 1 non-experts, and group 2 non-experts, respectively: (1) Exact Answer - exactly predict how many electoral votes each candidate would get. (2) Multiple choice question - choose one of the four intervals the electoral college votes Trump would get: $[0,249), [250,270), [270,290], (290,538]$. (3) Binary choice - who will win? 

We also ask each Turker to identify her favorite answer type, i.e.~which of the three types of questions mentioned above best described her expertise. $707$ MTurkers participated in the poll, and the ratio of Turkers with different favorite answer types is presented in Table~\ref{tab:mturk_ratio}. 
\begin{table}[htp]
    \centering
    \begin{tabular}{c|c|c}
        Exact Answer  & Multiple Choice & Binary Choice  \\
         \hline
         25.2\% & 41.7\% &  33.1\%
    \end{tabular}
    \caption{\small Ratio of Turkers with different favorite answer types.}
    \label{tab:mturk_ratio}
\end{table}

Table~\ref{tab:mturk_ratio} indicates a hybrid crowd which motivates the study in this paper---notice that there is a large number of Turkers favoring each type of questions. 

For accuracy, we focus on the prediction of the exact electoral college votes. We first divide the participants into groups according to their most-preferred answer type,  and then compute the MSE of each group's predictions on the exact electoral college votes as a function of the ground truth (Figure~\ref{fig:mturk_president}), which evaluates how consistent the agents in the group are. Interestingly, for any ground truth electoral college that is $<310$, considering only the \textit{exact answerers} gives the lowest MSE (the ground truth turned out to be $232$). 
While this experiment does not consider agents' incentives and does not exactly correspond to the model studied in this paper, the message behind it aligns well with the main message of this paper---that  it can be beneficial to elicit different type of answers in a hybrid crowd.

\begin{figure}[htp]
    \centering
    \includegraphics[trim=1.8cm 1.4cm 3.5cm 2.25cm, clip,width=\linewidth]{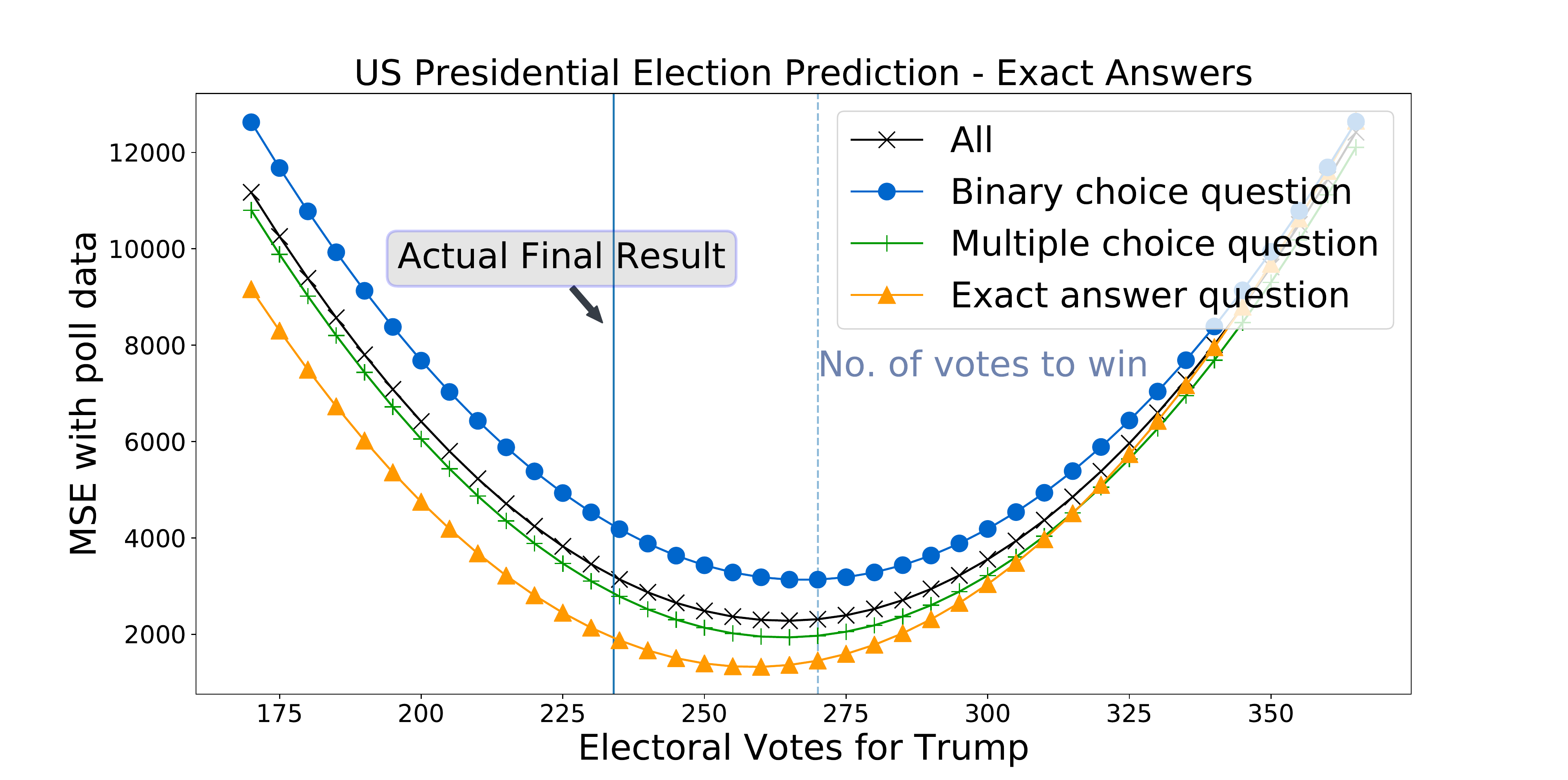}
    \caption{MSE for electoral college votes prediction.}
    \label{fig:mturk_president}
\end{figure}

\subsubsection{Setup}
We ran experiments on MTurk to see how real agents behaved when given multiple options (continuous or discrete) to express their opinion. For the initial experiment, we chose a topic that general people are expected to have varying level of knowledge about -- the 2020 US Presidential election. Since the question focus on the US, we recruited Turkers only from the US.

Question set:
\begin{itemize}
    \item Who do you think will win the 2020 US Presidential Election? (Binary choice question). 
    \\Options are Donald Trump and Joe Biden, the two leading candidates.
    \item Around how many electoral votes will Donald Trump get? (Multiple choice question)
    \\Options are $<250$ (large margin loss), $250-270$ (close loss), $270-290$ (close win), $>290$ (large margin win) 
    \item Exactly how many electoral votes will Donald Trump get? (Fully cardinal question)
    \\Expected answer is an integer $\in [0,538]$. 
\end{itemize}

\begin{figure}
    \centering
    \includegraphics[width=\textwidth]{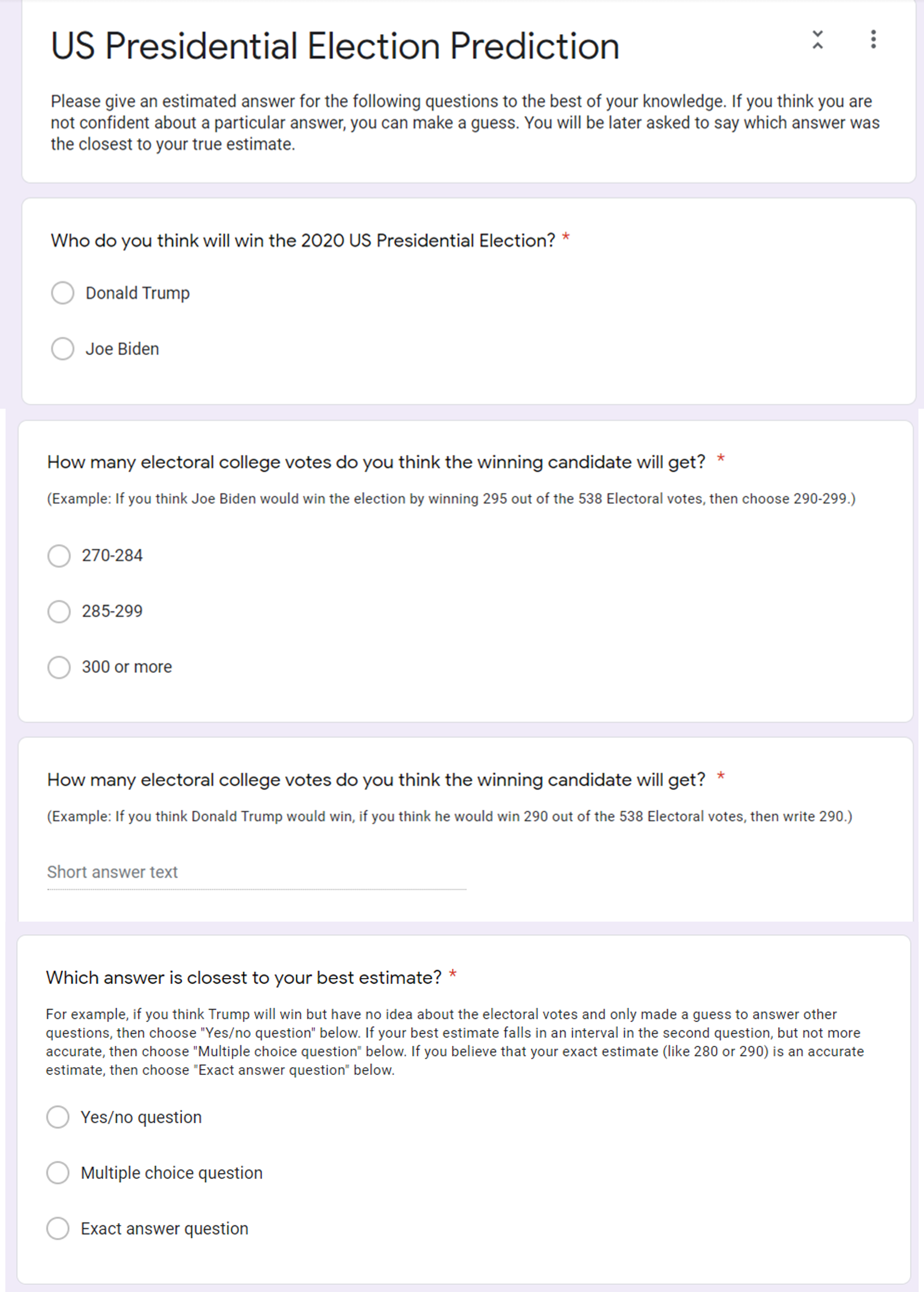}
    \caption{Exact Questions asked and accompanying instructions.}
    \label{fig:mturk_form}
\end{figure}

The actual environment that participants saw is shown in Figure~\ref{fig:mturk_form}. Every participant gives an answer to all three questions and then answer an additional question regarding which one is closest to their best estimate. For example, if they think Trump would win but have no idea about distribution of electoral votes and only made a guess to answer other questions, then they choose "Binary question". If their best estimate falls in an interval in the second question, but no more accurate, then they would choose "Multiple choice question". If they thought that their exact estimate is accurate, then they chose "Exact answer question".

Thus, this gives us an idea about how good a participant thinks their estimate is. And could allow us to estimate which question they would choose if they were to answer only one question, which is the premise of the theoretical work in this paper.

Each survey participant was given 0.3 USD for their participation. Our estimate was that for someone familiar with the topics (as was the goal) would take at most 3-5 minutes to finish the survey. To estimate how much time it might take to finish the survey, the survey was taken by members of the research group who had not seen the survey beforehand and we chose a higher estimate than the maximum taken by anyone in the group. Our experience is that the average completion time for such surveys was always less than the predicted maximum time. Thus we expected the average compensation from the survey to be higher than 7.25 USD, the minimum wage in the US, where all participants were recruited from. Later, based on samples of participants who reported their true completion time in the MTurk forum TurkerView, we predicted an average compensation of 7.89 USD per hour, which is higher than the aforementioned 7.25 USD.

\subsubsection{Results}
We recruited 707 participants in total for this experiment. However, we noticed some noise in that some participants would give an accurate answer inconsistent with their answer for the interval question. We considered that this noise might come from the fact that their best estimate is in the 'binary choice' or 'multiple choice' question, and they're just giving noisy answers for the 'exact answer'. We only consider it true noise  and filter out participants if they choose 'exact answer' as best estimate and give inconsistent answers. Removing participants who gave such noisy answers in at least one of the four questions, we remain with 660 data points. 

As mentioned before, the survey paid \$0.3 to all participants, and they were asked questions about four scenarios in total, the other questions were about US senatorial elections and Covid-19 spread in the US. %

We notice that different people have different levels of confidence about their best estimate. This heterogeneity among agents is something that we focus on in this work.

Now, for each question, rather than comparing an aggregate result to the ground truth (which we do not yet know for most questions yet), we do a plot for mean-square-error (MSE) as a function of possible ground truth values. In Figure~\ref{fig:president-exact}, we compare the MSE plot considering all agents against agents who respectively chose 'exact answer question', 'multiple choice question' and 'binary question' as their best estimate.
\begin{figure}[htp]
    \centering
    \includegraphics[width=0.7\linewidth]{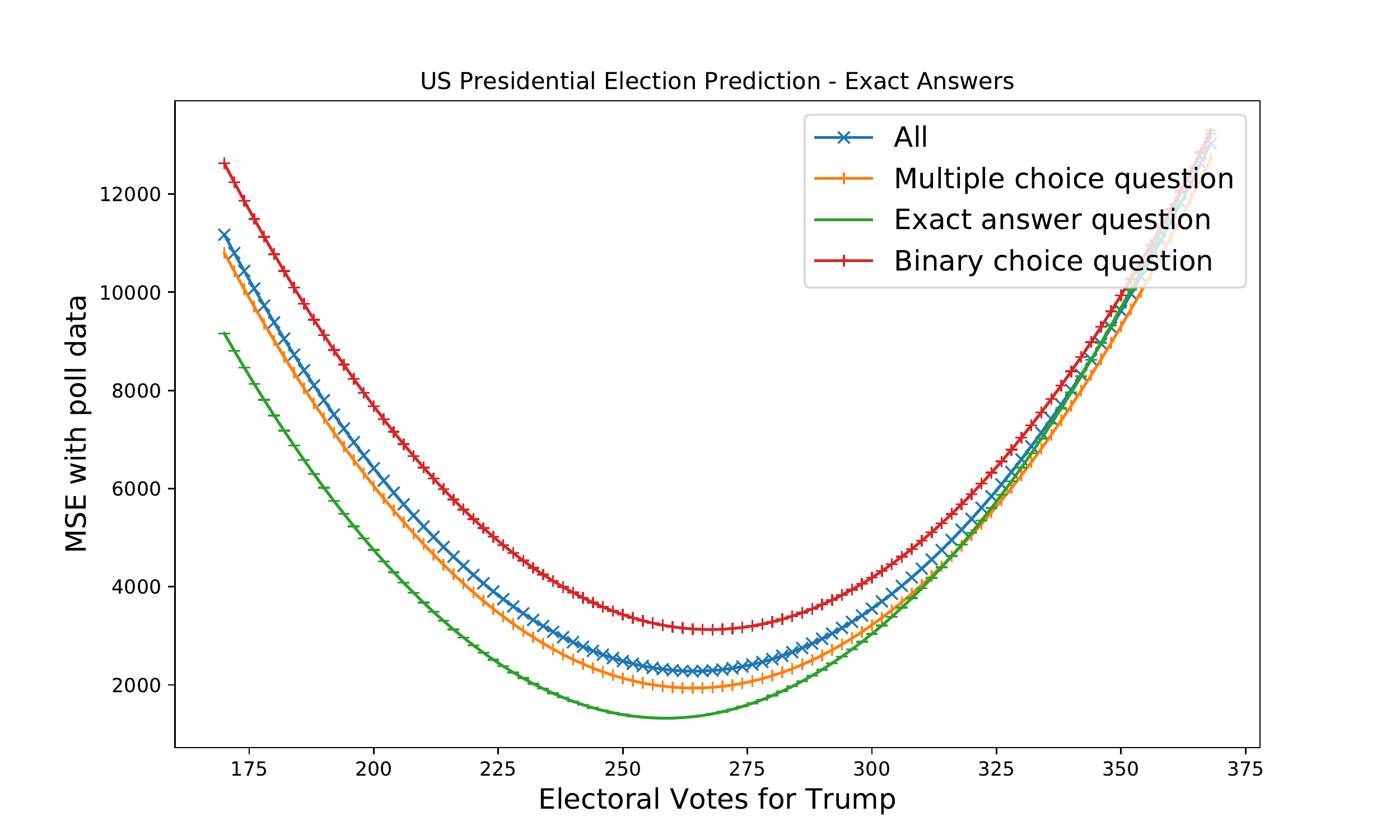}
    \caption{MSE-vs-possible ground truth value for Presidential election prediction}
    \label{fig:president-exact}
\end{figure}
We note that for participants who chose 'exact answer' or 'multiple choice', the MSE is better than the overall MSE for almost all plausible results. In particular, predictions from participants who chose binary question as their best estimate, has poor MSE compared to the two other groups with $p-value<0.01$. However, if we were only interested about prediction about the winner and not the margin, all three groups (and as a collective) has the same majority prediction. 

Additionally, although ``correct predictions" are not our goal at the moment, we can compare results to the only ground truth available to us now i.e. for the presidential election. as of now, the prediction using exact answer questions (lowest MSE is at 258 electoral votes for Donald Trump) is the closest to the presidential election result according to media declarations - Joe Biden wins with 217-248 electoral votes for Donald Trump.

This indicates that offering different types of questions to users and discriminate between them depending on what question they choose to answer can have its merit. For example, all responses can be aggregated to answer the ``who won" question, but it can be better to use the `expert' responses to predict the cardinal answer.

We also create similar plots to Figures ~\ref{fig:president-exact}, by using the ranged answer for everyone. We consider a mean value for each interval, so we can get a reasonable estimate for MSE. 
As we can see in Figure~\ref{fig:president-range}, using the ranged values for people who chose `exact answer' actually gives worse estimates for most values compared to people who actually chose multiple choice question. It might mean, that people in the `multiple choice' group mostly reported values that are closer to the mean of their choice of interval in expectation, thus their MSE does not vary much between the two methods. However, people in the `exact answer' group can have given more detailed answers which loses information in neglecting the full answer.
\begin{figure}[htp]
    \centering
    \includegraphics[width=0.7\linewidth]{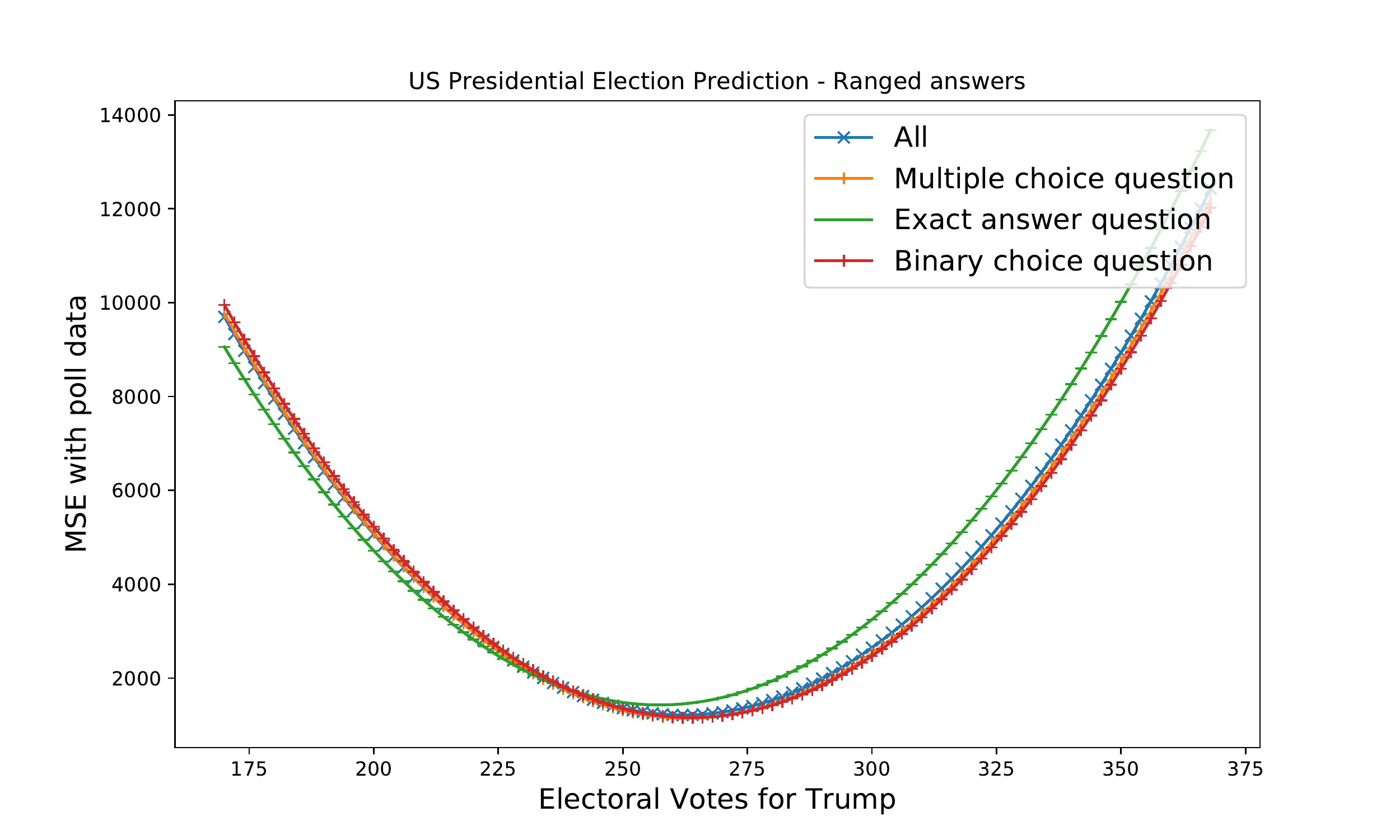}
    \caption{MSE-vs-possible ground truth value for prediction of new  Covid-19 cases in US on Nov 30 using range answers.}
    \label{fig:president-range}
\end{figure}

\end{document}